\newif\ifgraphics\graphicsfalse
\tikzset{>=latex}
	\pgfmathsetmacro\lambdatwocrit{lambdatwocrit(#2)}%
\pgfplotsset{compat=newest}
\newlength{\myeqwidth}
\DeclareMathOperator\tr{Tr}
\definecolor{purple}{rgb}{.5,0,1}
\definecolor{orange}{rgb}{1,.5,0}
\definecolor{pink}{rgb}{1,0,.5}
\definecolor{green}{rgb}{0,.5,0}
\definecolor{gold}{rgb}{1,.623,0}
\definecolor{kardinal}{RGB}{196,30,58}
\definecolor{myblue}{RGB}{100,100,220}
\definecolor{myred}{RGB}{255,100,100}
\definecolor{mygreen}{RGB}{119,221,119}
\definecolor{myorange}{RGB}{255,200,100}
\newcommand{\bbC}{{\mathbb{C}}}
\newcommand{\bbD}{{\mathbb{D}}}
\newcommand{\bbL}{{\mathbb{L}}}
\newcommand{\bbM}{{\mathbb{M}}}
\newcommand{\bbN}{{\mathbb{N}}}
\newcommand{\bbR}{{\mathbb{R}}}
\newcommand{\bbS}{{\mathbb{S}}}
\newcommand{\bbT}{{\mathbb{T}}}
\newcommand{\bbZ}{{\mathbb{Z}}}
\newcommand{\calL}{{\mathcal{L}}}
\newcommand{\calM}{{\mathcal{M}}}
\newcommand{\calO}{{\mathcal{O}}}
\newcommand{\calP}{{\mathcal{P}}}
\newcommand{\calT}{{\mathcal{T}}}
\newcommand{\calPT}{{\mathcal{PT}}}
\newcommand{\SL}{{\bbS\bbL}}
\newtheorem{theorem}{Theorem}
\newtheorem{lemma}[theorem]{Lemma}
\newtheorem{prop}[theorem]{Proposition}
\newtheorem{coro}[theorem]{Corollary}
\newtheorem{example}[theorem]{Example}
\newtheorem{remark}[theorem]{Remark}
\newcommand{\idty}{\mathbbm 1}
\newcommand{\aubrydual}{{\sharp}}
\newcommand{\realified}{{\mathbb{R}}}
\newcommand{\adj}{\dagger}
\def\wind{\nu}
\newcommand\eps{\varepsilon}
\newcommand\dos{\zeta} %density of states
\newcommand\nr{n}
\newcommand\dr{m}
\newcommand\Psym{{\calP}}
\newcommand\Tsym{{\calT}}
\newcommand\PT{{\calPT}}
\def\@printtitle{%
	{%
		\iftoggle{@unpublished}%
		{%
			\@printtitletextwithappropriatefontsize%
		}%
		{%	
			\saveexploremode\exploregroups\StrSubstitute{\@title}{\newline}{}[\@titletemp]\restoreexploremode%
			\xdef\titleplain{\@titletemp}%	this allows us to use \@titletemp to set the pdftitle later
			\edef\@titleexpanded{\detokenize\expandafter{\@titletemp}}%
			\iftoggle{@xstring}%
			{\saveexploremode\exploregroups\StrSubstitute{\@titleexpanded}{ }{\%20}[\@titleforurl]\restoreexploremode}%
			{\gdef\@titleforurl{\@titleexpanded}}%
			\href{https://quantum-journal.org/?s=\@titleforurl\&reason=title-click}{%
				\color{quantumviolet}{%
					\@printtitletextwithappropriatefontsize\unskip%
				}%
			}%
		}%
	}%
}
		\savebox\@quantumacceptedbox{\textcolor{quantumviolet}{ \sffamily\footnotesize Accepted in {\normalsize\Quantum}\ifcsdef{@accepteddate}{ \@accepteddate}{}, click title to verify. Published under CC BY-NC-ND 4.0.}}}
\begin{document}

\title{Mobility edges in pseudo-unitary \newline quasiperiodic quantum walks}

\author{Christopher Cedzich}
\affiliation{Heinrich Heine Universit\"at D\"usseldorf, Universit\"atsstr. 1, 40225 D\"usseldorf, Germany}
%\email{cedzich@hhu.de}
\address{Fakult\"at f\"ur Mathematik und Informatik, FernUniversit\"at in Hagen, Universit\"atsstr. 1, 58097 Hagen, Germany}
\email{christopher.cedzich@fernuni-hagen.de}
\author{Jake Fillman}
\affiliation{Department of Mathematics, Texas A\&M University, College Station, TX, USA}
\email{fillman@tamu.edu}

\keywords{Quantum walks, CMV matrices, PT-symmetry, pseudo-unitarity, mobility edges, topological phase transition, Hatano-Nelson, quasiperiodic lattice system, discrete time, quantum simulator}

\begin{abstract}
We introduce a Floquet quasicrystal that simulates the motion of Bloch electrons in a homogeneous magnetic field in discrete time steps. 
We admit the hopping to be non-reciprocal which, via a generalized Aubry duality, leads us to push the phase that parametrizes the synthetic dimension off of the real axis.
This breaks unitarity, but we show that the model is still ``pseudo-unitary''. 
We unveil a novel mobility edge between a metallic and an insulating phase that sharply divides the parameter space.
Moreover, for the first time, we observe a second transition that appears to be unique to the discrete-time setting. 
We quantify both phase transitions and relate them to properties of the spectrum.
If the hopping is reciprocal either in the lattice direction or the synthetic dimension, the model is $\PT$-symmetric, and the spectrum is confined to the unit circle up to a critical point. At this critical point, $\PT$-symmetry is spontaneously broken and the spectrum leaves the unit circle. This transition is topological and measured by a spectral winding number.
\end{abstract}

\maketitle

% metadata for hyperref - it has to be set after \maketitle since this is where the classes hypersetup is called
\hypersetup{
	pdftitle = {\titleplain}
}

%!TEX root = PUAMO-quantum_03_final.tex

\section{Introduction}
In introductory quantum mechanics, one is taught that Hamiltonians are hermitian, wherefore their spectra are real, and the converse is tacitly implied. 
This is challenged by theories that model measurable physical phenomena but defy a hermitian description while maintaining a real spectrum \cite{PhysRev.115.1390, BROWER1978213, CHARMS1980291, PhysRevLett.40.1610, PhysRevLett.54.1354, CARDY1989275, cmp/1103908052, hatanoLocalizationTransitionsNonHermitian1996, PhysRevB.56.8651}.
Vice versa, one can construct non-hermitian Hamiltonians through extending parameter ranges or modifying the potential while preserving reality of the spectrum \cite{hatanoLocalizationTransitionsNonHermitian1996, hatanoNonHermitianDelocalizationEigenfunctions1998, longhiBlochOscillationsComplex2009, sarnakSpectralBehaviorQuasi1982}.
In a first attempt to resolve this ostensible paradox, it was realized that such Hamiltonians often posses a ``parity-time'' ($\PT$) symmetry \cite{benderRealSpectraNonHermitian1998, benderPTsymmetricQuantumMechanics1999, benderIntroductionPTSymmetricQuantum2005}.
Soon after, it became clear that $\PT$-symmetry is a special case of the more general pseudo-hermiticity \cite{zhangPTsymmetryEntailsPseudoHermiticity2020, mostafazadehPseudoHermiticityPTSymmetry2002, mostafazadehPseudohermitianRepresentationQuantum2010}, and 
that positivity of the corresponding pseudo-metric guarantees the main physical requirement on the time-evolution, namely the preservation of probabilities 
\cite{mostafazadehPseudoHermiticityPTSymmetry2002,mostafazadehPseudoHermiticityPTsymmetryII2002,mostafazadehPseudohermitianRepresentationQuantum2010}.

Since its discovery, pseudo-hermitian and $\PT$-symmetric quantum mechanics have remained highly active areas of research.
Recent work has revealed even more rich phenomena, such as the non-Hermitian skin effect \cite{linTopologicalPhaseTransitions2022, yaoEdgeStatesTopological2018b, yaoNonHermitianChernBands2018, liuLocalizationTransitionSpectrum2021a}, non-Hermitian edge burst \cite{xiaoObservationNonHermitianEdge2023, Xue_2022}, novel topological phases \cite{bernardClassificationNonHermitianRandom2002, gongTopologicalPhasesNonHermitian2018, zhouPeriodicTableTopological2019,xiaoNonHermitianBulkBoundary2020}, and non-hermitian quasicrystals  \cite{liuLocalizationTransitionSpectrum2021a,wangSpectrumHatanoNelsonModel2023, yucePTSymmetricAubry2014,longhiBlochOscillationsComplex2009,longhiNonBlochCalCal2019,longhiTopologicalPhaseTransition2019}.

This motivates an investigation of similar phenomena for other physical systems. 
A rich class of models is furnished by quantum walks, which recently attracted much attention as a computational resource \cite{Ambainis, Lovett:2010ff, portugal2013quantum,Shenvi:2003be,santha_QW_search} as well as an excellent testbed for discrete-time quantum lattice dynamics \cite{ASW2011JMP,AVWW2011JMP, SpacetimeRandom,Joye_Merkli, aschLowerBoundsLocalisation2019, joye_d_dim_loc,farrellyReviewQuantumCellular2020, arrighiOverviewQuantumCellular2019, schumacherReversibleQuantumCellular2004} and topological phases \cite{WeAreSchur, F2W,Ti, TopClass, asbothBulkBoundaryCorrespondence2013, KitaExploring, asch2019engineeringstablequantumcurrents}. 
Moreover, quantum walks have been demonstrated experimentally in optical lattices \cite{OpticalLattice1, OpticalLattice2, ElektricExp}, trapped ions \cite{TrappedIons1,TrappedIons2}, optical fibers \cite{OpticalFibers1,OpticalFibers2}, photonic circuits \cite{PhotonicCircuits1} and photonic waveguide arrays \cite{WaveGuide1,WaveGuide2,WaveGuide3}.

Also in the $\PT$-symmetric regime, quantum walks have already been studied theoretically \cite{mochizukiExplicitDefinitionPT2016,mochizukiBulkedgeCorrespondenceNonunitary2020,asaharaIndexTheoremOnedimensional2021a,longhiNonBlochCalCal2019,hatanoDelocalizationNonHermitianQuantum2021} and successfully implemented experimentally \cite{xiaoNonHermitianBulkBoundary2020,regensburgerParityTimeSynthetic2012a,regensburgerObservationDefectStates2013,wimmerObservationOpticalSolitons2015}. In particular, the experimental demonstration of topologically protected localization-delocalization transitions in \cite{linTopologicalPhaseTransitions2022} and the triple phase transition in \cite{weidemannTopologicalTriplePhase2022} demonstrates that $\PT$-symmetric quasiperiodic operators are implementable in the lab. However, it shall be noted that most of these implementations retract to a detour via the effective Hamiltonian to study the manifestations of $\PT$-symmetry instead of establishing the presence of $\PT$-symmetry directly for the time-evolution operator itself.

In this work, we obtain a new class of physically realizable pseudo-unitary quantum walks. Concretely, we study the \emph{unitary almost-Mathieu operator} (UAMO) introduced in \cite{CFO1} as a one-dimensional quantum walk model that derives from a two-dimensional quantum walk in a magnetic field, with a rich phase diagram exhibiting a metal-insulator transition with an intermediate critical regime of exotic singular continuous Cantor spectrum. 

We incorporate a non-reciprocal hopping $\eta$ that functions as a gain-loss parameter in the lattice direction. Applying Aubry duality in this regime leads us to introduce a complex phase $\eps$, which is naturally interpreted as a gain-loss parameter in the synthetic direction.
We analyze the phase diagram in $\eta$ and $\eps$ of this pseudo-unitary almost Mathieu operator (PUAMO) and observe metal-insulator transitions
that are characterized in terms of complexified Lyapunov exponents. We observe a stability of the spectrum in $\eta$ and $\eps$ separately below a certain critical value that we quantify as the Lyapunov exponent of the Aubry dual model. 
Along the principal axes of the $(\eta,\eps)$-plane, where the model is $\PT$-symmetric, this implies that the spectrum lies completely on the unit circle below the critical value. At the critical value, $\PT$-symmetry is spontaneously broken and some of the eigenvalues leave the unit circle. 

Unique to the discrete time setup, we unveil the existence of a novel second phase transition. 
The corresponding critical values $\eta_0$ and $\eps_0$ depend solely on the coupling in lattice direction and synthetic dimension, respectively. At these values, either the system decouples or the non-reciprocal gain of the shift trumps any localization capacity. Beyond the critical values, there are no eigenvalues left on the unit circle.
Moreover, $\eta_0$ and $\eps_0$ establish maximum values for the mobility edges: whenever the complexified Lyapunov exponents surpass $\eta_0$ or $\eps_0$, the metal-insulator transition takes place at these critical values and constancy of the spectrum holds up to there.
We observe another striking feature: 
$\eta$ and $\eps$ break duality in the sense that for some choices of these parameters, the phase of the PUAMO does not change upon dualization. Concretely, there are choices of $\eta$ and $\eps$ for which both the PUAMO as well as its Aubry-dual exhibit transport. This is a direct consequence of the second phase transition and therefore does not occur in continuous time.

%!TEX root = PUAMO-quantum_03_final.tex

\section{Setting}
\label{sec:model}
We consider particles on the one-dimensional lattice $\bbZ$ with two-dimensional cells $\bbC^2$ attached at each site that evolve under the repeated action of the split-step quantum walk $W=S_\lambda Q$. The state-dependent shift is defined by $
\big(S_\lambda\psi)_{n}^{\pm}
=\lambda \psi_{n\mp 1}^{\pm} \mp\lambda'\psi_{n}^{\mp}
$,
where $\lambda'=\sqrt{1-\lambda^2}$, $n\in\bbZ$ labels the position and $\pm$ the index of the local degree of freedom. 
The coin operation $Q$ acts locally via a matrix $Q_n\in \mathrm{SU}(2)$ which we parametrize as
\begin{equation*}
	Q_n=\begin{bmatrix}q_n^{11}&q_n^{12}\\q_n^{21}&q_n^{22}\end{bmatrix}
\end{equation*}
see Figure \ref{fig:walk_action}.

\begin{figure}[t]
	\begin{center}
		\begin{tikzpicture}[baseline={($ (current bounding box.center) + (0,.05) $)},scale=.8]
	\def\xdist{1.5}
	\def\ydist{.7}
	\tikzstyle{dot} =[circle,fill,inner sep = 1.8]
	\tikzstyle{rects} = [rectangle, rounded corners=6, very thick, draw=red, fill=red, fill opacity=.2,inner sep=1.8mm, label={[above=.3cm]:#1}]
	\tikzstyle{rectstriv} = [rectangle, rounded corners=6, very thick, draw=gray!50, dashed, inner sep=1.8mm, label={[above=.4cm]:#1}]
	\tikzstyle{arr} = [->,thick,black,>=latex]
	\tikzstyle{l1} = [midway,auto,font=\footnotesize]
	\foreach \x [evaluate] in {0,1,2}{
		\foreach \y in {1,2}{		
			\node[dot] (\x\y) at (\xdist*\x,+\ydist-\ydist*\y) {};
		}
	}
	\node[rects={$Q_{n-1}$},fit=(01) (02)] {};
	\node[rects={$Q_{n}$},fit=(11) (12)] {};
	\node[rects={$Q_{n+1}$},fit=(21) (22)] {};
	\draw[dotted,thick] (-1.2*\xdist,+.5*\ydist-\ydist) -- (-.8*\xdist,+.5*\ydist-\ydist);
	\draw[dotted,thick] (2.8*\xdist,+.5*\ydist-\ydist) -- (3.2*\xdist,+.5*\ydist-\ydist);
	\node[inner sep=0,minimum width=.15cm] (leftup) at (-\xdist,\ydist-1*\ydist) {};
	\node[inner sep=0,minimum width=.15cm] (rightup) at (3*\xdist,\ydist-1*\ydist) {};
	\node[inner sep=0,minimum width=.15cm] (leftdown) at (-\xdist,\ydist-2*\ydist) {};
	\node[inner sep=0,minimum width=.15cm] (rightdown) at (3*\xdist,\ydist-2*\ydist) {};
	\draw[arr] (leftup) to[out=30,in=150] (01);
	\draw[arr] (01) to[out=30,in=150] (11);
	\draw[arr] (11) to[out=30,in=150] node[l1] {$\lambda$} (21);
	\draw[arr] (21) to[out=30,in=150] (rightup);
	\draw[arr] (02) to[out=-150,in=-30] (leftdown);
	\draw[arr] (12) to[out=-150,in=-30] node[l1] {$\lambda$} (02);
	\draw[arr] (22) to[out=-150,in=-30] (12);
	\draw[arr] (rightdown) to[out=-150,in=-30] (22);
	
	\draw[arr] (12) to[out=140,in=-130] node[l1,left] {$-\lambda'$} (11);
	\draw[arr] (11) to[out=-40,in=50] node[l1,right] {$\lambda'$} (12);
	
	\foreach \x in {0,2}{
		\draw[arr] (\x2) to[out=140,in=-130] (\x1);
		\draw[arr] (\x1) to[out=-40,in=50] (\x2);
	}

\end{tikzpicture}
	\end{center}
	\caption{\label{fig:walk_action}The action of the split-step walk $W=S_\lambda Q$. The arrows indicate the action of the shift $S_{\lambda}$, where for the sake of clarity the parameters are displayed only at a single lattice site. In the red local ``cells'', the respective coin $Q_n$ is acting.}
\end{figure}

We take $Q_{n}=Q_{\lambda_2,\Phi,\theta, n}$ to be the quasiperiodic matrix
\begin{equation*}
	\resizebox{\columnwidth}{!}{$
		Q_n :=
		\begin{bmatrix}\lambda_2\cos(2\pi (n\Phi+\theta))+i\lambda_2'&-\lambda_2\sin(2\pi (n\Phi+\theta))\\\lambda_2\sin(2\pi (n\Phi+\theta))&\lambda_2\cos(2\pi 	(n\Phi+\theta))-i\lambda_2'\end{bmatrix},
		$}
\end{equation*}
where $\Phi \in [0,1]$ plays the role of a \emph{magnetic field} in an associated two-dimensional system and the \emph{phase} $\theta \in [0,1]$ is the Fourier parameter of a synthetic dimension \cite{CFGW2020LMP, CFO1}.
We follow \cite{CFO1} in calling the overall time step
\begin{equation*}
	W_{\lambda_1,\lambda_2,\Phi,\theta}=S_{\lambda_1}Q_{\lambda_2,\Phi,\theta}
\end{equation*}
the \emph{unitary almost-Mathieu operator} (UAMO) due to its close resemblance to the celebrated almost-Mathieu operator (AMO)\footnote{The AMO is also known as ``Aubry-André-'' or ``Harper-model''.} \cite{BelSim1982JFA,AJ2009Ann,Jitomirskaya1999Annals,papillon}. We often abbreviate the UAMO as $W_{\lambda_1,\lambda_2}$ when $\Phi$ and $\theta$ are given.
The parameters $\lambda_1,\lambda_2\in[0,1]$ are \emph{coupling constants}: 
$\lambda_1$ controls the strength of the shift $S_{\lambda_1}$ in lattice direction, whereas $\lambda_2$ parametrizes a shift in a synthetic dimension \cite{CFGW2020LMP, CFO1}: viewing $W_{\lambda_1 ,\lambda_2}$ as the implementation of a two-dimensional quantum walk in a magnetic field, we can understand multiplication by $e^{2\pi i\theta}$ as a Fourier-transformed shift with Fourier variable $\theta$. The strength of this shift is controlled by $\lambda_2$. In the limit cases $\lambda_1 = 0$ and $\lambda_1 = 1$ the shift $S_{\lambda_1}$ either decouples the walk into a direct sum of blocks without any transport or becomes the standard state-dependent shift, respectively. On the other hand, for $\lambda_2 = 0$ the coin is a constant diagonal matrix, while for $\lambda_2 = 1$ it corresponds to a rotation matrix with position-dependent angle \cite{FOZ2017CMP}.

For $\Phi=\nr/\dr$ rational, the model is periodic and initial states propagate ballistically under repeated application of $W_{\lambda_1,\lambda_2}$ 
\cite{AVWW2011JMP,extail,damanikSpreadingEstimatesQuantum2016}. Moreover, the spectrum of $W_{\lambda_1,\lambda_2}$ consists of $2\dr$ bands with Bloch waves as eigenstates.

For the rest of the paper, we consider only irrational fields. In this case, the UAMO exhibits a metal-insulator transition in its phase diagram about the critical line $\lambda_1=\lambda_2$ \cite{CFO1}, see Figure \ref{fig:phase_diagram}. This can be intuitively understood: for $\lambda_1>\lambda_2$ the shift dominates and one observes ballistic transport, whereas for $\lambda_1<\lambda_2$ the coin dominates, and the walk localizes. In analogy with the AMO we call these regimes ``subcritical'' and ``supercritical'', respectively. In the ``critical'' regime $\lambda_1=\lambda_2$, the system displays anomalous transport \cite{CFO1}. 

\def\circlerad{.8ex}
\def\circleshift{-.7ex}
\newcommand\Square[1]{+(-#1,-#1) rectangle +(#1,#1)}
\def\colorsquare#1{\tikz[baseline=\circleshift]\draw[#1,fill=#1,rotate=45] (0,0) \Square{\circlerad};}

\begin{figure*}[t]
	\begin{center}
	\begin{minipage}{.25\textwidth}
		\begin{description}
			\item[\colorsquare{myblue}\ \ Subcritical] %$\lambda_1>\lambda_2$
			\item[\colorsquare{myred}\ \ Critical] %$\lambda_1=\lambda_	2$
			\item[\colorsquare{myorange}\ \ Supercritical] %$\lambda_1 < \lambda_2$
		\end{description}
	\end{minipage}
	\begin{minipage}{.25\textwidth}
		\begin{tikzpicture}
	[
	scale=1.7,
	font=\footnotesize
	]
	
	% label
	\node at (-.5,1.15) {$(a)$};
	
	% draw lines
	\draw[thick,black] (-.05,-.05) rectangle +(1.1,1.1);
	
	% ticks
	\foreach \i in {0,1}{
		\draw[align=left] (-.02,{\i}) -- (-.08,{\i});
		\draw[align=left] ({\i},-.02) -- ({\i},-.08);
	}
	
	% labels
	\draw (-.05,0) node[left, align=left] {$0$};
	\draw (-.05,1) node[left, align=left] {$1$};
	\draw (0,-.05) node[below, align=center] {$0$};
	\draw (1,-.05) node[below, align=center] {$1$};
	
	\draw (0,.5) node[left,align=right] {$\lambda_2$};
	\draw (.5,-.05) node[below,align=center] {$\lambda_1$};
	
	% content		
	\path[pattern={Lines[angle=-45, distance={3pt/sqrt(2)}, line width= 1.1pt]}, pattern color=myblue] (0,0) -- (1,1) -- (1,0) -- cycle;
	\path[preaction={fill=white}, pattern={Lines[angle=45, distance={3pt/sqrt(2)}, line width= 1.1pt]}, pattern color=myorange] (0,0) -- (1,1) -- (0,1) -- cycle;
	
	\draw[line width=1.2pt, myred] (0,0) -- (1,1);
	
\end{tikzpicture}
	\end{minipage}
	\begin{minipage}{.25\textwidth}
		\def\EtaPhaseDiagram{.1}
		\begin{tikzpicture}
	[
	scale=1.7,
	font=\footnotesize
	]
	
	% label
	\node at (-.5,1.15) {$(b)$};
	
	% draw lines
	\draw[thick,black] (-.05,-.05) rectangle +(1.1,1.1);
	
	% ticks
	\foreach \i in {0,1}{
		\draw[align=left] (-.02,{\i}) -- (-.08,{\i});
		\draw[align=left] ({\i},-.02) -- ({\i},-.08);
	}
	
	% labels
	\draw (-.05,0) node[left, align=left] {$0$};
	\draw (-.05,1) node[left, align=left] {$1$};
	\draw (0,-.05) node[below, align=center] {$0$};
	\draw (1,-.05) node[below, align=center] {$1$};
	
	\draw (0,.5) node[left,align=right] {$\lambda_2$};
	\draw (.5,-.05) node[below,align=center] {$\lambda_1$};
	
	% content		
	% create critical line
	\path[spath/save=critLine, samples=100, domain=0:1] plot (\x,{lambdatwo(\x,\EtaPhaseDiagram)});
	
	%fillings
	\path[pattern={Lines[angle=-45, distance={3pt/sqrt(2)}, line width= 1.1pt]}, pattern color=myblue] [spath/use= critLine] -- (1,0) -- cycle;
	\path[pattern={Lines[angle=45, distance={3pt/sqrt(2)}, line width= 1.1pt]}, pattern color=myorange] [spath/use= critLine] -- (0,1) -- cycle;
	
	% draw critical line
	\draw[line width=1.2pt, myred] [spath/use=critLine];
	
%	 legend
%				\def\legendwidth{1.75}
%				\def\legendhight{0.32}
%				\def\legendposx{1.2}
%				\def\legendposy{.5}
%				\def\circlerad{1ex}
%				\def\circleshift{-0.5ex}
%				\newcommand\Square[1]{+(-#1,-#1) rectangle +(#1,#1)}
%				\def\colorsquare#1{\tikz[baseline=\circleshift]\draw[#1,fill=#1,rotate=45] (0,0) \Square{\circlerad};}
%			
%			\node[anchor=west] at (\legendposx,\legendposy) {\begin{tabular}{cl}
%%													& spectrum	\\			
%							\colorsquare{myblue}	&	abs. cont.	\\
%							\colorsquare{myred}		&	sing. cont.	\\
%							\colorsquare{myorange}	&	pure point.
%							
%					\end{tabular}};
	
\end{tikzpicture}
	\end{minipage}
	\end{center}
	\caption{\label{fig:phase_diagram}(a) The (spectral) phase diagram of the unitary almost-Mathieu operator \cite{CFO1}. The system propagates ballistically in the subcritical regime $\lambda_1>\lambda_2$ whereas it displays Anderson localization in the Aubry-dual supercritical regime $\lambda_1<\lambda_2$. In the critical regime $\lambda_1=\lambda_2$, the system shows anomalous transport.\\
	(b) The phase diagram of the pseudo-unitary almost-Mathieu operator with fixed non-reciprocal hopping $\eta=0.1$. In this setting, the different phases are determined by $L_{\lambda_1,\lambda_2,0,\eta}$.}
\end{figure*}

This behaviour suggests that a suitable ratio between $\lambda_1$ and $\lambda_2$ is analogous to that of the coupling constant of the AMO. 
The precise analogy is fostered by 
\begin{equation*}
	\lambda_0=\frac{\lambda_2(1+\lambda_1')}{\lambda_1(1+\lambda_2')},
\end{equation*}
with $\lambda_0 \to 0$ as $\lambda_2\to 0$ and $\lambda_0 \to
\infty$ as $\lambda_1 \to 0$. Moreover, in the critical regime where
$\lambda_1 = \lambda_2$ one has $\lambda_0 = 1$.
Indeed, the Lyapunov exponent of the UAMO for $z\in\partial\bbD$ in the spectrum of $W_{\lambda_1,\lambda_2}$ is independent of $z$ and given by \cite{CFO1}
\begin{equation*}
	L_{\lambda_1,\lambda_2}(z)=\max\{0,\log\lambda_0\}.
\end{equation*}
It therefore vanishes in the subcritical and the critical regimes, and is positive in the supercritical regime. The Lyapunov exponent quantifies their decay in the sense that a typical eigenstate will behave like $|\psi_n^\pm| \sim e^{-L|n-n_0|}$ as $n \to \infty$; thus eigenstates are extended in the subcritical and critical regime, whereas in the supercritical regime they are localized.

Moreover, the sub- and the supercritical regimes are dual to each other via an Aubry transform. More precisely, $W_{\lambda_1,\lambda_2}$ is dual to
\begin{equation*} %\label{eq:WsharpDEF}
W_{\lambda_1,\lambda_2}^\aubrydual := W_{\lambda_2,\lambda_1}^\top,
\end{equation*}
where $\top$ denotes the transpose of an operator in the standard basis.
In particular, $W_{\lambda_1,\lambda_2}$ is self-dual in the critical regime. The physical interpretation is thus that for the dual operator, the roles of the coupling constants are interchanged, that is, for the dual model $W_{\lambda_1,\lambda_2}^\aubrydual$, $\lambda_2$ is the coupling constant in the lattice direction while $\lambda_1$ parametrizes a shift in the synthetic dimension.
The Lyapunov exponent $L^\aubrydual(z)$ of the dual model is given by
\begin{equation*}
	L^\aubrydual_{\lambda_1,\lambda_2}(z)=L_{\lambda_2,\lambda_1}(z)=\max\{0,-\log\lambda_0\}.
\end{equation*}
This turns out to be pivotal: 
the Lyapunov exponent of the dual model determines the points at which the slope of the original Lyapunov exponent changes \cite{geMultiplicativeJensenFormula2023}; see Figure \ref{fig:lyap_exp_UAMO}.

\section{Non-reciprocal hopping and complexified phase}
We now break the reciprocality of $S_\lambda$ by introducing a gain-loss parameter $\eta$ in the lattice direction,  viz.:
\begin{equation*}
	\big(S_{\lambda,\eta}\psi\big)_n^\pm
	= e^{\pm 2\pi \eta} \lambda \psi_{n \mp 1}^{\pm} \mp\lambda'\psi_{n}^{\mp}.
\end{equation*}
Depending on its sign, the role of $\eta$ is to damp transport in one direction and amplify it in the other.
Dualizing the walk with this shift, we obtain a walk with reciprocal shift but whose coin has a complexified quasiperiodic phase. This serves as a motivation to complexify the phase $\theta$ by letting
\begin{equation*}
	\theta\mapsto\vartheta:=\theta+i\eps,
\end{equation*}
and denoting the resulting coins by $Q_{\lambda_2,\eps}$.
We remark that the imaginary phase $\eps$ appears already in \cite{CFO1} as a technical means to compute the Lyapunov exponent $L_{\lambda_1,\lambda_2}(z)$ and distinguish the different phases of the UAMO. From the underlying two-dimensional model, it has the physical interpretation of a non-reciprocal hopping parameter in the synthetic dimension.

Incorporating these two novel parameters into the UAMO, we define the \emph{pseudo-UAMO} (PUAMO)
\begin{equation*}
	W_{\lambda_1,\lambda_2,\eta,\eps} = S_{\lambda_1,\eta}Q_{\lambda_2,\eps}.
\end{equation*}
By construction, the PUAMO possesses a \emph{generalized Aubry duality} between $W_{\lambda_1,\lambda_2,\eta,\eps}$ and
\begin{equation*}
	W_{\lambda_1,\lambda_2,\eta,\eps}^\aubrydual=W_{\lambda_2,\lambda_1,-\eps,-\eta}^\top
\end{equation*}
which {also} follows from incorporating $\eta$ and $\eps$ into the computations of \cite{CFO1}.
 
% setting the parameters for the plot
%subcrit
\def\loneSub{.5}
\def\ltwoSub{.25}
%crit
\def\loneCrit{.5}
\def\ltwoCrit{.5}
%supercrit
\def\loneSuper{.25}
\def\ltwoSuper{.5}
\begin{figure*}[t]
	\begin{minipage}[c][.18\textheight][t]{.87\textwidth}
		\tikz{\node at (0,0) {(a)};}
		\raggedright
		\vspace{-.2cm}
		\ifgraphics\def\xmax{.5}
\def\ymax{.5}

% phase diagram tikz
\def\legendposx{10*2*\xmax+.2}
\def\legendposy{0}
\def\circlerad{.8ex}
\def\circleshift{-.7ex}
\def\colorsquare#1{\tikz[baseline=\circleshift]\draw[#1,fill=#1,rotate=45] (0,0) \Square{\circlerad};}

\tikzset{
	pics/phase diagram UAMO eta eps/.style n args={3}{
		code={
						
			% draw lines
			\draw[thick,black,scale=#3] ($(-\xmax,-\ymax)+(-.05,-.05)$) rectangle ($(\xmax,\ymax)+(.05,.05)$);
			
			% plot title			
			\ifdim#1 pt>#2 pt%
				\draw[scale=#3] ($(0,\ymax)+(0,.1)$) node[font=\scriptsize, above, align=center] {$\lambda_1>\lambda_2$};
			\fi
			\ifdim#1 pt=#2 pt%
				\draw[scale=#3] ($(0,\ymax)+(0,.1)$) node[font=\scriptsize, above, align=center] {$\lambda_1=\lambda_2$};
			\fi
			\ifdim#1 pt<#2 pt%
				\draw[scale=#3] ($(0,\ymax)+(0,.1)$) node[font=\scriptsize, above, align=center] {$\lambda_1<\lambda_2$};
			\fi
			
			% ticks
			\draw[align=left,scale=#3] ($(-\xmax,0)+(-.08,0)$) -- node[font=\scriptsize, left, align=left] {$0$} ++(-.04,0);
			\draw[align=left,scale=#3] ($(0,-\ymax)+(0,-.08)$) -- node[font=\scriptsize, below, align=center] {$0$} ++(0,-.04);
			
			\foreach \i in {+}{
				\draw[align=left,scale=#3] ($(-\xmax,{\i1*epszero(#1)})+(-0.08,0)$) -- node[font=\scriptsize, left, align=left] {$\eta_0$} ++(-0.04,0);
				\draw[align=none,scale=#3] ($({\i1*epszero(#2)},-\ymax)+(0,-0.08)$) -- node[font=\scriptsize, below, align=center] {$\varepsilon_0$} ++(0,-0.04);
				
				\ifdim#1 pt<#2 pt % need to add "pt" here since \ifdim compares dimensions which must have a unit
					\draw[align=left,scale=#3] ($(-\xmax,{\i1*l0(#1,#2)/(2*pi)})+(-0.08,0)$) -- node[font=\scriptsize, left, align=left] {$\frac{L}{2\pi}$} ++(-0.04,0);
				\fi
				\ifdim#1 pt>#2 pt % need to add "pt" here since \ifdim compares dimensions which must have a unit
					\draw[align=none,scale=#3] ($({\i1*l0dual(#1,#2)/(2*pi)},-\ymax)+(0,-0.08)$) -- node[font=\scriptsize, below, align=center] {$\frac{L^\aubrydual}{2\pi}$} ++(0,-0.04);
				\fi
			}
			
			% labels

			\draw[scale=#3] ($(-\xmax,\ymax)+(-.1,0.05)$) node[font=\scriptsize,left,align=right] {$\eta$};
			\draw[scale=#3] ($(\xmax,-\ymax)+(0.1,-.1)$) node[font=\scriptsize,below,align=center] {$\eps$};
			
			% content
			
			% creating paths
			\path[spath/save= lyapEpsPlusMinus, domain=-\xmax:0, samples=50, scale= #3] plot (\x,{lyapEps(#1,#2,\x)});
			\path[spath/save= lyapEpsPlusPlus, domain=0:\xmax, samples=50, scale= #3] plot (\x,{lyapEps(#1,#2,\x)});
			\path[spath/save= lyapEpsMinusPlus, domain=0:\xmax, samples=50, scale= #3] plot (\x,{-lyapEps(#1,#2,\x)});	
			\path[spath/save= lyapEpsMinusMinus, domain=-\xmax:0, samples=50, scale= #3] plot (\x,{-lyapEps(#1,#2,\x)});
			
			% joining paths for simplicity
			\path[spath/save= lyapEpsPlus] [spath/use=lyapEpsPlusMinus] [spath/use={lyapEpsPlusPlus,weld}];
			\path[spath/save= lyapEpsMinus] [spath/use=lyapEpsMinusMinus] [spath/use={lyapEpsMinusPlus,weld}];
			
			% fillings			
			\path[preaction={fill=white}, pattern=north west lines, pattern color=myblue, scale= #3] [spath/use=lyapEpsPlus] -- (\xmax,\ymax) -- (-\xmax,\ymax) -- cycle;
			\path[preaction={fill=white}, pattern=north west lines, pattern color=mygreen, scale= #3] [spath/use=lyapEpsMinus] -- (\xmax,-\ymax) -- (-\xmax,-\ymax) -- cycle;
			\path[preaction={fill=white}, pattern=north east lines, pattern color=myorange, scale= #3] [spath/use=lyapEpsPlusMinus] [spath/use={lyapEpsMinusMinus,reverse,weld}];
			\path[preaction={fill=white}, pattern=north east lines, pattern color=myorange, scale= #3] [spath/use=lyapEpsPlusPlus] [spath/use={lyapEpsMinusPlus,reverse,weld}];
			
			% lines
			
			\draw[thick, kardinal, scale= #3] (-\xmax,0) -- (\xmax,0) (0,-\ymax) -- (0,\xmax);
			\draw[very thick, myred] [spath/use= lyapEpsPlus] [spath/use= lyapEpsMinus];
			
			\ifdim#1 pt<#2 pt % need to add "pt" here since \ifdim compares dimensions which must have a unit
			\draw[very thick, myred, dotted, scale= #3] (0,{lyapEps(#1,#2,0)}) -- (0,{-lyapEps(#1,#2,0)});
			%	\else
			\fi
		}
	}
}

\begin{tikzpicture}[scale=.95]
	\path (0,0) pic {phase diagram UAMO eta eps={\loneSub}{\ltwoSub}{2.3}};
	
	\path ($(7*\xmax+.5,0)$) pic {phase diagram UAMO eta eps={\loneCrit}{\ltwoCrit}{2.3}};
	
	\path ($(2*7*\xmax+1,0)$) pic {phase diagram UAMO eta eps={\loneSuper}{\ltwoSuper}{2.3}};
	\node[anchor=west,font=\footnotesize] at (\legendposx,\legendposy) {\begin{tabular}{c@{\hspace{.5em}}|@{\hspace{.5em}}ll}
											&	$L_{\eta,\eps}^{\mathrm{right}}$ & $L_{\eta,\eps}^{\mathrm{left}}$ \\\hline
					\colorsquare{myorange}	&	$>0$ & $>0$	\\
					\colorsquare{myblue}	&	$>0$ & $<0$	\\
					\colorsquare{mygreen}	&	$<0$ & $>0$
			\end{tabular}};
	
\end{tikzpicture}\fi
	\end{minipage}
	\begin{minipage}[c][.18\textheight][t]{.11\textwidth}
		\raggedright
		\tikz{\node at (0,0) {(b)};
				\node[anchor=north west] at (.1,-.3) {\includegraphics[height=.13\textheight]{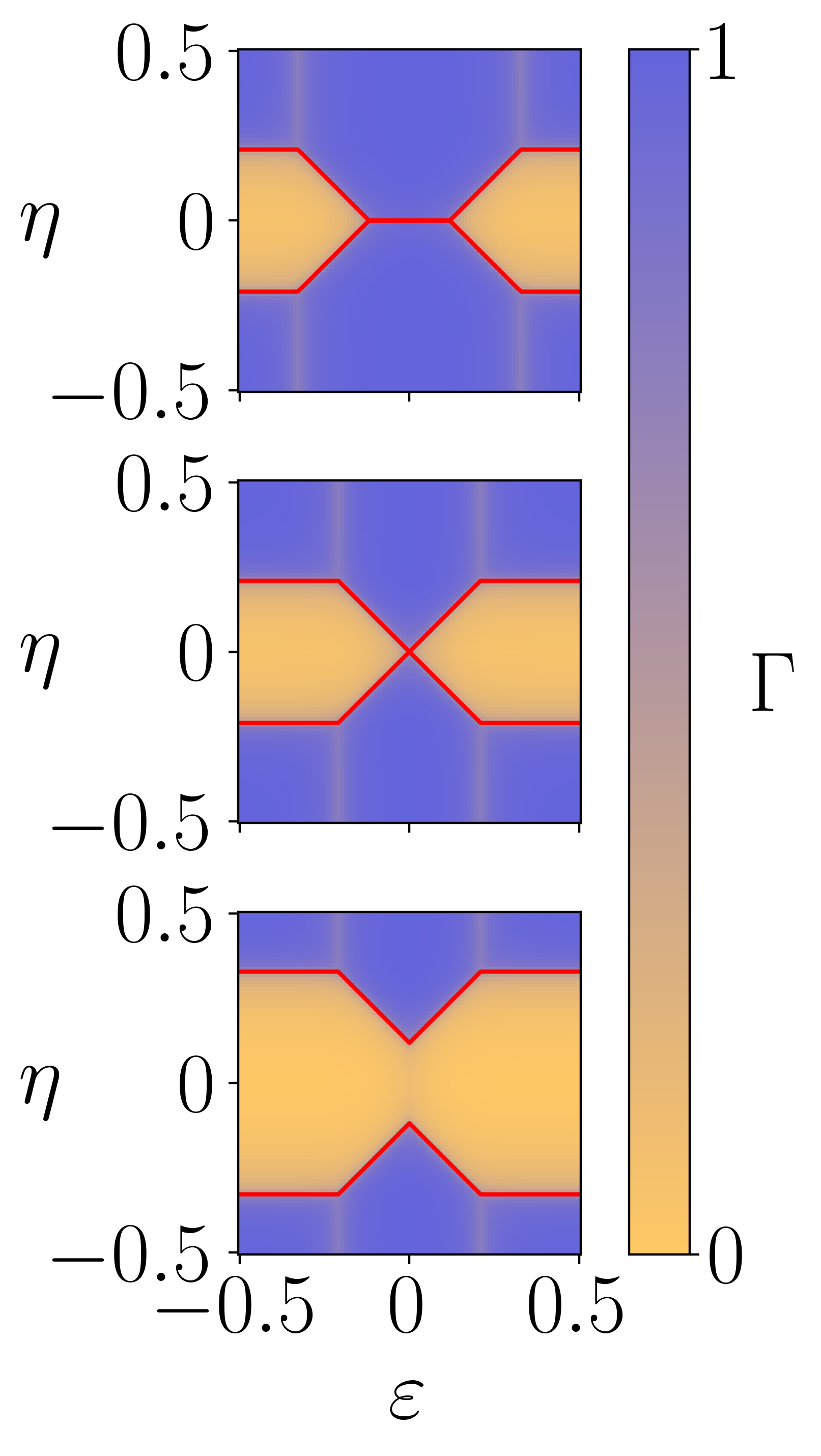}};
			}
	\end{minipage}
	\caption{\label{fig:eta_eps_phasediagram}
		(a) Phase diagrams of the PUAMO in $(\eps,\eta)$ with phases separated by $L_{\lambda_1,\lambda_2,\eta,\eps}^{\text{left/right}}=0$ (red): either there is transport to the right (blue) or the left (green), or the system displays Anderson localization (yellow). Only in the subcritical regime there is a parameter range ($\eta=0$, $0\leq|\eps|\leq L_{\lambda_1,\lambda_2,0,0}^\aubrydual/(2\pi)$) for which both the left and right Lyapunov exponents vanish and the PUAMO delocalizes in both directions. 
		Along the principal axes (cardinal red), the model is $\PT$-symmetric and hence below the critical values $\eps=L_{\lambda_1,\lambda_2,0,0}^\aubrydual/(2\pi)$ in the subcritical 
		and $\eta=L_{\lambda_1,\lambda_2,0,0}/(2\pi)$ in the supercritical phase 
		the spectrum is confined to the unit circle. 
		(b) Numerical evaluation for $(\lambda_1,\lambda_2)=(0.5,0.25)$, $(\lambda_1,\lambda_2)=(0.5,0.5)$ and $(\lambda_1,\lambda_2)=(0.25,0.5)$, respectively, with colors indicating the mean fractal dimension of the eigenstates.
	}
\end{figure*}

We will exploit heavily the duality between the non-reciprocal hopping and the complexified phase.
Since $W_{\lambda_1,\lambda_2,\eta,\varepsilon}$ and its transpose have the same spectrum, every statement about its spectrum implies the same statement for the spectrum of $W_{\lambda_2,\lambda_1,-\varepsilon,-\eta}$.

\section{Pseudo-unitarity and $\PT$-symmetry}
In the complexified regime, the PUAMO is not unitary anymore since $W_{\lambda_1,\lambda_2,\eta,\eps}^\adj\neq W_{\lambda_1,\lambda_2,\eta,\eps}^{-1}$. 
However, it is \emph{pseudo-unitary} \cite{mostafazadehPseudounitaryOperatorsPseudounitary2004}, whence the nomenclature: there exists a hermitian automorphism $\alpha$ such that $\alpha W_{\lambda_1,\lambda_2,\eta,\eps}^{-1}\alpha^{-1}=W_{\lambda_1,\lambda_2,\eta,\eps}^\adj$. In the current setting, this pseudo-unitarity is hidden: only after utilizing a gauge transformation \cite{CFO1, CFLOZ} we find that $W_{\lambda_1,\lambda_2,\eta,\eps}$ is pseudo-unitary for $\alpha=\Psym$, where $\Psym$ is the ``parity'' transformation that identifies the lattice points $n$ and $-n$ and acts locally as $\Psym_n=\sigma_2$, that is, $(\Psym\psi)_n=\sigma_2\psi_{-n}$; see Appendix ~\ref{sec:pt_sym} and Lemma \ref{lem:pseudo-unitarity} in particular. The pseudo-unitarity of $W_{\lambda_1,\lambda_2,\eta,\eps}$ has severe consequences: it implies that eigenvalues either lie on the unit circle, i.e., $|z|=1$, or come in pairs $(z,1/\overline z)$, see Corollary \ref{cor:spectrum_comes_in_pairs}.

For $\eta=0$ we unveil another hidden symmetry of $W_{\lambda_1,\lambda_2,0,\eps}$: the ``timeframe'' \cite{F2W,asbothBulkBoundaryCorrespondence2013}  $\widetilde W_{\lambda_1,\lambda_2,0,\eps}=Q_{\lambda_2,\eps}^{1/2}S_{\lambda_1} Q_{\lambda_2,\eps}^{1/2}$ satisfies 
\begin{equation*}
	(\mathcal{PT})\widetilde W_{\lambda_1,\lambda_2,0,\eps}(\mathcal{PT})^{-1}=\widetilde W_{\lambda_1,\lambda_2,0,\eps}^{-1},
\end{equation*}
where $\PT$ is the combined inversion of space and time that acts locally as $(\PT)_n=\sigma_3$, that is $(\PT\psi)_n=\sigma_3\overline{\psi_{-n}}$; see Lemma \ref{lem:PT_symmetry}. 
Since shifting the origin of time produces the same dynamics, this is an appropriate notion of $\PT$-symmetry that is flexible enough to accommodate our framework \cite{mochizukiExplicitDefinitionPT2016}. Note, however, that as soon as one admits symmetry operators whose local action depends on position, this change of timeframe becomes obsolete.

Employing Aubry duality, we conclude that the PUAMO is also $\PT$-symmetric whenever $\eps=0$.
However, when $\eta$ and $\eps$ are both non-zero, $\PT$-symmetry is broken since $(\PT)\widetilde W_{\lambda_1,\lambda_2,\eta,\eps}(\PT)^{-1}=\widetilde W_{\lambda_1,\lambda_2,-\eta,\eps}^{-1}$. 

\section{Lyapunov Exponents and sharp phase boundaries}
The PUAMO is a finite-difference operator, which allows us to 
recast the eigenvalue equation $W_{\lambda_1, \lambda_2, \eta, \eps}\psi = z\psi$ as

\begin{equation*}
	\begin{bmatrix}\psi_{n+1}^+\\\psi_{n}^-\end{bmatrix}
	= T_{n,z}\begin{bmatrix}\psi_{n}^+\\\psi_{n-1}^-\end{bmatrix},
\end{equation*}
where the \emph{transfer matrices} $T_{n,z}$ are given by (see Appendix \ref{sec:tmsAndLE})
\begin{widetext}
	\begin{equation*}
		\resizebox{\columnwidth-.5cm}{!}{$%
			\displaystyle{T_{n,z}=\frac{e^{2\pi\eta}}{\lambda_2\cos(2\pi(\Phi n+\vartheta))-i\lambda_2'}\begin{bmatrix}\lambda_1^{-1}z^{-1} +2\lambda_1'\lambda_1^{-1}\lambda_2\sin(2\pi(\Phi n+\vartheta)) + z{\lambda_1'}^2\lambda_1^{-1}&-e^{2\pi\eta}(\lambda_2\sin(2\pi(\Phi n+\vartheta))+\lambda_1' z)\\-e^{-2\pi\eta}(\lambda_2\sin(2\pi(\Phi n+\vartheta)) + \lambda_1' z)&\lambda_1 z\end{bmatrix}.}
			$}
	\end{equation*}
\end{widetext}

% setting the parameters for the plot
\def\lone{0.5}
\def\ltwo{0.25}
\def\etta{0.1}
\begin{figure}[t]
	\begin{center}
		\def\xmax{.5}
\def\ticklength{.02}

\begin{tikzpicture}[scale=4.5]
	
	%axes
	\draw[help lines,->] (-\xmax,0) -- (\xmax,0) node[black,right] {$\eps$};
	\draw[help lines,->] (0,-.1) -- (0,\xmax-.1) node[black,above] {$\frac1{2\pi}L_{\lambda_1,\lambda_2,\eta,\eps}$};
	
	%L(AMO)
	\draw[domain=-\xmax:\xmax, samples=200, black!50!white, dashed, thick] plot (\x,{(max(0,l0(\lone,\ltwo)+2*pi*abs(\x)))/2/pi});
	
	%L(A)
	\draw[domain=-\xmax:\xmax, samples=200, kardinal, thick] plot (\x,{(max(0,l0(\lone,\ltwo)+2*pi*abs(\x)-2*pi*max(0,abs(\x)-epszero(\ltwo)))/2/pi}) node[right] {$\frac1{2\pi}L_{\eta=0,\eps}$};
	
	%L(A_\eta)
	\draw[domain=-\xmax:\xmax, samples=200, color=blue, thick] plot (\x,{min(max(0,-2*pi*\etta+l0(\lone,\ltwo)+2*pi*abs(\x)-2*pi*max(0,abs(\x)-epszero(\ltwo)))/2/pi,max(0,2*pi*\etta+l0(\lone,\ltwo)+2*pi*abs(\x)-2*pi*max(0,abs(\x)-epszero(\ltwo)))/2/pi)}) node[right] {$\frac1{2\pi}L_{\eta,\eps}$};
	
	% axes ticks
	
	\pgfmathparse{int(ceil(l0(\ltwo,\lone)/2/pi+\etta))} % need to make it integer-value so \ifnum is happy
	\ifnum\pgfmathresult>0 %
	\draw	({l0(\ltwo,\lone)/2/pi+\etta},-\ticklength) node[below] {$\frac1{2\pi}L_\eta^\aubrydual$} -- ({l0(\ltwo,\lone)/2/pi+\etta},\ticklength)
	({-l0(\ltwo,\lone)/2/pi-\etta},-\ticklength) -- ({-l0(\ltwo,\lone)/2/pi-\etta},\ticklength);
	\else
	\fi
	
	\draw	({l0(\ltwo,\lone)/2/pi},-\ticklength) node[above=.2cm] {$\frac1{2\pi}L^\aubrydual$} -- ({l0(\ltwo,\lone)/2/pi},\ticklength)
	({-l0(\ltwo,\lone)/2/pi},-\ticklength) -- ({-l0(\ltwo,\lone)/2/pi},\ticklength);
	
	\draw[line width=.7, green, thick, dashed] ({epszero(\ltwo)},-\ticklength) node[below] {$\eps_0$} -- ({epszero(\ltwo)},{(l0(\lone,\ltwo)+2*pi*epszero(\ltwo))/2/pi})
	({-epszero(\ltwo)},-\ticklength) -- ({-epszero(\ltwo)},{(l0(\lone,\ltwo)+2*pi*epszero(\ltwo))/2/pi});
	
\end{tikzpicture}
	\end{center}
	\caption{\label{fig:lyap_exp_UAMO}
		Complexified Lyapunov exponent $L_{\lambda_1,\lambda_2,\eta,\eps}$ as a function of $\eps$ in the subcritical regime $(\lambda_1,\lambda_2)=(\lone,\ltwo)$ in the reciprocal setting $\eta=0$ (cardinal red) and for $\eta=\etta0$ (blue). It has two turning points at $L^\sharp/(2\pi)=L_{\lambda_1,\lambda_2,\eta,0}^\sharp/(2\pi)$ and at $\eps_0$. The gray dashed line symbolizes the Lyapunov exponent of the corresponding AMO in the reciprocal setting. It only has one turning point, and therefore no second phase transition.
	}
\end{figure}

The \emph{right Lyapunov exponent} $L^{\textrm{right}}(z) =L_{\lambda_1,\lambda_2,\eta,\eps}^{\textrm{right}}(z)$ measures the typical exponential growth (or decay) of eigenstates to the right:
\begin{equation*}
	L_{\lambda_1,\lambda_2,\eta,\eps}^{\textrm{right}}(z) = \lim_{n\to\infty} \frac{1}{n} \int_\bbT \log \|T_{n,z} \cdots T_{1,z}\| \, d\theta.
\end{equation*}
Similarly, we define $L_{\lambda_1,\lambda_2,\eta,\eps}^{\textrm{left}}(z)$ by iterating the transfer matrices to the left. We show in Appendix \ref{sec:tmsAndLE} that the non-reciprocality parameter $\eta$ contributes additively to the right and left Lyapunov exponents. Concretely, we show that
\begin{equation*}
	L_{\lambda_1,\lambda_2,\eta,\eps}^{\textrm{right}}(z)=2\pi\eta+L_{\lambda_1,\lambda_2,0,\eps}(z),
\end{equation*}
and that $L_{\lambda_1,\lambda_2,\eta,\eps}^{\textrm{left}}=L^{\textrm{right}}_{\lambda_1,\lambda_2,-\eta,\eps}$, see Proposition \ref{prop:Lyap_LR_eta}. In the reciprocal setting $\eta=0$, the left and right Lyapunov exponents therefore agree. We remark that this holds for any non-reciprocal split-step walk, that is, for any split-step walk $S_{\lambda,\eta}Q$ the non-reciprocality parameter contributes additively to the right and left Lyapunov exponents and the two are related by letting $\eta\mapsto-\eta$.

Moreover, for the reciprocal PUAMO, the Lyapunov exponent can be calculated explicitly \cite{CFO1}, see \eqref{eq:lyap_res}. Together with the contribution of $\eta$, we then obtain
\begin{equation*}
	\begin{split}
		L_{\lambda_1,\lambda_2,\eta,\eps}^{\textrm{right}}(z)
			=2\pi\eta +&\max\{0,  \log\lambda_0 
			\\
			& 
			+2\pi(|\eps|  -\max\{|\eps|-\eps_0,0\})\},
	\end{split}
\end{equation*}
where $\eps_0=\log\left[(1+\lambda_2')/\lambda_2\right]/(2\pi)$ marks the value of $\eps$ for which $T_{n,z}$ is not analytic in $\theta$, see also Section \ref{sec:second_phase_transition}. The overall inverse localization length is then
\begin{equation*}%\label{eq:lyap_non_rec}
	L_{\lambda_1,\lambda_2,\eta,\eps}(z) = \max\{0 , \min\{L_{\lambda_1,\lambda_2,\eta,\eps}^{\rm left}, L_{\lambda_1,\lambda_2,\eta,\eps}^{\rm right}\}\},
\end{equation*}
with dual $L_{\lambda_1,\lambda_2,\eta,\eps}^\aubrydual(z)=L_{\lambda_2,\lambda_1,\eps,\eta}(z)$.

The left and right Lyapunov exponents distinguish the phases in Figure \ref{fig:eta_eps_phasediagram}:
When positive, the eigenstates decay exponentially in the respective direction and the system localizes if both $L^{\textrm{left}},L^{\textrm{right}}>0$. When $\min\{L^{\textrm{left}}, L^{\textrm{right}}\}\leq0$, the eigenstates cease to decay and the system delocalizes.

\section{Constancy of the spectrum}
Transfer matrix techniques give access to ``global theory'' \cite{Avila2015Acta}: a major result of this theory is that, in the region of analyticity, the Lyapunov exponent is a continuous, convex and piecewise linear function of $\eps$ with slopes quantized by the integers.
A careful analysis of this function reveals subtle properties of the spectrum; see Appendix \ref{app:global_theory}.

\begin{figure*}[t]
	\includegraphics[width=.98\textwidth]{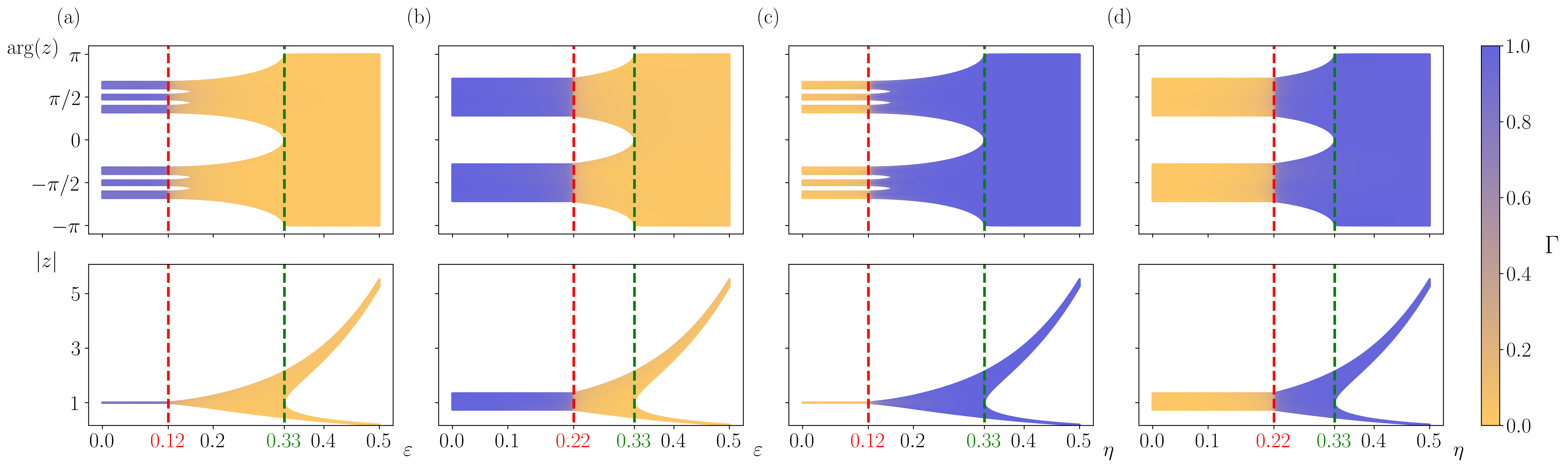}
	\caption{\label{fig:spectra_reci}The spectrum of the walk $W_{\lambda_1,\lambda_2,\eta,\eps}$ on a ring of size $N=610$ for $\Phi=(\sqrt{5}-1)/2$ and $\theta=0$ as a function of $\eps$ for $\eta=0,0.1/(2\pi)$ in the subcritical regime $0.5=\lambda_1>\lambda_2=0.25$ ((a) and (b)) and of $\eta$ for $\eps=0,0.1/(2\pi)$ in the supercritical regime $0.25=\lambda_1<\lambda_2=0.5$ ((c) and (d)). The color codifies the \emph{fractal dimension} $\Gamma$ which quantifies how extended or localized each eigenstate is \cite{thoulessElectronsDisorderedSystems1974}. As a rule of thumb, the more localized an eigenstate, the smaller its fractal dimension. The mobility edge (red) at $L_{\lambda_1,\lambda_2}^\aubrydual/(2\pi),L_{\lambda_1,\lambda_2}/(2\pi)\approx 0.1188$ and the spectral transition (green) at $\eta_0,\eps_0\approx 0.3284$ are clearly identifiable.
		Evidently, the subcritical case behaves in $\eps$ exactly as the supercritical case in $\eta$ (compare (a) with (c) and (b) with (d)), yet with opposite dynamical phase.
	}
\end{figure*}

For instance, in the subcritical case $\lambda_1>\lambda_2$, the graph of $L_{\lambda_1,\lambda_2,\eta =0,\eps}$ turns exactly at $\eps = \pm L_{\lambda_1,\lambda_2,\eta,0}^\aubrydual/(2\pi)$, see Figure \ref{fig:lyap_exp_UAMO}. From this, we infer that for $\eta=0$ the spectrum of $W_{\lambda_1,\lambda_2,0,\eps}$ is independent of $\eps$ in the regime $0 \le \eps\leq L_{\lambda_1,\lambda_2,0,0}^\aubrydual/(2\pi)$; see Figure~\ref{fig:spectra_reci} (a). Indeed, the complement of the spectrum consists of the $z$ for which $L_{\lambda_1,\lambda_2,\eta=0,\eps}(z)$ is positive and does not change slope at $\eps$ \cite{Avila2015Acta}, and this is the case exactly through the indicated $\eps$-region. This persists for fixed $\eta \neq 0$: the spectrum of $W_{\lambda_1,\lambda_2,\eta,\eps}$ is constant in the region 
\begin{equation*}
	0 \le |\eps| \le L_{\lambda_1,\lambda_2, \eta,0}^\aubrydual/(2\pi)
= L_{\lambda_1, \lambda_2,0,0}^\aubrydual/(2\pi) + |\eta|;
\end{equation*}
see Figure~\ref{fig:spectra_reci} (b). A similar discussion unveils constancy of the spectrum of the \emph{supercritical} model in the regime $0 \leq \eta \leq L_{\lambda_1, \lambda_2,0,\eps}/(2\pi)$; see Figure~\ref{fig:spectra_reci} (c) and (d).

\section{First Phase Transition}
Once $\eps$ increases through the threshold value $L_{\lambda_1,\lambda_2,\eta,0}^\aubrydual/(2\pi)$ where the slope of $L_{\lambda_1,\lambda_2,\eta,\eps}$ changes \cite{geMultiplicativeJensenFormula2023,CFO1}, states transition from  extended to localized. 
Similarly, increasing $\eta$ through the transition point $L_{\lambda_1,\lambda_2,0,\eps}/(2\pi)$ produces a transition from localization to delocalization in which the non-reciprocal hopping overpowers the localization effects of the coin.
These phase transitions may be absent: for example, in the subcritical case $\lambda_1>\lambda_2$ for $0\leq|\eps|\leq L_{\lambda_1, \lambda_2,0,\eps}^\aubrydual/(2\pi)$, delocalization holds since $L_{\lambda_1,\lambda_2,\eta,\eps} = 0$ for all $\eta$.

The delocalization in $\eta$ can be understood intuitively: we show in Appendix \ref{app:skin} that the PUAMO with $\eta\neq0$ is similar to that with $\eta=0$ via a ``skin transformation''. Therefore, if $\psi_n$ is an eigenstate of the reciprocal model, $\psi_n^{(\eta)}=e^{-2\pi\eta n}\psi_n$ is an eigenstate of the non-reciprocal model. If $\psi_n$ is localized with inverse localization length $\ell=L_{\lambda,\eta=0,\eps}(z)>0$, $\psi_n^{(\eta)}$ is localized and decays with inverse localization length $\ell-2\pi|\eta|$, as long as $2\pi|\eta|< \ell$.
As soon as $2\pi|\eta|\geq \ell$, $\psi_n^{(\eta)}$ delocalizes to one side  (while still decaying to the other). This is precisely what is measured by $L^{\mathrm{left}}$ and $L^{\mathrm{right}}$; see also Figure \ref{fig:eta_eps_phasediagram}.

Along the principal axes in Figure \ref{fig:eta_eps_phasediagram}, the $\PT$-symmetry is ``spontaneously broken'' at these critical values, that is, some of the spectrum moves off of the unit circle, see Figures \ref{fig:spectra_reci} and \ref{fig:UC}. In this case, these critical values demark a topological phase transition, which is measured by an abrupt jump of a winding number as discussed below.

\section{Second Phase Transition}\label{sec:second_phase_transition}
The transfer matrices $T_{n,z}$ are analytic in $\theta$ until $\eps$ reaches the value
\begin{equation*}
	\eps_0 = \frac{1}{2\pi} \sinh^{-1} \left[\frac{\lambda_2'}{\lambda_2}\right]=\frac{1}{2\pi}\log\left[\frac{1+\lambda_2'}{\lambda_2}\right].
\end{equation*}
At this value the walk partially decouples, that is, the diagonal entries of the coins $Q_{\lambda_2,\eps}$ come arbitrarily close to $0$ infinitely often. 

This is the key feature producing a second transition: Indeed, for $\eps$ with $L^\aubrydual_{\lambda_1,\lambda_2,\eta,0}/(2\pi) < \eps < \eps_0$, the spectrum contains some parts on and some parts off the unit circle. At $\eps=\eps_0$ these components merge and for $\eps > \eps_0$, the spectrum is completely off the unit circle, see Figures \ref{fig:spectra_reci} and \ref{fig:UC}. Employing Aubry duality we conclude similarly that the walk has some spectrum on and off the unit circle for $L_{\lambda_1,\lambda_2,0,\eps}/(2\pi)<\eta<\eta_0=\sinh^{-1} (\lambda_1'/\lambda_1)/(2\pi)$ and that all eigenvalues leave once $\eta>\eta_0$. 
We explain how to derive this using transfer matrix methods in Appendix~\ref{sec:tmsAndLE}. 
We emphasize that this does not occur in the Hamiltonian setting, where the transfer matrices are always analytic.
A direct consequence of the second transition is that it breaks duality in the sense that there are $(\eta,\eps)$ that do not change phase when dualizing: for fixed $\lambda_1,\lambda_2$, both the PUAMO and its dual exhibit transport whenever $\eta,\eps\geq\max\{\eta_0,\eps_0\}$.

\begin{figure}[t]
	\begin{center}
		\includegraphics[width=.45\textwidth]{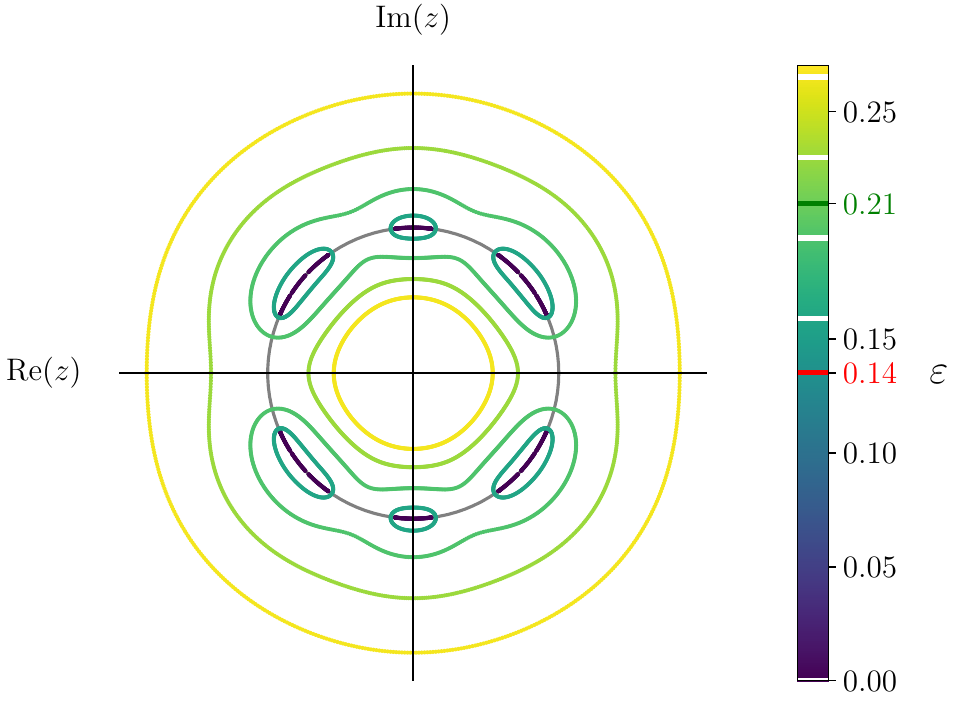}
	\end{center}
	\caption{\label{fig:UC}
		The spectra of $W_{\lambda_1,\lambda_2,\eta=0,\eps}$ for $(\lambda_1,\lambda_2)=(0.9,0.5)$ with $\eps$ values indicated by white markers. For $\eps>L^\aubrydual/(2\pi)\approx0.14$ the eigenvalues form ``bubbles'' around the bands of the unitary setting yielding a non-trivial winding number $\wind_\eps(z)$. At $\eps_0\approx0.21$ these bubbles merge to two ``circles'' within and without the unit circle.
	}
\end{figure}

\section{Winding numbers in the $\PT$-symmetric setting}
As discussed above, the $\PT$-symmetry of the PUAMO along the principal axes in Figure \ref{fig:eta_eps_phasediagram} is spontaneously broken at the first phase transition, and some spectrum moves off of the circle, see Figure \ref{fig:UC}. This transition is topological and measured by a spectral winding number.

To define this invariant, fix $\eta=0$ and let $z\in \partial \bbD$ lie in a gap of the spectrum of the UAMO which is a Cantor set \cite{CFO1,UAMO_Dry_Martinis}. Consider the continued fraction expansion of $\Phi$ with convergents $\Phi_k=\nr_k/\dr_k$. The corresponding PUAMO is $N:=\dr_k$-periodic, and we denote the resulting matrix with periodic boundary conditions by $W_N=W_N(\theta+i\eps)$.
We show in Appendix \ref{app:winding} that the winding number of $\theta\mapsto\sigma(W(\theta+i\eps))$ around $z$ defined by \cite{gongTopologicalPhasesNonHermitian2018,longhiTopologicalPhaseTransition2019}
\begin{equation*}\label{eq:wind_def}
	\wind_\eps(z)=\lim_{N\to\infty}\frac1N\frac1{2\pi i}\int_0^1 d\theta\:\partial_\theta\log\det(W_N(\theta+i\eps)-z\idty)
\end{equation*}
is quantized and takes only three values: if $|\eps|<L^\aubrydual/(2\pi)$ then $\wind_\eps(z)=0$ independent of $z$, if $L^\aubrydual(z)/(2\pi)\leq|\eps|\leq\eps_0$ then there exist $z\in\partial\bbD$ such that $|\wind_\eps(z)|=1$, and if $|\eps|\geq\eps_0$ then $|\wind_\eps(z)|=1$ for all $z\in\partial\bbD$. 

As long as $|\eps|<\eps_0$ the winding number agrees with the ``acceleration'' of $L_\eps(z)$, i.e., $\wind_\eps(z)=-\partial_\eps L_\eps(z)/(2\pi)$ \cite{geMultiplicativeJensenFormula2023}. However, this correspondence fails for $|\eps|\geq\eps_0$: $|\nu_\eps(z)|=1$ is insensitive to the second phase transition whereas the acceleration vanishes, see Figure \ref{fig:lyap_exp_UAMO}. Via Aubry duality, the same applies along the other principle axis in Figure \ref{fig:eta_eps_phasediagram} with the roles of $\eta$ and $\eps$ interchanged.
Similar transitions characterized by the winding number of an operator (or of a suitable effective Hamiltonian) have been observed in other contexts \cite{weidemannTopologicalTriplePhase2022, liuLocalizationTransitionSpectrum2021a}.

\section{Conclusion and Outlook}
We introduced a discrete-time simulator of non-hermitian extensions of the AMO and characterized its dynamics in the regime of non-reciprocal hopping and complex phase. We showed that this quantum walk possesses two phase transitions: one that signals a mobility edge between a metallic and an insulating phase and a second one that caps off the critical values for the mobility edge. This second transition is exclusive to the discrete-time setting. In the $\PT$-symmetric parameter ranges, the first phase transition is topological and indicates when the spectrum leaves the unit circle. Our methods are largely independent of the concrete model, wherefore we expect them to apply to other models of interest such as \cite{weidemannTopologicalTriplePhase2022}.

The PUAMO is within reach of current experimental methods: For example, the time-multiplexed photonic quantum walk setups of \cite{linTopologicalPhaseTransitions2022,weidemannTopologicalTriplePhase2022} promise to be directly adaptable for non-vanishing $\eta$ and $\eps=0$ \cite{LinXuePC}. 
To probe the dependence on $\eps$, a promising approach is to dualize: then $\eps$ plays the role of a non-reciprocal hopping, which is admissible to experimental manipulation. When this dual model is supercritical and therefore localized, from our analysis one expects a mobility edge at $\eps=L_{\lambda_1,\lambda_2}/(2\pi)$.

\paragraph{Note added in proof:} While this paper was under review, the PUAMO was experimentally implemented in a photonic setup using single photons \cite{ExpUAMO}. Already after a small number of timesteps, the different regimes are clearly distinguishable in the experimental data in the unitary as well as in the pseudo-unitary regime. Quantifying transport by the second moment of the time-evolved position operator, the whole phase diagram of Figure \ref{fig:phase_diagram} was experimentally mapped out and confirms the theoretical phase transition in the $(\lambda_1,\lambda_2)$-plane. Moreover, the first and second phase transitions were successfully probed by tuning the non-reciprocality parameter $\eta$ and exploiting Aubry duality.

\section*{Acknowledgements}
C.~Cedzich and J.~Fillman thank S. Longhi, A. Mostafazadeh and Q. Zhou for insightful discussions and explanations on $\PT$-symmetric systems.
C.~Cedzich thanks L. Bittel for helpful suggestions for the numerics. 
C.~Cedzich\ was supported in part by the Deutsche Forschungsgemeinschaft (DFG, German Research Foundation) under the grant number 441423094.
J.~Fillman\ was supported in part by National Science Foundation   grants DMS-2213196 and DMS-2513006
and Simons Foundation Grant MPS-TSM-00013720.
J.~Fillman also thanks the American Institute of Mathematics for hospitality during a recent SQuaRE program.

\bibliographystyle{quantum_mod}

\bibliography{PUAMO-quantum_03_final-bib}

\onecolumn\newpage
\setcounter{secnumdepth}{2}

\numberwithin{theorem}{section}
\numberwithin{equation}{section}

\appendix
	%!TEX root = PUAMO-quantum_02_point_1.tex

\section{Pseudo-unitarity and $\PT$-symmetry}\label{sec:pt_sym}

Whenever $\eta\neq0$ or $\eps\neq0$, the PUAMO $W_{\lambda_1,\lambda_2,\eta,\eps}$ is not unitary anymore: it no longer satisfies $W^{-1}=W^\adj$. However, it satisfies a generalized unitarity condition: let $\alpha$ be a linear, bounded, invertible, hermitian operator. Then an operator is called \emph{$\alpha$-pseudo-unitary}, if it is unitary ``up to $\alpha$'', i.e., if it satisfies
\begin{equation*}
	\alpha W^{-1}\alpha^{-1}=W^\adj.
\end{equation*}
In this case, $\alpha$ defines a possibly indefinite inner product $\langle\cdot,\cdot\rangle_\alpha:=\langle\cdot,\alpha\ \cdot\rangle$ that is invariant under $W$: with the above definition one easily verifies that an $\alpha$-pseudo-unitary operator satisfies
\begin{equation*}
	\langle W\psi,W\phi\rangle_\alpha=\langle \psi,\phi\rangle_\alpha.
\end{equation*}

In the following, we call any involutive unitary operator $\Psym$ on a Hilbert space with a lattice structure a \emph{parity transformation} if $\Psym$ identifies the lattice points $n$ and $-n$. A parity transformation thus acts on $\psi\in\ell^2(\bbZ)\otimes\bbC^2$ as 
\begin{equation}\label{eq:PsymAction}
	[\Psym\psi]_n
	=\Psym_n\psi_{-n},
\end{equation}
where $\Psym_n$ is a finite-dimensional unitary that implements the local action of $\Psym$. From $\Psym^2=\idty$ and the unitarity of $\Psym$ we find that $\Psym_{-n}=\Psym_{n}^{-1}=\Psym_{n}^\adj$. 
One can also shift the center of the parity transformation by considering instead $\psi_n\mapsto \Psym_n\psi_{-n+q}$, but for us it suffices to take $q=0$.
We call an operator $W$ ``parity symmetric'' if there is a parity transformation $\Psym$ such that $\Psym W\Psym^{-1}=W$.
\begin{example}
	The parity transformation on $\ell^2(\bbZ)\otimes\bbC^2$ characterized by $\Psym_n\equiv\sigma_1$ acts as
	\begin{equation*}
		[\Psym\psi]_n^\pm=\psi_{-n}^\mp.
	\end{equation*}
	One can show that for $\theta=1/4$ the (realified) UAMO $W_{\lambda_1,\lambda_2}$ is parity symmetric with respect to this parity transform, see \cite{CFLOZ} where this $\calP$ is called a ``reflection''.
\end{example}
With this at hand we can identify the $\alpha$ with respect to which the PUAMO is pseudo-unitary:

\begin{lemma}\label{lem:pseudo-unitarity}
	The PUAMO $W_{\lambda_1,\lambda_2,\eta,\eps}$ is $\Psym$-pseudo-unitary, where $\Psym$ is the parity transform that acts locally as $\Psym_n=\sigma_2$.
\end{lemma}
\begin{proof}
	Fixing $\theta = 1/4$ and using the (unitary) gauge transformation $D=\bigoplus_{n\in\bbZ}D_n$ from \cite{CFO1, CFLOZ}, we can consider the walk $W^\bbR = S_\lambda Q^\bbR$ with coins where we replace the diagonal entries with their moduli and leave the other entries unchanged, that is,
	\begin{equation}\label{eq:coin_realified}
		Q^\bbR_n = \begin{bmatrix} (1-\lambda_2^2\cos(2\pi(n\Phi+i\eps))^2)^{1/2}
			& - \lambda_2 \cos(2\pi(n\Phi+i\eps))  \\ 
			\lambda_2 \cos(2\pi(n\Phi+i\eps)) & (1-\lambda_2^2\cos(2\pi(n\Phi +i\eps))^2)^{1/2}\end{bmatrix}.
	\end{equation}				
	Denote by $\Psym$ the parity transformation that acts locally as $\Psym_n=\sigma_2$. Then, by direct computation one finds that $\Psym(Q^\bbR)^{-1}\Psym^{-1}=(Q^\bbR)^\adj$. Moreover, since $\Psym$ reverses the direction of the lattice shift, one verifies by direct computation that $\Psym S_{\lambda,\eta}^{-1}\Psym=S_{\lambda,\eta}^\adj$, which proves the lemma.				
\end{proof}
\begin{remark}
	The inner product induced by $\Psym$ is not positive definite: for any $n$ one has for $\psi_n:=(\delta_{-n}\otimes[i, 0]^\top+\delta_n\otimes[0, 1]^\top)/\sqrt2$ that
	\begin{align*}
		\langle\psi_n,\psi_n\rangle_\Psym 	&=\langle\psi_n,\Psym\psi_n\rangle %\\
		%													&=\frac1{\sqrt2}\langle\psi_n,\Psym\psi_n(\ket{-n}\otimes[i, 0]^\top+\ket n\otimes[0, 1]^\top)\rangle\\
		%													&=\frac1{\sqrt2}\langle\psi_n,(\ket{n}\otimes[0, -1]^\top+\ket{-n}\otimes[-i, 0]^\top)\rangle\\
		%													&=\frac1{2}\langle(-n\otimes[i, 0]^\top+n\otimes[0, 1]^\top),(n\otimes[0, -1]^\top+(-n\otimes[-i, 0]^\top)\rangle\\
		%													&=\frac1{2}\left(\langle-n\otimes[i, 0]^\top,n\otimes[0, -1]^\top\rangle+\langle-n\otimes[i, 0]^\top,-n\otimes[-i, 0]^\top\rangle+\langle n\otimes[0, 1]^\top,n\otimes[0, -1]^\top\rangle+\langle n\otimes[0, 1]^\top,-n\otimes[-i, 0]^\top\rangle\right)\\
		%													&=\frac1{2}\left(\langle-n\otimes[i, 0]^\top,-n\otimes[-i, 0]^\top\rangle+\langle n\otimes[0, 1]^\top,n\otimes[0, -1]^\top\rangle\right)\\
		%													&=\frac1{2}\left(\langle[i, 0]^\top,[-i, 0]^\top\rangle+\langle [0, 1]^\top,[0, -1]^\top\rangle\right)\\
		=-1.
	\end{align*}
\end{remark}
If $W_{\lambda_1,\lambda_2,\eta,\eps}$ would be pseudo-unitary with respect to a positive definite inner product throughout the whole parameter space, its spectrum would be confined to the unit circle.
Since this is not the case, one merely has that
\begin{coro}\label{cor:spectrum_comes_in_pairs}
	The spectrum of $W_{\lambda_1,\lambda_2,\eta,\eps}$ is either on the unit circle {\rm(}$|z|=1${\rm)} or comes in $(z,1/\overline{z})$ pairs.
\end{coro}
\begin{proof}
	Let $z\in\sigma(W_{\lambda_1,\lambda_2,\eta,\eps})=\sigma(W_{\lambda_1,\lambda_2,\eta,\eps}^\realified)$. Then $\overline z\in\sigma(W_{\lambda_1,\lambda_2,\eta,\eps}^\adj)$ so by the above lemma $\overline z\in\sigma(W_{\lambda_1,\lambda_2,\eta,\eps}^{-1})$ so $1/\overline z\in\sigma(W_{\lambda_1,\lambda_2,\eta,\eps})$.
\end{proof}

In the reciprocal case $\eta=0$, $W_{\lambda_1,\lambda_2,0,\eps}$ is additionally $\PT$-symmetric. Here, $\mathcal P$ refers to a unitary parity transformation as defined above, while the $\mathcal T$ is an anti-unitary ``time-reversal'' transformation that, as the name suggests, inverts the direction of time. 
It acts on $\psi\in\ell^2(\bbZ)\otimes\bbC^2$ as
\begin{equation}\label{eq:TsymAction}
	[\Tsym\psi]_n
	=\Tsym_n\mathcal K\psi_{n}
	=\Tsym_n\overline{\psi_{n}},
\end{equation}
where $\mathcal K$ denotes complex conjugation in the standard basis and $\Tsym^2=\pm\idty$. We call an operator $W$ ``time-reversal symmetric'' if there exists a time-reversal transformation $\Tsym$ such that $\Tsym W\Tsym^{-1}=W^{-1}$. The intuition behind this is that $W^{-1}$ implements the same dynamics as $W$ but in reversed time direction. 

Similarly, we call an anti-unitary and involutive transformation $\PT$ a ``parity-time transformation'' if it acts as a combined inversion of space and time, that is,
\begin{equation}\label{eq:PT_action}
	(\PT\psi)_n=(\PT)_n\overline{\psi_{-n}},
\end{equation}
where $(\PT)_n$ is the local action of the transformation. Moreover, we call a time-evolution operator $W$ ``parity-time-'' or ``$\PT$-symmetric'' if there exists a parity-time transformation $\PT$ such that
\begin{equation}\label{eq:pt_sym_def}
	(\PT)W(\PT)^{-1}=W^{-1}.
\end{equation}
Note that this definition is consistent with the usual definition of $\PT$-symmetric Hamiltonians, for which one demands that $(\PT)H(\PT)^{-1}=H$: if $W=\exp[iH]$, then $H$ is $\PT$-symmetric if and only if $W$ is.

\begin{remark}
	\mbox{}\\
	\begin{enumerate}
		\item	Of course, if $W$ is symmetric for some $X\in\{\Psym,\Tsym,\PT\}$, then $UWU^{-1}$ is symmetric for $XU^{-1}$ for any unitary $U$.
		\item	It is sometimes advantageous to shift the origin of time, or ``choose a different timeframe'' \cite{asbothBulkBoundaryCorrespondence2013,F2W}: This leads to the same dynamics (with different initial state), but can make it substantially easier to identify a symmetry.
				For example, in a situation where $W=AB$ and $A$ depends on position while $B$ does not, one cannot find a position-independent time-reversal operator since taking the inverse in the definition of time-reversal symmetry swaps the order of $A$ and $B$. To find a position-independent time-reversal symmetry one might instead consider the timeframe $\widetilde W=A^{\frac12}BA^{\frac12}$.
	\end{enumerate}
\end{remark}

\begin{lemma}\label{lem:PT_symmetry}
	For $\eta=0$ and $\theta=1/4$, the ``timeframe'' $\widetilde W_{\lambda_1,\lambda_2,0,\eps}=Q_{\lambda_2,\eps}^{1/2}S_{\lambda_1} Q_{\lambda_2,\eps}^{1/2}$ satisfies 
	\begin{equation*}
		(\mathcal{PT})\widetilde W_{\lambda_1,\lambda_2,0,\eps}(\mathcal{PT})^{-1}=\widetilde W_{\lambda_1,\lambda_2,0,\eps}^{-1},
	\end{equation*}
	where $\PT$ is the parity-time transformation that acts locally as $(\PT)_n=\sigma_3$.
\end{lemma}
\begin{proof}
	Fixing $\eta=0$, $\theta = 1/4$ and using the (unitary) gauge transformation $D=\bigoplus_{n\in\bbZ}D_n$ from \cite{CFO1, CFLOZ}, we can consider the walk $W^\bbR = S_\lambda Q^\bbR$ with coins as in \eqref{eq:coin_realified}. 
	By a direct computation, since $S_\lambda$ is real and a parity-transform $n\mapsto-n$ reverses the direction of the lattice shift $n\mapsto n+1$ on $\ell^2(\bbZ)$ one verifies from the definition of $S_\lambda$ and \eqref{eq:PT_action} that
	\begin{equation}
		(\PT)S_{\lambda}(\PT)^{-1}=\left[\bigoplus_n(\PT)_n\sigma_3 \right]S_{\lambda}^{-1}\left[\bigoplus_n\sigma_3(\PT)_n^{-1}\right].
	\end{equation}
	Moreover, from the expression of the realified coin given in \eqref{eq:coin_realified} one has that
	\begin{equation}\label{eq:Q-ncc}
		\overline{Q_{-n}^\bbR}= \sigma_3 (Q_n^\bbR)^{-1}\sigma_3,
	\end{equation}
	and thus we obtain $(\PT)\widetilde W (\PT)^{-1}$ for the timeframe $\widetilde W = (Q^\bbR)^{\frac12}S_\lambda(Q^\bbR)^{\frac12}$ for $(\PT)_n = \sigma_3$. 
\end{proof}

Note that once we incorporate a non-reciprocality parameter $\eta>0$, the PUAMO is not $\PT$-symmetric anymore. Instead we have that
\begin{equation*}
	(\PT)\widetilde W_{\eta}(\PT)^{-1}=\widetilde W_{-\eta}^{-1}
\end{equation*}
which can be seen either by direct computation, or by first pulling out $\eta$ using the skin transformation of Appendix \ref{app:skin}, commuting it through $\PT$, applying $\PT$-symmetry for the reciprocal model and finally putting the non-reciprocality back in.

\begin{remark}\mbox{}\\
	\begin{enumerate}
		\item Proving $\PT$-symmetry for one particular $\theta$ is sufficient, since for $\Phi$ irrational the spectrum is independent of $\theta$ by minimality of $\theta\mapsto\theta+\Phi$ \cite{CFO1}.
		\item If one allows for symmetry operators that depend on position, one can prove $\PT$-symmetry directly for $W_{\lambda_1,\lambda_2,0,\eps}^\realified$ with local symmetry $(\PT)_n=(Q_n^\realified)^{-1}\sigma_3$.
		\item In the timeframe $\widetilde W = (Q^\bbR)^{\frac12}S_\lambda(Q^\bbR)^{\frac12}$ it is straightforward to see that $W_{\lambda_1,\lambda_2,\Phi,\theta}$ is also chiral symmetric, i.e., there is a local unitary $\Gamma=\bigoplus_\bbZ\Gamma_n$ such that $\Gamma\widetilde W\Gamma^{-1}=\widetilde W^{-1}$ for $\Gamma_n=\sigma_1$. See also \cite{WeAreSchur}. This also extends to the non-reciprocal case. Therefore, the spectrum of $W$ comes in pairs $(z,z^{-1})$.
		\item The action of complex conjugation $\mathcal K$ (in the standard basis) on $W^\realified$ amounts to $\Phi\mapsto-\Phi$. Since the spectra of $W^\realified_\Phi$ and $W^\realified_{-\Phi}$ are the same by abstract $C^*$-algebra arguments \cite{rieffelAlgebrasAssociatedIrrational1981}, the spectrum of the PUAMO must come in pairs $(z,\overline z)$.
	\end{enumerate}
\end{remark}

\section{A short course in global theory of analytic one-frequency cocycles}
\label{app:global_theory}

In our analysis we rely heavily on Avila's global theory of analytic one-frequency cocycles \cite{Avila2015Acta}. For completeness, let us therefore first recall its main results and then apply them to the (P)UAMO. For further background and references about dynamical cocycles and their relation to operator theory, see \cite{ESO1, ESO2, MarxJito2017ETDS}.

Let $A$ be a continuous function from $\bbT$ to $\bbM(2,\bbC)$, the $2\times2$ complex matrices, and let $\Phi\in\bbT$ be irrational. Then, we call the skew-product
\begin{equation*}
	(\Phi,A):\bbT\times\bbC^2\to\bbT\times\bbC^2,\qquad (\theta,v)\mapsto(\theta+\Phi,A(\theta)v)
\end{equation*}
a \emph{quasiperiodic cocycle} with iterates $(\Phi,A)^n=(n\Phi,A^n)$ where $A^n(\theta)=\prod_{k=n-1}^0A(\theta+k\Phi)$. The \emph{Lyapunov exponent} of the quasiperiodic cocycle $(\Phi,A)$ is defined as
\begin{equation}
	L(\Phi,A)=\lim_{n\to\infty}\frac1{n}\int_0^1 \log\left\|A^n(\theta) \right\| d\theta.
\end{equation}

If $A$ is analytic and admits an analytic extension to the strip $\{\theta+i\eps: |\eps|<\delta\}$, we define the complexified cocycle $A_\eps(\theta):=A(\theta+i\eps)$ and the associated Lyapunov exponent by $L(\Phi,A,\eps):=L(\Phi,A_\eps)$.
An important observation of \cite{Avila2015Acta} is that the function $\eps\mapsto L(\Phi,A,\eps)$ is continuous, convex and piecewise affine. Moreover, its \emph{acceleration} $\omega$ defined by
\begin{equation}\label{eq:acceleration_def}
	\omega(\Phi,A,\widetilde\eps)=\lim_{\eps\to0^+}\frac1{2\pi\eps}(L(\Phi,A,\widetilde\eps+\eps)-L(\Phi,A,\widetilde\eps))
\end{equation}
is quantized for all $|\widetilde\eps|<\delta$ \cite{JitoMarx2012CMP,JitoMarx2012CMPErr,Avila2015Acta}, that is,
\begin{equation*}
	\omega(\Phi,A,\widetilde\eps)\in\frac12\bbZ.
\end{equation*}
Moreover, if $A:\bbT\to\SL(2,\bbC)$, one even has $\omega(\Phi,A,\widetilde\eps)\in\bbZ$. An important virtue of global theory is that it facilitates the explicit computation of the Lyapunov exponent which is difficult in general; see \cite{CFO1} for the UAMO and \cite{CFLOZ} for a related model.
\medskip

Using these properties, one can classify the behaviour of quasiperiodic cocycles: An analytic $\SL(2,\bbC)$-cocycle $(\Phi,A)$ is called \emph{uniformly hyperbolic} if for some constants $c,\lambda>0$ one has
\begin{equation*}
	\|A^n(\theta)\|\geq ce^{\lambda|n|}
\end{equation*}
uniformly in $n$ and $\theta\in\bbT$.
If an analytic $\SL(2,\bbC)$-cocycle  can be conjugated to real matrices, one has the symmetry $L(\Phi,A_{\eps}) = L(\Phi,A_{-\eps})$. In this setting, if the cocycle is not uniformly hyperbolic, it is said to be
\begin{enumerate}[itemsep=0pt,label=(\arabic*)]
	\item \emph{supercritical} if $L(\Phi,A)>0$.
	\item \emph{subcritical} if there exists $\tilde\eps>0$ such that $L(\Phi,A_\eps)=0$ for all $|\eps|<\tilde\eps$.
	\item \emph{critical} if $L(\Phi,A)=0$ and $(\Phi,A)$ is not subcritical.
\end{enumerate}
These different possibilities are illustrated in Figure \ref{fig:global_theory}. The super-/sub-/criticality can also be characterized via the acceleration in the following way:
\begin{center}
	\begin{tabular}{c|cc}
		& $\omega(\Phi,A,0)=0$ & $\omega(\Phi,A,0)>0$ \\\hline
		$L_{\eps=0}=0$ & subcritical & critical \\
		$L_{\eps=0}>0$ & unif. hyperbolic & supercritical
	\end{tabular}
\end{center}

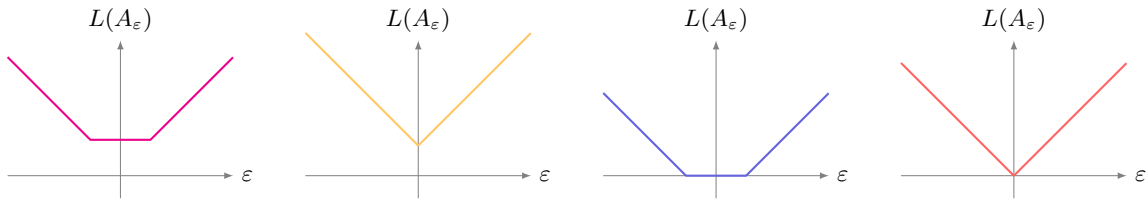
\begin{figure}[t]
	\begin{center}
		\begin{tabular}{cccc}
		\def\lone{.5}
		\def\ltwo{.1}
		\def\col{magenta}
		\begin{tikzpicture}[scale=1.2]
	
	%axes
	\draw[help lines,->] (-1,0) -- (1,0) node[black,right] {$\eps$};
	\draw[help lines,->] (0,-.2) -- (0,1.2) node[black,above] {$L(A_\eps)$};
	
	%L(B)
	\draw[domain=-1:1, samples=200, color=\col, thick] plot (\x,{(max(0,l0(\lone,\ltwo)+2*pi*abs(\x))+2)/2/pi});
		
\end{tikzpicture}
		&
		\def\col{myorange}
		\def\lone{.1}
		\def\ltwo{.5}
		\begin{tikzpicture}[scale=1.2]
		
	%axes
	\draw[help lines,->] (-1,0) -- (1,0) node[black,right] {$\eps$};
	\draw[help lines,->] (0,-.2) -- (0,1.2) node[black,above] {$L(A_\eps)$};
	
	%L(B)
	\draw[domain=-1:1, samples=200, color=\col, thick] plot (\x,{(max(0,l0(\lone,\ltwo)+2*pi*abs(\x)))/2/pi});
		
\end{tikzpicture}
		&
		\def\col{myblue}
		\def\lone{.5}
		\def\ltwo{.1}
		\begin{tikzpicture}[scale=1.2]
		
	%axes
	\draw[help lines,->] (-1,0) -- (1,0) node[black,right] {$\eps$};
	\draw[help lines,->] (0,-.2) -- (0,1.2) node[black,above] {$L(A_\eps)$};
	
	%L(B)
	\draw[domain=-1:1, samples=200, color=\col, thick] plot (\x,{(max(0,l0(\lone,\ltwo)+2*pi*abs(\x)))/2/pi});
		
\end{tikzpicture}		
		&
		\def\col{myred}
		\def\lone{.5}
		\def\ltwo{.5}
		\begin{tikzpicture}[scale=1.2]
		
	%axes
	\draw[help lines,->] (-1,0) -- (1,0) node[black,right] {$\eps$};
	\draw[help lines,->] (0,-.2) -- (0,1.2) node[black,above] {$L(A_\eps)$};
	
	%L(B)
	\draw[domain=-1:1, samples=200, color=\col, thick] plot (\x,{(max(0,l0(\lone,\ltwo)+2*pi*abs(\x)))/2/pi});
		
\end{tikzpicture}
	\end{tabular}
	\end{center}
	\caption{\label{fig:global_theory}Illustrations of graphs of the Lyapunov exponents $L(A_\eps)$ in the possible regimes for the corresponding cocycle $A_\eps$ as a function of $\eps$. From left to right, $A$ is uniformly hyperbolic, supercritical, subcritical, and critical, respectively. The color coding is consistent with the one in the phase diagram in Fig. \ref{fig:phase_diagram}.}
\end{figure}

In the case where $A_\eps$ is the transfer matrix cocycle of an underlying quasiperiodic operator, this cocycle classification helps in determining the spectrum: $z\in\bbC$ is an element of the spectrum if and only if $A_\eps(z)$ is not uniformly hyperbolic \cite{DFLY2016DCDS, Johnson1986JDE}.
Moreover, as we explain in the next section, when applied to the transfer matrix cocycle of the UAMO $W_{\lambda_1,\lambda_2}$ the above characterization of cocycles provides the nomenclature in the phase diagram in Figure \ref{fig:phase_diagram} which is crucial in the determination of the spectral type of $W_{\lambda_1,\lambda_2}$ as well as its dynamical properties in \cite{CFO1}.
	
\section{Transfer Matrices and Lyapunov Exponents}\label{sec:tmsAndLE}

\subsection{Transfer Matrices and Lyapunov Exponents for non-reciprocal split-step walks}
An important tool for studying the behaviour of eigensolutions on the one-dimensional lattice are transfer matrix techniques. In the setting of split-step quantum walks with coupling constant $\lambda$ and non-reciprocality parameter $\eta$ this is captured by the following:
\begin{prop} \label{prop:cocycle:tmMain}
	Suppose $\lambda \in (0,1]$ and consider the split-step quantum walk $W = S_{\lambda,\eta} Q$ with coupling constant $\lambda$, non-reciprocality parameter $\eta$ and local coins $Q_n=(q_n^{ij})$, $i,j\in\{1,2\}$. 
	\begin{enumerate}[label={\rm (\alph*)}]
		\item If $z \in \bbC \setminus \{0\}$, $W\psi = z\psi$, and $Q_n$ is not an off-diagonal matrix, then
		\begin{equation} \label{eq:tm:main1}
			\begin{bmatrix}\psi_{n+1}^+\\\psi_{n}^-\end{bmatrix}
			= T_{n,z}(\eta)\begin{bmatrix}\psi_{n}^+\\\psi_{n-1}^-\end{bmatrix},
		\end{equation}
		where $T_{n,z}(\eta)$ is given by 
		\begin{equation}\label{eq:Teta}
			T_{n,z}(\eta)=\frac{e^{2\pi\eta}}{q_n^{22}}\begin{bmatrix}\lambda^{-1}z^{-1} \det Q_n +\lambda'\lambda^{-1}(q_n^{21}-q_n^{12}) + z{\lambda'}^2\lambda^{-1}&e^{2\pi\eta}(q_n^{12}-\lambda' z)\\-e^{-2\pi\eta}(q_n^{21} + \lambda' z)&\lambda z\end{bmatrix}.
		\end{equation}
		\item If $z \in \bbC\setminus\{0\}$ and $Q_n$ is not off-diagonal,
		\begin{equation}\label{eq:tm:main2}
			\det T_{n,z}(\eta) = {e^{4\pi\eta}}\frac{q_n^{11}}{q_n^{22}}.
		\end{equation}			
		\item $T_{n,z}(\eta)$ and $T_{n,z}(\eta=0)$ are related by
		\begin{equation}\label{eq:Teta_T0}
			T_{n,z}(\eta)
			=e^{2\pi\eta}e^{\pi\eta\sigma_3}T_{n,z}(0)e^{-\pi\eta\sigma_3}
		\end{equation}
		where $\sigma_3$ denotes the third Pauli matrix.
	\end{enumerate}
\end{prop}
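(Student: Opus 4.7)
The approach is direct computation from the eigenvalue equation. Setting $\phi := Q\psi$, the equation $W\psi = z\psi$ becomes $S_{\lambda,\eta}\phi = z\psi$, which at each site splits into two scalar relations
\begin{align*}
e^{2\pi\eta}\lambda\phi_{n-1}^+ - \lambda'\phi_n^- &= z\psi_n^+, \\
e^{-2\pi\eta}\lambda\phi_{n+1}^- + \lambda'\phi_n^+ &= z\psi_n^-.
\end{align*}
The plan is to eliminate the auxiliary variables $\phi_{n\pm 1}^\mp$ so that $(\psi_{n+1}^+,\psi_n^-)$ is expressed as a linear function of $(\psi_n^+,\psi_{n-1}^-)$ involving only the local coin $Q_n$.

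To obtain the second row of \eqref{eq:Teta}, I combine the $+$-equation at site $n$ with the $-$-equation at site $n-1$ to eliminate $\phi_{n-1}^+$; multiplying the first by $\lambda'$, the second by $e^{2\pi\eta}\lambda$, subtracting, and using $\lambda^2 + (\lambda')^2 = 1$ yields $\phi_n^- = -\lambda' z \psi_n^+ + e^{2\pi\eta}\lambda z \psi_{n-1}^-$. Substituting $\phi_n^- = q_n^{21}\psi_n^+ + q_n^{22}\psi_n^-$ and dividing by $q_n^{22}$ (nonzero precisely when $Q_n$ is not off-diagonal, at least for $\mathrm{SU}(2)$-coins where $q_n^{22}=\overline{q_n^{11}}$) gives the second row of $T_{n,z}(\eta)$. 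For the first row, I combine the $-$-equation at site $n$ with the $+$-equation at site $n+1$ to eliminate $\phi_{n+1}^-$, giving $\psi_{n+1}^+ = \frac{e^{2\pi\eta}}{\lambda z}(\phi_n^+ - \lambda'z\psi_n^-)$. Expanding $\phi_n^+ = q_n^{11}\psi_n^+ + q_n^{12}\psi_n^-$ and substituting the formula for $\psi_n^-$ just derived, the coefficients collect into the first row of $T_{n,z}(\eta)$, the identity $q_n^{11}q_n^{22} = \det Q_n + q_n^{12}q_n^{21}$ being used to put the coefficient of $\psi_n^+$ into the displayed form.

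Parts (b) and (c) are then straightforward consequences of (a). For (b), I factor $e^{2\pi\eta}/q_n^{22}$ out of $T_{n,z}(\eta)$ and compute the remaining $2\times 2$ determinant directly; the $\lambda' z$ cross terms cancel, leaving $\det Q_n + q_n^{12}q_n^{21} = q_n^{11}q_n^{22}$, from which \eqref{eq:tm:main2} follows. For (c), note that $e^{\pi\eta\sigma_3} = \mathrm{diag}(e^{\pi\eta},e^{-\pi\eta})$, so conjugation by $e^{\pi\eta\sigma_3}$ leaves the diagonal entries of any $2\times 2$ matrix invariant and multiplies the $(1,2)$ and $(2,1)$ entries by $e^{2\pi\eta}$ and $e^{-2\pi\eta}$, respectively. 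Comparing \eqref{eq:Teta} entry-by-entry with its $\eta=0$ specialization confirms that the full $\eta$-dependence is captured by this conjugation together with the overall scalar prefactor $e^{2\pi\eta}$, yielding \eqref{eq:Teta_T0}.

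The main obstacle is the algebraic bookkeeping in (a): choosing the right pair of adjacent-site equations so that references to $Q_{n\pm 1}$ drop out, and carefully tracking the $e^{\pm 2\pi\eta}$ factors introduced by the non-reciprocal shift. A more conceptual route that sidesteps most of this bookkeeping is to observe that the diagonal gauge $(D\psi)_n^\pm := e^{2\pi\eta n}\psi_n^\pm$ intertwines $S_{\lambda,\eta}$ with $S_{\lambda,0}$ and, since $Q$ is local and hence commutes with $D$, also $W_{\lambda,\eta}$ with $W_{\lambda,0}$; transfer matrices transform covariantly under this conjugation, which produces \eqref{eq:Teta_T0} directly and reduces part (a) to the reciprocal case already established in \cite{CFO1}.
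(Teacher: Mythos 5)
Your proposal is correct and follows essentially the same route as the paper: writing the eigenvalue equation in coordinates (your $\phi=Q\psi$ bookkeeping is just a repackaging of the paper's Lemma~\ref{lem:w}), eliminating the neighboring-site quantities $\phi_{n-1}^+$ and $\phi_{n+1}^-$ by the same pairs of linear combinations, and reading off (b) and (c) directly from \eqref{eq:Teta}. Your closing remark that the diagonal gauge $e^{2\pi\eta n}$ intertwines the non-reciprocal and reciprocal walks and yields \eqref{eq:Teta_T0} covariantly is also sound and is precisely the ``skin transformation'' the paper records separately in Lemma~\ref{lem:skin_trafo}.
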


To show this, we first write $W_{\lambda,\eta}$ in coordinates.
\begin{lemma}\label{lem:w}
	Suppose $\lambda \in (0,1]$, $\eta\in\bbR$ and $W = S_{\lambda,\eta} Q$ is a split-step walk with coupling constant $\lambda$, non-reciprocality parameter $\eta$ and coins $\{Q_n\}_{n\in\bbZ}$.
	For each $n \in \bbZ$, we have
	\begin{align} \label{eq:lemW+}
		[W\psi]_n^+ & = \lambda e^{2\pi\eta}\left(q_{n-1}^{11}\psi_{n-1}^+ + q_{n-1}^{12}\psi_{n-1}^-\right)
		-\lambda' \left(q_{n}^{21}\psi_{n}^+ + q_{n}^{22}\psi_{n}^-\right),\\
		\label{eq:lemW-}
		[W\psi]_n^- & = \lambda e^{-2\pi\eta}\left(q_{n+1}^{21}\psi_{n+1}^+ + q_{n+1}^{22}\psi_{n+1}^-\right)
		+ \lambda' \left(q_{n}^{11}\psi_{n}^+ + q_{n}^{12}\psi_{n}^-\right).
	\end{align}
\end{lemma}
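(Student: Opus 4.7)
The plan is to compute $W\psi = S_{\lambda,\eta} Q\psi$ directly in components by applying the two factors in succession. First I would fix the convention that the matrix row/column index $1$ labels the $+$-component of the cell $\bbC^2$ and $2$ labels the $-$-component. Setting $\phi := Q\psi$, the locality of $Q$ then yields
\begin{equation*}
\phi_n^+ = q_n^{11}\psi_n^+ + q_n^{12}\psi_n^-, \qquad \phi_n^- = q_n^{21}\psi_n^+ + q_n^{22}\psi_n^-.
\end{equation*}

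Next I would substitute $\phi$ into the definition of the non-reciprocal shift $(S_{\lambda,\eta}\phi)_n^\pm = e^{\pm 2\pi\eta}\lambda\phi_{n\mp 1}^\pm \mp \lambda'\phi_n^\mp$. For $[W\psi]_n^+$ this picks up $e^{2\pi\eta}\lambda\,\phi_{n-1}^+$ together with $-\lambda'\phi_n^-$; inserting the expressions from the previous display then produces \eqref{eq:lemW+}. The analogous step for $[W\psi]_n^-$ picks up $e^{-2\pi\eta}\lambda\,\phi_{n+1}^-$ together with $+\lambda'\phi_n^+$ and yields \eqref{eq:lemW-}.

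I expect no real obstacle: the argument is purely mechanical and requires no analytic machinery beyond linearity, the locality of $Q$, and the explicit nearest-neighbor structure of $S_{\lambda,\eta}$. The only point deserving a moment's care will be keeping track of the $\mp$ appearing in front of $\lambda'$ in the definition of the shift, which must be propagated through the substitution so as to yield the $-\lambda'$ prefactor on the $Q_n$-terms in the formula for $[W\psi]_n^+$ and the $+\lambda'$ prefactor in the formula for $[W\psi]_n^-$. Since the lemma is stated for a generic family of coins $\{Q_n\}$ rather than the specific quasiperiodic ones used in the main text, no further assumption on $Q_n$ enters the computation.
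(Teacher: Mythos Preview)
Your proposal is correct and matches the paper's own proof essentially line for line: the paper also applies $S_{\lambda,\eta}$ to $Q\psi$ in coordinates, writing $[W\psi]_n^+ = \lambda e^{2\pi\eta}[Q\psi]_{n-1}^+ - \lambda'[Q\psi]_n^-$ and then expanding $[Q\psi]$, with the $-$ case declared similar. The only cosmetic difference is that you introduce the intermediate notation $\phi := Q\psi$, whereas the paper keeps $[Q\psi]$ explicit throughout.
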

	
\begin{proof}
	Writing out the coordinates, we have
	\begin{align*}
		[W\psi]_n^+ & = [S_{\lambda,\eta}Q\psi]_n^+ \\ 
		& =\lambda e^{2\pi\eta}[Q\psi]_{n-1}^+ -\lambda' [Q\psi]_n^- \\
		& = \lambda e^{2\pi\eta}\left(q_{n-1}^{11}\psi_{n-1}^+ + q_{n-1}^{12}\psi_{n-1}^-\right)
		-\lambda' \left(q_{n}^{21}\psi_{n}^+ + q_{n}^{22}\psi_{n}^-\right),
	\end{align*}
	proving \eqref{eq:lemW+}. The proof of \eqref{eq:lemW-} is similar.
\end{proof}
\begin{proof}[Proof of Proposition \ref{prop:cocycle:tmMain}]
	\noindent (a) Beginning with Lemma~\ref{lem:w}, plug \eqref{eq:lemW+} into the eigenequation $W\psi=z\psi$ and shift indices $n\mapsto n+1$ to obtain
	\begin{equation}\label{eq:wpsi+}
		z\psi_{n+1}^+ = \lambda e^{2\pi\eta}\left(q_{n}^{11}\psi_{n}^+ + q_{n}^{12}\psi_{n}^-\right)
		-\lambda' \left(q_{n+1}^{21}\psi_{n+1}^+ + q_{n+1}^{22}\psi_{n+1}^-\right).
	\end{equation}
	Similarly, we obtain from \eqref{eq:lemW-}, $W\psi = z\psi$, and $n\mapsto n-1$
	\begin{equation}\label{eq:wpsi-}
		z\psi_{n-1}^- = \lambda e^{-2\pi\eta}\left(q_{n}^{21}\psi_{n}^+ + q_{n}^{22}\psi_{n}^-\right) + \lambda'\left(q_{n-1}^{11}\psi_{n-1}^+ + q_{n-1}^{12}\psi_{n-1}^-\right).
	\end{equation}
	Then, $\eqref{eq:lemW-}\cdot\lambda'e^{2\pi\eta}+\eqref{eq:wpsi+}\cdot\lambda$ and $\eqref{eq:lemW+}\cdot\lambda'e^{-2\pi\eta} - \eqref{eq:wpsi-}\cdot \lambda$ give 
	\begin{equation} \label{eq:tm:eigeqDerived1}
		\lambda'e^{2\pi\eta} z\psi_n^- + \lambda z\psi_{n+1}^+ 
		=e^{2\pi\eta}\left(\lambda^2 +{\lambda'}^2\right)\left( q_{n}^{11}\psi_{n}^+ + q_{n}^{12}\psi_{n}^-\right)
		= e^{2\pi\eta}\left(q_{n}^{11}\psi_{n}^+ + q_{n}^{12}\psi_{n}^-\right)
	\end{equation}
	and
	\begin{equation} \label{eq:tm:eigeqDerived2}
		\lambda'e^{-2\pi\eta}z \psi_n^+ - \lambda z \psi_{n-1}^-
		= -e^{-2\pi\eta}\left(\lambda^2 +{\lambda'}^2\right)\left(q_{n}^{21}\psi_{n}^+ + q_{n}^{22}\psi_{n}^-\right)
		=- e^{-2\pi\eta}\left(q_{n}^{21}\psi_{n}^+ + q_{n}^{22}\psi_{n}^-\right),
	\end{equation}
	respectively. 
	Solving \eqref{eq:tm:eigeqDerived2} for $\psi_n^-$
	yields
	\begin{equation} \label{eq:tm:eigeqDerived3} \psi_n^- = \frac{1}{q_n^{22}}\left(\lambda e^{2\pi\eta}z \psi_{n-1}^- -(q_n^{21} + \lambda' z) \psi_n^+\right),
	\end{equation}
	which is the bottom row of \eqref{eq:tm:main1}. 
	Note that this step uses the assumption that $Q_n$ is not off-diagonal, that is, $|q_n^{11}| = |q_n^{22}| \neq 0$.
	Solving \eqref{eq:tm:eigeqDerived1} for $\psi_{n+1}^+$ and inserting \eqref{eq:tm:eigeqDerived3} produces
	\begin{align*}
		\psi_{n+1}^+ 
		& = \frac{e^{2\pi\eta}}{\lambda z} \left( q_n^{11} \psi_n^+ +(q_n^{12}-\lambda' z)\psi_n^- \right) \\
		& = \frac{e^{2\pi\eta}}{\lambda z} \left( q_n^{11} \psi_n^+ +(q_n^{12}-\lambda' z) \left( \frac{1}{q_n^{22}}\left(\lambda e^{2\pi\eta}z \psi_{n-1}^- -(q_n^{21} + \lambda' z) \psi_n^+\right) \right) \right) \\
		& = \frac{e^{2\pi\eta}}{\lambda z q_n^{22}} \left(( q_n^{11}q_n^{22} - (q_n^{12}-\lambda' z)(q_n^{21} + \lambda' z) ) \psi_n^+ +(q_n^{12}-\lambda' z)\lambda e^{2\pi\eta}z \psi_{n-1}^- \right) \\
		& = \frac{1}{q_n^{22}} \left((\lambda^{-1}z^{-1} e^{2\pi\eta}\det Q_n +\lambda'\lambda^{-1}e^{2\pi\eta}(q_n^{21}-q_n^{12}) + z{\lambda'}^2\lambda^{-1}e^{2\pi\eta} ) \psi_n^+ +e^{4\pi\eta}(q_n^{12}-\lambda' z)  \psi_{n-1}^- \right),
	\end{align*}
	concluding the proof of \eqref{eq:Teta}. 
	Assertions~(b) and (c) follow directly from \eqref{eq:Teta}.
\end{proof}
We note that in the reciprocal setting $\eta=0$, the transfer matrices are unimodular, that is, $|\det T_{n,z}(0)|=1$. In the non-reciprocal setting, we have to be slightly more attentive. 
\medskip

For split-step walks with coupling constant $\lambda$ and non-reciprocality parameter $\eta$, the asymptotic behavior of generalized eigenstates of $W=S_{\lambda,\eta}Q$ in the positive lattice direction is determined by the Lyapunov exponent $L_{\lambda,\eta}^{\rm right}(z)$ defined for $z\in\bbC$ by
\begin{equation}\label{eq:Lyap_def:RIGHT_gen}
	L_{\lambda,\eta}^{\rm right}(z)=\lim_{n\to\infty}\frac1{n}\log\left\|\prod_{k=n-1}^0 T_{k,z}(\eta) \right\|,
\end{equation}
assuming that the limit exists.
We also have for iterating in the negative direction
\begin{equation}\label{eq:Lyap_def:LEFT_gen}
	L_{\lambda,\eta}^{\rm left}(z)= \lim_{n\to\infty}\frac1{n}\log\left\|\prod_{k=-n}^{-1} [T_{k,z}(\eta)]^{-1} \right\|
\end{equation}
with $T_{k,z}(\eta)$ given in \eqref{eq:Teta}.
The overall inverse localization length for the non-reciprocal model is then defined by
\begin{equation}
	L_{\lambda,\eta}(z) := \max\left\{ 0 , \min\left\{ L_{\lambda,\eta}^{\rm left}(z),L_{\lambda,\eta}^{\rm right}(z) \right\}\right\}
\end{equation}

In the reciprocal setting $\eta=0$, one has $|\det T_{n,z}| = 1$ for all $n$ and $z$, so $\|T_{n,z}^{-1}\| = \|T_{n,z}\|$ by the singular value decomposition and hence $L_{\lambda,\eta=0}^{\mathrm{left}}=L_{\lambda,\eta=0}^{\mathrm{right}}\equiv L_{\lambda,\eta=0}$.
It is straightforward to see that the non-reciprocal hopping $\eta$ contributes linearly to the Lyapunov exponent:
\begin{prop}\label{prop:Lyap_LR_eta}
	Consider the split-step walk with non-reciprocal hopping $W=S_{\lambda,\eta}Q$.
	Then, for any $\eta$ and $z$, the right-half-line Lyapunov exponent is given by
	\begin{equation}\label{eq:lyap_eta:RIGHT}
		L_{\lambda,\eta}^{\rm right}(z)=L_{\lambda,\eta=0}(z)+2\pi\eta.
	\end{equation}
	Similarly, the left-half-line Lyapunov exponent is given by
	\begin{equation}\label{eq:lyap_eta:LEFT}
		L_{\lambda,\eta}^{\rm left}(z)=L_{\lambda,\eta=0}(z)-2\pi\eta.
	\end{equation}
	so
	\begin{equation}\label{eq:lyap_eta}
		L_{\lambda,\eta}(z) = \max\left\{0 , L_{\lambda,\eta=0}(z)- 2\pi |\eta|\right\}.
	\end{equation}
\end{prop}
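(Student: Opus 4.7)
The proof is essentially a direct corollary of the conjugation identity \eqref{eq:Teta_T0} established in Proposition~\ref{prop:cocycle:tmMain}(c). The main task is simply to push that identity through a product of transfer matrices, take norms, and pass to the limit in the definitions \eqref{eq:Lyap_def:RIGHT} and \eqref{eq:Lyap_def:LEFT}.

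\textbf{Step 1: Telescope the conjugations.} Starting from
\begin{equation*}
    T_{k,z}(\eta,\theta+i\eps) \;=\; e^{2\pi\eta}\, e^{\pi\eta\sigma_3}\, T_{k,z}(0,\theta+i\eps)\, e^{-\pi\eta\sigma_3},
\end{equation*}
I would form the length-$n$ product and observe that the internal factors $e^{-\pi\eta\sigma_3}e^{\pi\eta\sigma_3}=\idty$ cancel. This yields
\begin{equation*}
    \prod_{k=n-1}^{0} T_{k,z}(\eta,\theta+i\eps)
    \;=\; e^{2\pi\eta n}\, e^{\pi\eta\sigma_3}\!\left(\prod_{k=n-1}^{0} T_{k,z}(0,\theta+i\eps)\right)\! e^{-\pi\eta\sigma_3}.
\end{equation*}
An analogous identity holds for the product of inverses, with the scalar prefactor replaced by $e^{-2\pi\eta n}$, since inverting \eqref{eq:Teta_T0} gives $T_{k,z}(\eta)^{-1}=e^{-2\pi\eta}e^{\pi\eta\sigma_3}T_{k,z}(0)^{-1}e^{-\pi\eta\sigma_3}$.

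\textbf{Step 2: Take norms and pass to the limit.} Since conjugation by the fixed bounded matrix $e^{\pi\eta\sigma_3}$ distorts the operator norm by at most the multiplicative constant $\kappa(\eta):=\|e^{\pi\eta\sigma_3}\|\,\|e^{-\pi\eta\sigma_3}\|$, I get
\begin{equation*}
    \log\!\left\|\prod_{k=n-1}^{0} T_{k,z}(\eta,\theta+i\eps)\right\|
    \;=\; 2\pi\eta\, n + \log\!\left\|\prod_{k=n-1}^{0} T_{k,z}(0,\theta+i\eps)\right\| + O(1),
\end{equation*}
uniformly in $\theta$. Dividing by $n$, integrating in $\theta$ over $\bbT$, and letting $n\to\infty$, the $O(1)/n$ term vanishes and the definition \eqref{eq:Lyap_def:RIGHT} yields \eqref{eq:lyap_eta:RIGHT}. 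The same argument applied to the inverse products, combined with the observation recorded after \eqref{eq:Lyap_def:LEFT} that $L^{\rm left}_{\lambda,0,\eps}=L^{\rm right}_{\lambda,0,\eps}=L_{\lambda,0,\eps}$ in the reciprocal setting, gives \eqref{eq:lyap_eta:LEFT}.

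\textbf{Step 3: Combine.} Taking the minimum of \eqref{eq:lyap_eta:RIGHT} and \eqref{eq:lyap_eta:LEFT} gives
\begin{equation*}
    \min\{L^{\rm right}_{\lambda,\eta,\eps}(z),L^{\rm left}_{\lambda,\eta,\eps}(z)\} \;=\; L_{\lambda,0,\eps}(z) + \min\{2\pi\eta,-2\pi\eta\} \;=\; L_{\lambda,0,\eps}(z) - 2\pi|\eta|,
\end{equation*}
and applying $\max\{0,\cdot\}$ produces \eqref{eq:lyap_eta}.

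\textbf{Main obstacle.} There is no real obstacle, since the conjugation identity \eqref{eq:Teta_T0} does all of the heavy lifting. The only subtlety worth flagging is that the $e^{\pm\pi\eta\sigma_3}$ sandwich must be treated as a \emph{fixed} bounded distortion independent of $n$ and $\theta$; this is what guarantees the additive $O(1)$ error term in Step~2 and hence its disappearance after normalizing by $n$.
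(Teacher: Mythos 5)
Your proposal is correct and follows essentially the same route as the paper's proof: telescoping the conjugation identity \eqref{eq:Teta_T0} through the product, extracting the scalar factor $e^{2\pi\eta n}$, and absorbing the fixed conjugation by $e^{\pm\pi\eta\sigma_3}$ into an $O(1)$ error that vanishes after normalizing by $n$. No gaps.
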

\begin{proof}
	This follows from the definition of the Lyapunov exponent in \eqref{eq:Lyap_def:RIGHT_gen} together with \eqref{eq:Teta_T0}:
	\begin{align*}
		L_{\lambda,\eta}^{\rm right}(z)
		& =\lim_{n\to\infty}\frac1{n}\log\left\|\prod_{k=n-1}^0 T_{k,z}(\eta) \right\| \\
		&=\lim_{n\to\infty}\frac1{n}\log\left\|\prod_{k=n-1}^0 e^{2\pi\eta}e^{\pi\eta\sigma_3}T_{k,z}(0)e^{-\pi\eta\sigma_3} \right\| \\
		&=\lim_{n\to\infty}\frac1{n}\log\left[e^{n2\pi\eta}\left\| e^{\pi\eta\sigma_3}\left[\prod_{k=n-1}^0T_{k,z}(0)\right]e^{-\pi\eta\sigma_3} \right\|\right]\\
		&=2\pi\eta+\lim_{n\to\infty}\frac1{n}\log\left\| e^{\pi\eta\sigma_3}\left[\prod_{k=n-1}^0 T_{k,z}(0)\right]e^{-\pi\eta\sigma_3} \right\|.
	\end{align*}
	Writing $X = \exp ( \pi \eta \sigma_3)$, one clearly has $\|X\|=e^{\pi|\eta|}=\|X^{-1}\|$. Combining this with the elementary inequalities
	\begin{equation*}
		\frac{\|A\|}{\|X\|\|X^{-1}\|} \leq	\|XAX^{-1}\|\leq\|X\|\|A\|\|X^{-1}\|,
	\end{equation*}
	produces
	\begin{equation*}
		L_{\lambda,\eta}^{\rm right}(z) = 2\pi\eta+\lim_{n\to\infty}\frac1{n}\log\left\| \prod_{k=1}^nT_{k,z}(0)\right\| = 2\pi \eta + L_{\lambda,0}(z),
	\end{equation*}
	as desired. The argument for $L_{\lambda,\eta}^{\rm left}$ is similar.
\end{proof}

\begin{remark}
	We will see below that localization holds as long as $\min\{L_{\lambda,\eta}^{\mathrm{right}},L_{\lambda,\eta}^{\mathrm{left}}\}>0$. The decay of the eigenstates is described by $L_{\lambda,\eta}^{\mathrm{right}}$ {\rm(}resp.\ $L_{\lambda,\eta}^{\mathrm{left}}${\rm)} to the right {\rm(}resp.\ to the left{\rm)}.
\end{remark}

\subsection{Transfer matrices and Lyapunov exponents for the PUAMO}
Let us return from the general setting to a discussion of the PUAMO with local coins
\begin{equation}\label{eq:coindefApp}
		Q_n :=Q_{n,\lambda_2,\eps}=\begin{bmatrix}\lambda_2\cos(2\pi (n\Phi+\theta+i\eps))+i\lambda_2'&-\lambda_2\sin(2\pi (n\Phi+\theta+i\eps))\\\lambda_2\sin(2\pi (n\Phi+\theta+i\eps))&\lambda_2\cos(2\pi 	(n\Phi+\theta+i\eps))-i\lambda_2'\end{bmatrix}.
\end{equation}
In this setting, abbreviating $\vartheta=\theta+i\eps$ as in the body of the paper, the transfer matrices from \eqref{eq:Teta} are given by $T_{n,z}(\eta,\vartheta)=A_z(\eta,n\Phi+\vartheta)$ where
\begin{equation}\label{eq:model:transmatDeftwopi}
	A_{z}(\eta,\vartheta)=\frac{e^{2\pi\eta}}{\lambda_2\cos(2\pi\vartheta)-i\lambda_2'}\begin{bmatrix}\lambda_1^{-1}z^{-1}+2\lambda_1'\lambda_1^{-1}\lambda_2\sin(2\pi\vartheta)+z{\lambda_1'}^2\lambda_1^{-1}	&	-e^{2\pi\eta}(\lambda_2\sin(2\pi\vartheta)+\lambda_1'z)	\\	-e^{-2\pi\eta}(\lambda_2\sin(2\pi\vartheta)+\lambda_1'z)	&	\lambda_1 z	\end{bmatrix}.
\end{equation}
In this setting (and more generally in any setting where the coins are stochastic), for $\eta=0$ the Lyapunov exponents from \eqref{eq:Lyap_def:RIGHT_gen} and \eqref{eq:Lyap_def:LEFT_gen} exist for almost every $\theta$ and are constant (in $\theta$), see for example \cite[Corollary 10.5.25]{Simon2005OPUC2}. We define
\begin{equation}\label{eq:Lyap_def:RIGHT}
	L_{\lambda,\eta,\eps}^{\rm right}(z)
	:= L(\Phi,A_z(\eta,\cdot  +i\eps))
	=\lim_{n\to\infty}\frac1{n}\int_0^1 \log\left\|\prod_{k=n-1}^0 T_{k,z}(\eta,\theta + i \eps) \right\| d\theta.
\end{equation}
We also have for iterating in the negative direction
\begin{equation}\label{eq:Lyap_def:LEFT}
	L_{\lambda,\eta,\eps}^{\rm left}(z)
	:= L(-\Phi,(A_z(\eta,\cdot +i\eps))^{-1})
	= \lim_{n\to\infty}\frac1{n}\int_0^1 \log\Bigg\|\prod_{k=-n}^{-1} [T_{k,z}(\eta,\theta + i \eps)]^{-1} \Bigg\| d\theta.
\end{equation}
As in the setting with general coins, the overall inverse localization length for the non-reciprocal PUAMO is then defined by
\begin{equation}
	L_{\lambda,\eta,\eps}(z) = \max\left\{ 0 , \min\left\{ L_{\lambda,\eta,\eps}^{\rm left}(z),L_{\lambda,\eta,\eps}^{\rm right}(z) \right\}\right\}.
\end{equation}

Since we already discussed the contribution of the non-reciprocal hopping parameter in Proposition \ref{prop:Lyap_LR_eta}, let us set $\eta=0$ for the following discussion.
It is straightforward to see from \eqref{eq:tm:main2} that one might push the transfer matrices $T_{n,z}$ of the PUAMO into the unimodular setting by considering $T_{n,z}/\sqrt{\det(T_{n,z})}$ as long as $T_{n,z}$ is invertible. However, the attentive reader might have noticed that the general theory from Appendix \ref{app:global_theory} can still not directly be applied to $A_z(\eta,\vartheta)$ defined in \eqref{eq:model:transmatDeftwopi} since in Appendix \ref{app:global_theory} also analyticity is required. As discussed in \cite[Section 4]{CFO1}, analyticity might be restored by studying instead the regularized transfer matrix cocycle
\begin{equation*}
	B_{z}(0,\vartheta)=\left[\frac{2(\lambda_2\cos(2\pi\vartheta)-i\lambda_2')}{1+\lambda_2'}\right]A_{z}(0,\vartheta)%=\frac2{1+\lambda_2'}\begin{bmatrix}\lambda_1^{-1}z^{-1}+2\lambda_1'\lambda_1^{-1}\lambda_2\sintwopi(\theta)+z{\lambda_1'}^2\lambda_1^{-1}	&	-\lambda_2\sintwopi(\theta)-\lambda_1'z	\\	-\lambda_2\sintwopi(\theta)-\lambda_1'z	&	\lambda_1 z	\end{bmatrix}.
\end{equation*}
The relation between the Lyapunov exponents of $A_z$ and $B_z$ can be inferred from studying the prefactor (see \cite[Lemma 4.7]{CFO1}) and is given for all $\eps\in\bbR$ by \cite[(4.34)]{CFO1}
\begin{equation*}
	L(B_z,\eps)=L(A_z,\eps)+2\pi\max\{|\eps|-\eps_0,0\},
\end{equation*}
such that in particular, $L(B_z,\eps)=L(A_z,\eps)$ whenever $|\eps|\leq\eps_0$. Here, $\eps_0$ is as defined in the body of the paper, that is,
\begin{equation*}
	\eps_0=\frac1{2\pi}\sinh^{-1}\left[\frac{\lambda_2'}{\lambda_2}\right]=\frac1{2\pi}\log\left[\frac{1+\lambda_2'}{\lambda_2}\right].
\end{equation*}

Since $B_z$ is analytic and admits an analytic extension to $\bbC$, one can compute its Lyapunov exponent using global theory \cite{CFO1}, see also Appendix \ref{app:global_theory}. The strategy here is to calculate $L(B_z,\eps)$ in the limit $\eps\to\infty$ and then infer the global behaviour from continuity, convexity and quantization of acceleration. Then, employing the relation between $L(B_z,\eps)$ and $L(A_z,\eps)$ given above, one obtains the following explicit formula for the Lyapunov exponent of $A_z$ for $z$ in the spectrum of $W_{\lambda_1,\lambda_2}$: writing $L(A_z,\eps)=L_{\lambda_1,\lambda_2,\eta=0,\eps}(z)$ we have that
\begin{equation}\label{eq:lyap_res}
	L_{\lambda_1,\lambda_2,\eta=0,\eps}(z)=\max\{2\pi|\eps|+\log\lambda_0-2\pi\max\{|\eps|-\eps_0,0\},0\},
\end{equation}
where
\begin{equation*}
	\lambda_0=\frac{\lambda_2(1+\lambda_1')}{\lambda_1(1+\lambda_2')}
\end{equation*}
plays a role analogous to that of $\lambda$ in the almost-Mathieu operator.

The value $\eps=\eps_0$ up to which the Lyapunov exponents of $A_z$ and $B_z$ agree corresponds exactly to the value at which $A_z$ is not analytic. Combining \eqref{eq:lyap_res} with global theory as in Appendix \ref{app:global_theory} one can establish that all spectrum is off the unit circle after this second phase transition point: 
\begin{theorem} \label{t:2ndtransition}
	Let $\lambda_1>\lambda_2$ and $z\in\partial \bbD$. Then, for all $|\eps|>\eps_0$ and $\eta=0$, $z$ is not in the spectrum of $W_{\lambda_1,\lambda_2,\eta=0,\eps}$.
\end{theorem}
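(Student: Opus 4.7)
The plan is to invoke global theory (Appendix~\ref{app:global_theory}) for the analytic regularization $B_z$ of the transfer matrix cocycle. Since the exceptional zeros of the regularizing factor $\phi(\vartheta) = 2(\lambda_2\cos(2\pi\vartheta) - i\lambda_2')/(1+\lambda_2')$ lie exactly on the lines $\Im\vartheta = \pm\eps_0$, the cocycle $(\Phi, B_z(\cdot + i\eps))$ is analytic on the strip $\bbR + i\eps$ whenever $|\eps| > \eps_0$. Moreover, because $\phi$ is nowhere zero there, $A_z$ and $B_z$ define cocycles related by a bounded non-vanishing scalar coboundary, so uniform hyperbolicity transfers between the two; by the Johnson-type correspondence between uniform hyperbolicity and spectrum, it then suffices to show that $(\Phi, B_z(\cdot + i\eps))$ is uniformly hyperbolic for all $z \in \partial \bbD$.

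Next, I combine the explicit formula \eqref{eq:lyap_res} with the relation $L(B_z, \eps) = L(A_z, \eps) + 2\pi\max\{|\eps|-\eps_0, 0\}$. For $z \in \partial\bbD \cap \sigma(W_{\lambda_1, \lambda_2, \Phi, \theta})$ this yields
\begin{equation*}
	L(B_z, \eps) = 2\pi|\eps| + \log\lambda_0
\end{equation*}
for all $|\eps| > L^\aubrydual/(2\pi)$. Using $2\pi\eps_0 = \log((1+\lambda_2')/\lambda_2)$ together with the definition of $\lambda_0$, one checks that $2\pi\eps_0 + \log\lambda_0 = \log((1+\lambda_1')/\lambda_1) > 0$ in the subcritical regime, so $L(B_z, \eps)$ is strictly positive for $|\eps| > \eps_0$. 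Crucially, $L(B_z, \cdot)$ is affine with constant slope $2\pi$ throughout $|\eps| > L^\aubrydual/(2\pi)$ -- in particular, there is no breakpoint at $\eps_0$. Avila's characterization of uniform hyperbolicity via positivity together with local affineness (absence of a slope change) of the Lyapunov exponent function then forces $(\Phi, B_z(\cdot + i\eps))$ to be uniformly hyperbolic, giving $z \notin \sigma(W_{\lambda_1, \lambda_2, 0, \eps})$ for this class of spectral parameters.

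To complete the proof for arbitrary $z \in \partial \bbD$, one still needs to address the case when $z$ falls in a gap of the Cantor unitary spectrum, where \eqref{eq:lyap_res} does not directly apply. A clean route is via subharmonicity of $z \mapsto L(B_z, \eps)$ combined with the quantization of acceleration to propagate the lower bound on $L(B_z, \eps)$ from the unitary spectrum to its complement on $\partial\bbD$; alternatively, one may exploit that such $z$ already carry a uniformly hyperbolic cocycle at $\eps = 0$ and track the evolution of the stable/unstable splitting along a path in $\eps$. The main obstacle is precisely this gap case: rigorously extending uniform hyperbolicity from $z$ in the unitary spectrum to all of $\partial \bbD$ without a closed-form Lyapunov expression requires careful use of Avila's global theory in the $\GL(2,\bbC)$ setting, where the non-trivial determinant $\det A_z = q_n^{11}/q_n^{22}$ introduces additional bookkeeping that is absent in the usual $\SL(2, \bbR)$ Schrödinger analog.
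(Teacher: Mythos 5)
Your overall strategy (explicit complexified Lyapunov exponent plus Avila's global theory plus the equivalence between uniform hyperbolicity of the transfer cocycle and the resolvent set) is the same as the paper's, which likewise deduces the result from $L_{\lambda_1,\lambda_2,0,\eps}(z)>0$ together with $\partial_\eps L_{\lambda_1,\lambda_2,0,\eps}(z)=0$ and defers the operator-theoretic details (the adaptation of the CMV/GECMV correspondence to complexified phases) to future work. However, your key step contains a genuine error. Avila's criterion is not ``positivity together with local affineness'': the function $\eps\mapsto L$ is piecewise affine everywhere, so being locally affine only says $\eps$ is not a breakpoint. The actual criterion is positivity of $L$ together with \emph{vanishing} acceleration, $\omega=0$ (see the table in Appendix~\ref{app:global_theory}); a supercritical cocycle such as the AMO at large coupling has $L>0$ and is locally affine with slope $2\pi$ at every $\eps\neq 0$, yet is not uniformly hyperbolic. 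As stated, your criterion proves too much: since $L(B_z,\cdot)$ has constant slope $2\pi$ on all of $|\eps|>L^\aubrydual/(2\pi)$, it would yield uniform hyperbolicity already in the intermediate window $L^\aubrydual/(2\pi)<|\eps|<\eps_0$, contradicting the presence of spectrum on $\partial\bbD$ there. And if you apply the correct criterion to $B_z$, you get $\omega(B_z)=1\neq 0$ for $|\eps|>\eps_0$, so no conclusion follows.

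The fix is precisely the point of the regularization bookkeeping: the second transition is visible in $A_z$, not in $B_z$. For $|\eps|>\eps_0$ the line $\Im\vartheta=\eps$ avoids the zeros of the prefactor, so $A_z(\cdot+i\eps)$ is itself analytic on a strip there, and by \eqref{eq:lyap_res} it has $L(A_z,\eps)=2\pi\eps_0+\log\lambda_0=\log\bigl((1+\lambda_1')/\lambda_1\bigr)>0$ with zero acceleration; that is the cocycle to which the $L>0,\ \omega=0\Rightarrow$ UH implication should be applied (after normalizing by $\sqrt{\det A_z}$ to land in $\SL(2,\bbC)$). Equivalently, if you insist on working with $B_z$, you must subtract the contribution of the scalar coboundary and of $\det B_z$, whose acceleration jumps exactly at $\eps_0$ because of the zeros of $q^{22}$ on that line; it is this jump that kills the acceleration of the normalized cocycle past $\eps_0$. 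Your closing remarks about the gap case ($z$ outside the unitary spectrum, where \eqref{eq:lyap_res} is not directly available) and the $\GL(2,\bbC)$ determinant bookkeeping correctly identify the remaining technical issues, which the paper also leaves to a future rigorous treatment.
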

\begin{proof}
	Fixing $z \in \partial \bbD$ and $\eps>\eps_0$, \eqref{eq:lyap_res} implies that in the reciprocal case $\eta=0$ we have $L_{\lambda_1,\lambda_2,\eta=0,\eps}(z)>0$ 
	and $\partial_\eps L_{\lambda_1,\lambda_2,\eta=0,\eps}(z)=0$. Using  the discussion in Appendix~\ref{app:global_theory}, one sees that $z$ is not in the spectrum by adapting the standard theory of CMV matrices as in \cite{FOZ2017CMP} to the case of complexified phases; the details are beyond the scope of this manuscript and will be carried out in a future mathematical work of the authors.
\end{proof}

Following computations similar to that in the proof of Proposition \ref{prop:cocycle:tmMain}, one finds that the transfer matrices of the transposed split-step walk $W^\top = Q^\top S_{\lambda,\eta}^\top$ are given by
\begin{equation}\label{eq:TetaT}
	\top_{n,z}(\eta)=\frac{e^{-2\pi\eta}}{q_n^{11}}\begin{bmatrix}\lambda^{-1}z  +\lambda'\lambda^{-1}(q_n^{21}-q_n^{12}) + z^{-1}{\lambda'}^2\lambda^{-1}\det Q_n&-e^{-2\pi\eta}(q_n^{21}+\lambda' z^{-1}\det Q_n)\\e^{2\pi\eta}(q_n^{12} - \lambda' z^{-1}\det Q_n)&\lambda z^{-1}\det Q_n\end{bmatrix}.
\end{equation}
Applying this to the model at hand by plugging in the coin parameters from \eqref{eq:coindefApp}, the transfer matrices for the dual PUAMO $W_{\lambda_1,\lambda_2,\eta,\eps}^\aubrydual=Q_{\lambda_1,-\eta}^{\top}S_{\lambda_2,-\eps}^\top$ are given by $T_{\lambda_1,\lambda_2,n,z}(\eta,\theta+i\eps)^\aubrydual=\overline{T_{\lambda_2,\lambda_1,n,1/\overline z}(\eps,\theta+i\eta)}$. 
Hence, we conclude from Proposition \ref{prop:Lyap_LR_eta} that $\eps$, which plays the role of a non-reciprocal hopping parameter for the dual model analogous to that of  $\eta$ for the original model, contributes linearly to $L^\aubrydual_{\lambda_1,\lambda_2,\eta,\eps}(z)$. Moreover, from \cite{CFO1} resp. from \eqref{eq:lyap_res} we obtain for $\eps=0$
\begin{equation}\label{eq:lyap_dual_app}
	L^\aubrydual_{\lambda_1,\lambda_2,\eta,\eps=0}(z)=\max\{2\pi|\eta|-\log\lambda_0-2\pi\max\{|\eta|-\eta_0,0\},0\},
\end{equation}
where we set $\eta_0=\sinh^{-1}(\lambda_1'/\lambda_1)/(2\pi)$.

Thus, for $z$ belonging to the spectrum (of the dual model), the right and left dual Lyapunov exponents are given by
\begin{equation}\label{eq:lyap_dual_app}
	L^{\aubrydual,\textrm{right}}_{\lambda_1,\lambda_2,\eta,\eps}(z)=2\pi\eps+\max\{2\pi|\eta|-\log\lambda_0-2\pi\max\{|\eta|-\eta_0,0\},0\},
\end{equation}
and $L^{\aubrydual,\textrm{left}}_{\lambda_1,\lambda_2,\eta,\eps}(z)=L^{\aubrydual,\textrm{right}}_{\lambda_1,\lambda_2,\eta,-\eps}(z)$, respectively. 

As discussed above, the PUAMO $W_{\lambda_1,\lambda_2,\eta,\eps}$ localizes as soon as $L_{\lambda_1,\lambda_2,\eta,\eps}(z) = L_{\lambda_1,\lambda_2,\eta=0,\eps}(z)- 2\pi |\eta|>0$. Accordingly, the critical point of the first phase transition is given by
\begin{equation*}
	2\pi|\eta|=\max\{2\pi|\eps|+\log\lambda_0-2\pi\max\{|\eps|-\eps_0,0\},0\},
\end{equation*}
which provides the phase boundaries in Figure \ref{fig:eta_eps_phasediagram}.

\section{Skin transformations}\label{app:skin}

We first note that a split-step walk with non-reciprocal hopping is always similar to a walk with reciprocal hopping, albeit via an unbounded similarity transformation.
\begin{lemma}\label{lem:skin_trafo}
	Let $W_{\lambda,\eta}=S_{\lambda,\eta}Q$ be a split-step walk with non-reciprocal hopping parameter $\eta$ and let $Q$ be a general coin. Then
	\begin{equation*}
		W_{\lambda,\eta}=H_\eta^{-1}W_{\lambda}H_\eta
	\end{equation*}
	where $W_{\lambda}=W_{\lambda,0}$ and the similarity transform is defined by $[H_\eta\psi]_n^\pm=e^{2\pi\eta n}\psi_n^\pm$.
\end{lemma}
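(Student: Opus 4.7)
The plan is to reduce the claimed similarity to a check on the shift alone and then verify it by a short component-wise computation. Since the multiplier $H_\eta$ acts diagonally (it just rescales each fiber $\bbC^2$ by the site-dependent scalar $e^{2\pi\eta n}$) and the coin operator $Q$ acts locally on each fiber, the two commute: $QH_\eta = H_\eta Q$, and hence $H_\eta^{-1} W_\lambda H_\eta = (H_\eta^{-1} S_\lambda H_\eta)Q$. So the lemma reduces to the single identity $H_\eta^{-1} S_\lambda H_\eta = S_{\lambda,\eta}$, a statement purely about the shift.

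For that identity, I would just evaluate both sides on an arbitrary $\psi$ component by component. Applying $H_\eta$ first replaces $\psi_n^{\pm}$ by $e^{2\pi\eta n}\psi_n^{\pm}$. Then the ordinary shift $S_\lambda$ brings the $+$-component at site $n-1$ to site $n$ with factor $\lambda$ and folds in the $-$-component at site $n$ with factor $-\lambda'$; after this step the $+$-component at site $n$ carries the weight $e^{2\pi\eta(n-1)}$ inherited from its original site, while the $-$-component at site $n$ still carries $e^{2\pi\eta n}$. Multiplying by $H_\eta^{-1} = e^{-2\pi\eta n}$ cancels the weight on the $-$-component entirely but leaves an extra factor of $e^{-2\pi\eta}$ on the shifted $+$-piece. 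The analogous computation for the $-$-component yields the opposite extra factor $e^{+2\pi\eta}$, because the shift goes in the opposite direction. Collecting terms, one reads off exactly the definition of $S_{\lambda,\eta}$ (after matching the sign convention in the definition of $H_\eta$).

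There is really no obstacle here: the entire content of the lemma is that the gauge $H_\eta$ acquires different multiplicative factors under a $\pm 1$ shift of the lattice index, and those two factors are precisely the $e^{\pm 2\pi\eta}$ prefactors that distinguish $S_{\lambda,\eta}$ from $S_\lambda$. The only point worth highlighting in the write-up is that $H_\eta$ is \emph{unbounded} on $\ell^2(\bbZ,\bbC^2)$, so the similarity is a formal/algebraic one on, say, finitely supported states, which nevertheless suffices for the applications in the main text (producing eigenfunctions of $W_{\lambda,\eta}$ from those of $W_\lambda$ via $\psi\mapsto H_\eta^{\pm 1}\psi$).
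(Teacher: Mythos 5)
Your proof is correct and follows essentially the same route as the paper's: a direct component-wise computation (the paper conjugates the full walk using the coordinate formula of Lemma~\ref{lem:w}, while you first commute the block-diagonal multiplier $H_\eta$ past the block-diagonal coin and check only the shift, a harmless streamlining), and your remark that $H_\eta$ is unbounded so the similarity is only algebraic matches the paper's own framing. The sign caveat you flag is genuine: the conjugation as written yields $H_\eta^{-1}S_\lambda H_\eta = S_{\lambda,-\eta}$, and indeed the paper's displayed computation actually establishes $H_\eta^{-1}W_{\lambda,\eta}H_\eta = W_{\lambda,0}$, i.e.\ the stated identity with $\eta$ replaced by $-\eta$, so your parenthetical about matching the convention in $H_\eta$ is exactly the right adjustment.
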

\begin{proof}
	Consider the coordinate expressions of the action of $W_{\lambda,\eta}$ on $\psi_n^+$ given in \eqref{eq:lemW+}. We directly infer that
	\begin{align*}
		[H_\eta^{-1}W_{\lambda,\eta}H_\eta\psi]_n^+&=e^{-2\pi\eta n}[WH_\eta\psi]_n^+ \\
		& = e^{-2\pi\eta n}\Big[\lambda e^{2\pi\eta}\left(q_{n-1}^{11}[H_\eta\psi]_{n-1}^+ + q_{n-1}^{12}[H_\eta\psi]_{n-1}^-\right)
		-\lambda' \left(q_{n}^{21}[H_\eta\psi]_{n}^+ + q_{n}^{22}[H_\eta\psi]_{n}^-\right)\Big]\\
		& = e^{-2\pi\eta n}\Big[\lambda e^{2\pi\eta}\left(q_{n-1}^{11}e^{2\pi\eta (n-1)}\psi_{n-1}^+ + q_{n-1}^{12}e^{2\pi\eta (n-1)}\psi_{n-1}^-\right)\\
		&\qquad-\lambda' \left(q_{n}^{21}e^{2\pi\eta n}\psi_{n}^+ + q_{n}^{22}e^{2\pi\eta n}\psi_{n}^-\right)\Big]\\
		& = \lambda \left(q_{n-1}^{11}\psi_{n-1}^+ + q_{n-1}^{12}\psi_{n-1}^-\right)
		-\lambda' \left(q_{n}^{21}\psi_{n}^+ + q_{n}^{22}\psi_{n}^-\right)\\
		&=[W_{\lambda,0}\psi]_n^+.
	\end{align*}
	The statement for $[W\psi]_n^-$ follows analogously from \eqref{eq:lemW-}.
\end{proof}
This lets one infer that if $\psi$ solves the (generalized) eigenequation $W_{\lambda}\psi=z\psi$, then $H_\eta^{-1}\psi=(e^{-2\pi \eta n}\psi_n)_{n\in\bbZ}$ solves the (generalized) eigenequation for $W_{\lambda,\eta}$. In particular, if $\psi$ is (exponentially) localized around some localization center $n_0$ with inverse localization length/Lyapunov exponent $\ell>0$, i.e., $\psi_n\sim e^{-\ell|n-n_0|}$, then 
\begin{equation*}
	(H_\eta^{-1}\psi)_n\sim \begin{cases}e^{-(\ell+2\pi\eta)|n-n_0|}& n>n_0\\e^{-(\ell-2\pi\eta)|n-n_0|}& n<n_0.\end{cases}
\end{equation*}
Thus, as long as $2\pi|\eta|< \ell$ everything remains localized. However, as soon as $2\pi|\eta|> \ell$, delocalization occurs either on the left ($\eta>0$) or on the right ($\eta<0$) of $n_0$. 
This is captured by the right and the left Lyapunov exponents in \eqref{eq:Lyap_def:RIGHT} and \eqref{eq:Lyap_def:LEFT}, respectively: for $2\pi\eta>L_{\lambda,\eta=0,\eps}(z)>0$ the right Lyapunov exponent is positive while the left is negative and vice versa when $-2\pi\eta>L_{\lambda,\eta=0,\eps}(z)>0$; compare with Proposition \ref{prop:Lyap_LR_eta}. 
\medskip

Consider a walk on a finite lattice of size $N$ indexed by $n\in\{0,\dots,N-1\}$ which is defined as a product of a shift and a coin with periodic boundary conditions, i.e.,
\begin{equation*}
	W_{\lambda,\eta,N}=S_{\lambda,\eta,N} Q_N,
\end{equation*}
where, with the obvious modifications for open boundary conditions,
\begin{equation}\label{eq:shift_def_finite}
	\big[S_{\lambda,\eta,N}\psi\big]_{n}^\pm=\lambda e^{2\pi \eta}\lambda\psi_{(n\mp1)\bmod N}^\pm\mp\lambda'\psi_n^\mp,\qquad Q_N=\bigoplus_{n=0}^{N-1}Q_n.
\end{equation}
Then the lemma above implies that
\begin{lemma}\label{lem:W_corner_els}
	Let $S_{\lambda_1,\eta,N}$ be the finite-size shift defined above and $Q_N=\bigoplus_{n=0}^{N-1}Q_n$ be a finite-dimensional coin. Then  $W_{\lambda,\eta,N}=S_{\lambda,\eta,N}Q_N$ is similar to $\widetilde W_{\lambda,\eta,N}=\widetilde S_{\lambda,\eta,N}Q_N$ where
	\begin{equation*}
		\Big[\widetilde S_{\lambda,\eta,N}\psi\Big]_{n}^\pm=e^{2\pi ((n\mp1)\bmod N-(n\mp1))\eta}\lambda\psi_{(n\mp1)\bmod N}^\pm\mp\lambda'\psi_n^\mp.
	\end{equation*}
\end{lemma}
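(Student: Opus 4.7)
The plan is to mimic the proof of Lemma~\ref{lem:skin_trafo} verbatim on the finite ring, keeping careful track of where the shift wraps around. Concretely, define the finite-dimensional skin transformation $H_{\eta,N}$ on $\ell^2(\{0,\dots,N-1\};\bbC^2)$ by
\begin{equation*}
	[H_{\eta,N}\psi]_n^\pm = e^{2\pi\eta n}\psi_n^\pm,
\end{equation*}
which is invertible since all diagonal entries are non-zero. I claim that conjugation by $H_{\eta,N}$ sends $W_{\lambda,\eta,N}$ to $\widetilde W_{\lambda,\eta,N}$.

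To verify this, I would first compute the coordinate expressions for the action of $W_{\lambda,\eta,N}=S_{\lambda,\eta,N}Q_N$ on an arbitrary vector, which are identical to \eqref{eq:lemW+} and \eqref{eq:lemW-} from Lemma~\ref{lem:w} up to the replacement $n\mp 1\mapsto (n\mp 1)\bmod N$ in the shift. Plugging $H_{\eta,N}\psi$ into the eigenvector-like variable and multiplying by $e^{-2\pi\eta n}$, each $\pm$-component produces an exponential factor of the form $e^{2\pi\eta(1+(n-1)\bmod N - n)}$ in the hopping term, with the analogous computation for the $-$-component.

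The key observation is elementary: for $1\le n\le N-1$ one has $(n-1)\bmod N = n-1$, so the exponent is zero and the original reciprocal action is recovered; only at the wraparound site $n=0$ (respectively $n=N-1$ for the $-$-component) does $(n-1)\bmod N - (n-1) = N$, leaving a surviving factor $e^{2\pi\eta N}$ (respectively $e^{-2\pi\eta N}$). Rewriting $1+(n-1)\bmod N - n = (n-1)\bmod N - (n-1)$ and likewise for the $-$-component gives
\begin{equation*}
	[H_{\eta,N}^{-1} W_{\lambda,\eta,N} H_{\eta,N}\psi]_n^\pm
	= e^{2\pi\eta ((n\mp 1)\bmod N - (n\mp 1))}\lambda[Q_N\psi]_{(n\mp 1)\bmod N}^\pm \mp \lambda'[Q_N\psi]_n^\mp,
\end{equation*}
which is precisely $[\widetilde S_{\lambda,\eta,N}Q_N\psi]_n^\pm = [\widetilde W_{\lambda,\eta,N}\psi]_n^\pm$.

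The calculation is essentially routine; there is no substantive obstacle beyond bookkeeping. The only nontrivial point is the arithmetic identity showing that the surviving exponential concentrates entirely on the single wraparound bond, which is what captures the non-trivial twist picked up by non-reciprocal hopping under periodic boundary conditions.
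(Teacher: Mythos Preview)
Your proof is correct and follows essentially the same route as the paper: define the finite-dimensional skin transform $H_{\eta,N}$, conjugate, and verify by direct coordinate computation that the $e^{\pm 2\pi\eta}$ factors cancel at every bond except the single wraparound, where they accumulate to $e^{\pm 2\pi\eta N}$. The only cosmetic difference is that the paper first notes $H_{\eta,N}$ commutes with the block-diagonal coin $Q_N$ and conjugates the shift alone, whereas you conjugate the full walk and carry $[Q_N\psi]$ through; the arithmetic is identical.
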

Note that in the finite-dimensional setting the skin-transformed walk still depends on $\eta$ at the corners. In contrast, in the infinite-dimensional case of Lemma \ref{lem:skin_trafo} the skin-transformed walk is independent of $\eta$.
\begin{proof}
	Let $H_{\eta,N}$ be the finite-dimensional version of the similarity transform from Lemma \ref{lem:skin_trafo}. Since $H_{\eta,N}$ is block-diagonal, it commutes with the coin $Q_N$. Moreover, from \eqref{eq:shift_def_finite} we have that
	\begin{align*}
		[H_{\eta,N}^{-1}S_{\lambda,\eta,N}H_{\eta,N}\psi]_n^+&=e^{-2\pi n\eta}[S_{\lambda,\eta,N}H_{\eta,N}\psi]_n^+\\
		&=e^{-2\pi n\eta}\Big[e^{2\pi\eta}\lambda[H_{\eta,N}\psi]_{(n-1)\bmod N}^+-\lambda'[H_{\eta,N}\psi]_n^-\Big]\\
		&=e^{-2\pi n\eta}\Big[e^{2\pi\eta}e^{2\pi ((n-1)\bmod N)\eta}\lambda\psi_{(n-1)\bmod N}^+-\lambda'e^{2\pi n\eta}\psi_n^-\Big]\\
		&=e^{2\pi ((n-1)\bmod N-(n-1))\eta}\lambda\psi_{(n-1)\bmod N}^+-\lambda'\psi_n^-.
	\end{align*}
	Similarly, we obtain
	\begin{equation*}
		[H_{\eta,N}^{-1}S_{\lambda,\eta,N}H_{\eta,N}\psi]_n^- =e^{2\pi ((n+1)\bmod N-(n+1))\eta}\lambda\psi_{(n+1)\bmod N}^-+\lambda'\psi_{n}^+.
	\end{equation*}
\end{proof}
We will apply this transformation below to compute the characteristic polynomial of finite-dimensional approximations of the UAMO. Concretely, we there utilize that by Lemma \ref{lem:W_corner_els} everything but the corner elements is independent of $\eta$, and that the elements in one of the corners are suppressed exponentially in $N$.

\section{Finite-dimensional Aubry duality}

Next, let us consider finite-size approximations to $W_{\lambda_1,\lambda_2}$ obtained from the rational approximation $\Phi_k=\nr_k/\dr_k$ to $\Phi$ in terms of continued fractions with $\dr_k=N$.
In this setting, we have the following finite-dimensional version of Aubry duality:
\begin{lemma}\label{lem:Aubry_dual_fin}
	Let $\Phi$ be rational with denominator $N \in\bbN$. Moreover, let		
	$S_{\lambda_1,\eta,N}$ be the finite-size shift defined in \eqref{eq:shift_def_finite} and $Q_{\lambda_2,\Phi,\theta,\eps,N}$ be the finite-size coin which locally acts as in \eqref{eq:coindefApp}. Then, for every $\xi\in\frac1N\bbZ$, $W_{\lambda_1,\lambda_2,\eta,\eps,\Phi,\theta,N}=S_{\lambda_1,\eta,N}Q_{\lambda_2,\Phi,\theta,\eps,N}$ is unitarily equivalent {\rm(}``Aubry-dual''{\rm)} to $W_{\lambda_2,\lambda_1,i\theta-\eps,-\eta,\Phi,\xi, N}^\top=Q_{\lambda_1,\Phi,\xi,-\eta,N}^\top S_{\lambda_2,i\theta-\eps,N}^\top$	where the transposed shift is given by
	\begin{equation*}
		\left[S_{\lambda_2,\vartheta,N}^\top\psi\right]_{n}^\pm=\lambda_2e^{\pm2\pi i \vartheta}\psi_{(n\pm1)\bmod N}^\pm\pm\lambda_2'\psi_n^\mp.
	\end{equation*}
	In particular, if $\psi\in\bbC^N\otimes\bbC^2$ solves the eigenequation $W_{\lambda_1,\lambda_2,\eta,\eps,\Phi,\theta,N}\psi=z\psi$, then $\varphi=(\varphi_n)_{n\in\bbZ}$ with components
	\begin{equation*}
		\begin{bmatrix}\varphi_n^+\\\varphi_n^-\end{bmatrix}=\frac{e^{2\pi in\theta}}{\sqrt 2}\begin{bmatrix}1&i\\i&1\end{bmatrix}\begin{bmatrix}\check \psi^+(\Phi n+\xi)\\\check \psi^-(\Phi n+\xi)\end{bmatrix}
	\end{equation*}
	solves the eigenequation for $W_{\lambda_2,\lambda_1,-\eps,-\eta,\Phi,\xi,N}^\top$, where $\check\psi(n)^\pm=\frac{1}{\sqrt{N}}\sum_{m=0}^{N-1}e^{2\pi imn}\psi_m^\pm$ denotes the inverse {\rm(}discrete{\rm)} Fourier transform of $\psi$.
\end{lemma}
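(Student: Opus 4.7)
The plan is to verify the claimed unitary equivalence at the level of eigenvectors: starting from $W_{\lambda_1,\lambda_2,\eta,\eps,\Phi,\theta,N}\psi=z\psi$, I would derive the corresponding equation for the putative image $\varphi$, and show it is exactly the eigenequation for $W_{\lambda_2,\lambda_1,-\eps,-\eta,\Phi,0,N}^\top$. Since the map $\psi\mapsto\varphi$ is the composition of a discrete Fourier transform, a site-dependent phase $e^{2\pi in\theta}$, and the fixed $2\times 2$ rotation $U=\frac{1}{\sqrt 2}\bigl[\begin{smallmatrix}1 & i\\ i & 1\end{smallmatrix}\bigr]$, it is unitary on $\bbC^N\otimes\bbC^2$, so the operator identity follows from the eigenstate identity (the eigenvectors of $W$ span $\bbC^N\otimes\bbC^2$ for generic $\theta$, and both sides depend analytically on parameters).

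First, I would write the eigenequation in coordinates via Lemma~\ref{lem:w}, plug in the coin entries from \eqref{eq:coindefApp}, and expand $\cos(2\pi(n\Phi+\theta+i\eps))$ and $\sin(2\pi(n\Phi+\theta+i\eps))$ through Euler's formula. This yields a three-term recurrence in $n$ for the pair $(\psi_n^+,\psi_n^-)$ whose coefficients are linear combinations of the exponentials $e^{\pm 2\pi i(n\Phi+\theta+i\eps)}$ with weights in $\{\lambda_2,i\lambda_2'\}$, together with the non-reciprocal shift factors $\lambda_1 e^{\pm 2\pi\eta}$ and the off-diagonal constant $\lambda_1'$.

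Next, I would apply the inverse discrete Fourier transform $\check\psi^\pm(k)=\frac{1}{\sqrt N}\sum_m e^{2\pi imk}\psi_m^\pm$ evaluated at $k=n\Phi+\xi$. Here $N$-periodicity is clean because $\Phi=n_k/N$ is rational with denominator $N$. Under this transform multiplication by $e^{\pm 2\pi in\Phi}$ becomes translation of the argument by $\mp\Phi$, which converts the coin (whose entries depend on $n\Phi$) into the dual shift; conversely, translation $\psi_m\mapsto\psi_{m\mp 1}$ becomes multiplication by $e^{\mp 2\pi ik}$, which converts the original shift into the dual coin. The constant prefactors $e^{\pm 2\pi i\theta}e^{\mp 2\pi\eps}$ produced by the $(\theta+i\eps)$-piece bundle together with $e^{\mp 2\pi ik}$ into the single phase $\lambda_2 e^{\pm 2\pi i(i\theta-\eps)}$ appearing in $S_{\lambda_2,i\theta-\eps,N}^\top$, while the non-reciprocal factor $e^{\pm 2\pi\eta}$ pairs with the dual coin to produce a coin whose complex phase parameter is exactly $-\eta$, i.e.\ $Q_{\lambda_1,-\eta}^\top$.

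Finally, applying the local rotation $U$ together with the outer phase $e^{2\pi in\theta}$ puts the transformed system into the exact form of $Q_{\lambda_1,\Phi,0,-\eta,N}^\top S_{\lambda_2,i\theta-\eps,N}^\top\varphi=z\varphi$; the rotation $U$ is tailored to match the particular $\mathrm{SU}(2)$-parametrization of the coin in \eqref{eq:coindefApp} (it diagonalizes the constant $i\lambda_2'\sigma_3+\lambda_2\sigma_1$ piece up to a convenient sign), so the matching of coefficients is a short bilinear computation.

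The main obstacle will be careful bookkeeping: the factors $e^{\pm 2\pi\eta}$ and the complexification $\theta\mapsto\theta+i\eps$ must be tracked through the Fourier transform and rotation so that the signs land exactly as $(-\eps,-\eta)$ in the dual model, rather than $(\eps,\eta)$ or some mixed combination. A secondary issue is the finite-dimensional boundary: as in Lemma~\ref{lem:W_corner_els}, the $\eta$-dependence of $S_{\lambda_1,\eta,N}$ accumulates on the corners, and one has to check that under the Fourier transform these corners conspire with the periodicity of $\check\psi$ to reproduce the closed-form expression $e^{\pm 2\pi i\vartheta}$ on every site of the dual shift, not only in the bulk. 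Once the signs and boundary are correct, the rest is a mechanical matching.
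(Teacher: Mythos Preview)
Your approach is essentially the same as the paper's: conjugate by the unitary built from the discrete Fourier transform and the local rotation $\tfrac{1}{\sqrt 2}\bigl[\begin{smallmatrix}1 & i\\ i & 1\end{smallmatrix}\bigr]$, so that shift and coin exchange roles. The only organizational difference is that the paper conjugates $S_{\lambda_1,\eta,N}$ and $Q_{\lambda_2,\Phi,\vartheta,N}$ \emph{separately} by the explicit unitary $U(\xi)$, obtaining $U(\xi)S_{\lambda_1,\eta,N}U(\xi)^{-1}=Q_{\lambda_1,\Phi,\xi-i\eta}$ and $U(\xi)Q_{\lambda_2,\Phi,\vartheta,N}U(\xi)^{-1}=S_{\lambda_2,\vartheta,N}^\top$ by direct computation in coordinates; this makes the operator identity immediate and bypasses your eigenvector-plus-analyticity detour entirely. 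Conjugating the factors individually also keeps the bookkeeping of the $\eta$ and $\eps$ signs and of the periodic-boundary corners localized to one factor at a time, which is somewhat cleaner than carrying the full three-term recurrence for $SQ$ through the transform as you propose, but the underlying computation is the same.
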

\begin{proof}
	Let us consider the following transformation
	\begin{align*}
		[U(\xi)\psi]_n^+=\frac1{\sqrt{2N}}\sum_{m=0}^{N-1}e^{2\pi i m(n\Phi+\xi)}(\psi_m^++i\psi_m^-),\qquad
		[U(\xi)\psi]_n^-=\frac1{\sqrt{2N}}\sum_{m=0}^{N-1}e^{2\pi i m(n\Phi+\xi)}(i\psi_m^++\psi_m^-).
	\end{align*}
	It is straightforward to verify that $U(\xi)$ is unitary and that the action of its inverse is given by
	\begin{align*}
		[U(\xi)^{-1}\psi]_n^+=\frac1{\sqrt{2N}}\sum_{m=0}^{N-1}e^{-2\pi i n(m\Phi+\xi)}(\psi_m^+-i\psi_m^-),\qquad
		[U(\xi)^{-1}\psi]_n^-=\frac1{\sqrt{2N}}\sum_{m=0}^{N-1}e^{-2\pi i n(m\Phi+\xi)}(-i\psi_m^++\psi_m^-).
	\end{align*}
	Combining this with the action of the finite-size shift from \eqref{eq:shift_def_finite} one verifies by direct calculation that for $\xi \in \frac1N \bbZ$:
	\begin{align*}
		[U(\xi)S_{\lambda_1,\eta,N}U(\xi)^{-1}\psi]_n^+
				&=\frac1{2{N}}\Big(\lambda_1e^{2\pi\eta}\sum_{m=0}^{N-1}\sum_{\ell=0}^{N-1}e^{2\pi i m(n\Phi+\xi)}e^{-2\pi i((m-1)\bmod N)(\ell\Phi+\xi)}[\psi_\ell^+-i\psi_\ell^-]\\
				&\qquad-\lambda_1'\sum_{m=0}^{N-1}\sum_{\ell=0}^{N-1}e^{2\pi i m(n-\ell)\Phi}[-i\psi_\ell^++\psi_\ell^--i\psi_\ell^+-\psi_\ell^-]\\
				&\qquad+i\lambda_1e^{-2\pi\eta}\sum_{m=0}^{N-1}\sum_{\ell=0}^{N-1}e^{2\pi i m(n\Phi+\xi)}e^{-2\pi i((m+1)\bmod N)(\ell\Phi+\xi)}[-i\psi_\ell^++\psi_\ell^-]\Big)\\
				&=\frac1{2{N}}\Big(\lambda_1e^{2\pi\eta}\sum_{m=0}^{N-1}\sum_{\ell=0}^{N-1}e^{2\pi i m(n\Phi+\xi)}e^{-2\pi i(m-1)(\ell\Phi+\xi)}[\psi_\ell^+-i\psi_\ell^-]\\
				&\qquad+\lambda_1'2i N\psi_n^+\\
				&\qquad+i\lambda_1e^{-2\pi\eta}\sum_{m=0}^{N-1}\sum_{\ell=0}^{N-1}e^{2\pi i m(n\Phi+\xi)}e^{-2\pi i(m+1)(\ell\Phi+\xi)}[-i\psi_\ell^++\psi_\ell^-]\Big)\\
				&=\frac1{2{N}}\Big(\sum_{m=0}^{N-1}\sum_{\ell=0}^{N-1}e^{2\pi i m(n\Phi+\xi)}e^{-2\pi im(\ell\Phi+\xi)}\Big[e^{2\pi i(\ell\Phi+\xi)}\lambda_1e^{2\pi\eta}[\psi_\ell^+-i\psi_\ell^-]\\
				&\qquad+e^{-2\pi i(\ell\Phi+\xi)}\lambda_1e^{-2\pi\eta}[\psi_\ell^++i\psi_\ell^-]\Big]+\lambda_1'2i N \psi_n^+\Big)\\
				&=\Big[\lambda_1\cos(2\pi(n\Phi+\xi-i\eta))+i\lambda_1'\Big]\psi_n^++\lambda_1\sin(2\pi(n\Phi+\xi-i\eta))\psi_n^-,
	\end{align*}
	where one has to be careful with the corner elements due to the modular arithmetic in the definition of the finite-size shift.
	Similarly, we obtain
	\begin{equation*}
		[U(\xi)S_{\lambda_1,\eta,N}U(\xi)^{-1}\psi]_n^-=-\lambda_1\sin(2\pi(n\Phi+\xi-i\eta))\psi_n^++\Big[\lambda_1\cos(2\pi(n\Phi+\xi-i\eta))-i\lambda_1'\Big]\psi_n^-.
	\end{equation*}
	Thus, for every $\xi\in\frac1N\bbZ$ we have that $U(\xi)S_{\lambda_1,\eta,N}U(\xi)^{-1}=Q_{\lambda_1,\Phi,\xi-i\eta}^\top$. 
	\smallskip
	
	Applying the same steps to the finite-size coin we get
	\begin{align*}
		[U(\xi)Q_{\lambda_2,\Phi,\vartheta,N}U(\xi)^{-1}\psi]_n^+&=\frac1{2{N}}\sum_{m=0}^{N-1}e^{2\pi i m(n\Phi+\xi)}\sum_{\ell=0}^{N-1}e^{-2\pi i m(\ell\Phi+\xi)} \bigg[ \Big((\lambda_2\cos(2\pi(\Phi m+\vartheta))+i\lambda_2')[\psi_\ell^+-i\psi_\ell^-]\\
		&\qquad-\lambda_2\sin(2\pi(n\Phi+\vartheta))[-i\psi_\ell^++\psi_\ell^-]\Big)\\
		&\qquad+i\Big(\lambda_2\sin(2\pi(n\Phi+\vartheta))[\psi_\ell^+-i\psi_\ell^-]\\
		&\qquad+(\lambda_2\cos(2\pi(\Phi m+\vartheta))-i\lambda_2')[-i\psi_\ell^++\psi_\ell^-]\Big) \bigg]\\
		&=\frac1{2{N}}\sum_{m=0}^{N-1}e^{2\pi i m(n\Phi+\xi)}\sum_{\ell=0}^{N-1}e^{-2\pi i m(\ell\Phi+\xi)}\Big((2\lambda_2\cos(2\pi(\Phi m+\vartheta))\psi_\ell^++2\lambda_2'\psi_\ell^-\\
		&\qquad+2\lambda_2\sin(2\pi(n\Phi+\vartheta))i\psi_\ell^+\Big)\\
		&=\frac1{2{N}}\sum_{m=0}^{N-1}e^{2\pi i m(n\Phi+\xi)}\sum_{\ell=0}^{N-1}e^{-2\pi i m(\ell\Phi+\xi)}\Big(2\lambda_2e^{2\pi i(\Phi m+\vartheta)}\psi_\ell^++2\lambda_2'\psi_\ell^-\Big)\\
		&=\frac1{{N}}\Big(\lambda_2\sum_{m=0}^{N-1}\sum_{\ell=0}^{N-1}e^{2\pi i (mn\Phi-m(\ell-1)\Phi+\vartheta)}\psi_\ell^++\lambda_2'N\psi_n^-\Big)\\
		&=\frac1{{N}}\Big(\lambda_2\sum_{m=0}^{N-1}\sum_{\ell=0}^{N-1}e^{2\pi i (mn\Phi-m\ell\Phi+\vartheta)}\psi_{(\ell+1)\bmod N}^++\lambda_2'N\psi_n^-\Big)\\
		&=\lambda_2e^{2\pi i \vartheta}\psi_{(n+1)\bmod N}^++\lambda_2'\psi_n^-.
	\end{align*}
	Similarly,
	\begin{align*}
		[U(\xi)Q_{\lambda_2,\Phi,\vartheta,N}U(\xi)^{-1}\psi]_n^-	&=\lambda_2e^{-2\pi i\vartheta}\psi_{(n-1)\bmod N}^--\lambda_2'\psi_n^+.
	\end{align*}
	Thus, we have that $U(\xi)Q_{\lambda_2,\Phi,\vartheta,N}U(\xi)^{-1}=S_{\lambda_2,\Phi,\vartheta,N}^\top$ for every $\xi\in\frac1N\bbZ$. The second statement follows from a direct computation.
\end{proof}
	
\section{The winding number formula}
\label{app:winding}

In this section, we explain how the winding number defined by
\begin{equation*}
	\wind_\eps(z)=\lim_{N\to\infty}\frac1N\frac1{2\pi i}\int_0^1 d\theta\:\partial_\theta\log\det(W_N(\theta+i\eps)-z\idty)
\end{equation*}
detects the first phase transition in the reciprocal setting. Concretely, we show that
\begin{prop}\label{prop:wind}
	Consider the PUAMO $W_{\lambda_1,\lambda_2,\eta=0,\eps}$ in the reciprocal, $\PT$-symmetric setting. Then, for $|\eps|<L^\aubrydual(z)/(2\pi)$, $\wind_\eps(z)=0$ for all $z$ in a gap of the spectrum of the UAMO $W_{\lambda_1,\lambda_2}$. Moreover, for $\eps<-L^\aubrydual(z)/(2\pi)$ and $\eps>L^\aubrydual(z)/(2\pi)$ we have $\wind_\eps(z)=1$ and $\wind_\eps(z)=-1$, respectively, for some $z$ in a gap of the spectrum of $W_{\lambda_1,\lambda_2}$.
\end{prop}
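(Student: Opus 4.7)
The plan is to reduce the assertions to a direct computation of the Avila acceleration of the complexified Lyapunov exponent via the identification
\begin{equation*}
	\wind_\eps(z) = -\frac{1}{2\pi}\,\partial_\eps L_\eps(z),
\end{equation*}
which is a consequence of the multiplicative Jensen formula of \cite{geMultiplicativeJensenFormula2023} and is valid while the transfer matrix cocycle is analytic in $\vartheta$, i.e.\ for $|\eps|<\eps_0$. Since $L^\aubrydual(z)/(2\pi)<\eps_0$ in the subcritical regime (compare \eqref{eq:lyap_res} and Figure~\ref{fig:lyap_exp_UAMO}), every parameter range appearing in the proposition falls in the region where this formula applies.

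For the first assertion $\wind_\eps(z)=0$ when $|\eps|<L^\aubrydual(z)/(2\pi)$, I would first invoke the constancy-of-spectrum result proven earlier in the main text: if $z\in\partial\bbD$ lies in a gap of the UAMO, then $z$ stays in the resolvent set of $W_{\lambda_1,\lambda_2,0,\eps'}$ for all $|\eps'|\leq|\eps|$. Equivalently, the cocycle $A_z(\,\cdot\,+i\eps')$ remains uniformly hyperbolic on the whole interval. Uniform hyperbolicity makes $\eps'\mapsto L_{\eps'}(z)$ real-analytic, and Avila's quantization $\omega\in\bbZ$ then forces this analytic function to be affine on $(-L^\aubrydual/(2\pi),L^\aubrydual/(2\pi))$. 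The $\PT$-symmetry in the reciprocal setting $\eta=0$ implies the evenness $L_{\eps'}(z)=L_{-\eps'}(z)$, so the affine function is in fact constant, i.e., $\omega=0$ and hence $\wind_\eps(z)=0$.

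For the second assertion I would pick $z\in\partial\bbD$ in a gap of the UAMO whose threshold for entering the complexified spectrum is exactly $L^\aubrydual(z)/(2\pi)$ (such $z$ exist by the Cantor gap structure of the spectrum \cite{CFO1,UAMO_Dry_Martinis}) and appeal to the explicit Lyapunov formula \eqref{eq:lyap_res}, extended by analyticity to the uniform hyperbolicity region. Past the threshold, the convex piecewise-linear function $L_\eps(z)$ acquires the additional $2\pi|\eps|$ slope, so for $L^\aubrydual(z)/(2\pi)<\eps<\eps_0$ one reads off $\omega=+1$ and hence $\wind_\eps(z)=-1$; the case $\eps<-L^\aubrydual(z)/(2\pi)$ follows by the $\eps\mapsto-\eps$ symmetry and gives $\wind_\eps(z)=+1$.

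The hardest part will be rigorously establishing the winding-acceleration identity in the present pseudo-unitary, CMV-like setting, since \cite{geMultiplicativeJensenFormula2023} is phrased for Schr\"odinger cocycles. This boils down to proving a walk analogue of the Thouless formula
\begin{equation*}
	\lim_{N\to\infty}\frac{1}{N}\int_0^1\log\bigl|\det(W_N(\theta+i\eps)-z\idty)\bigr|\,d\theta = L_\eps(z)+C,
\end{equation*}
uniformly enough in $\theta$ to differentiate under the integral. The transfer matrices of Appendix~\ref{sec:tmsAndLE} together with the finite-dimensional skin transformation of Lemma~\ref{lem:W_corner_els} supply the needed $N$-uniform bounds on the corner entries of $W_N$, while analytic continuation in $\vartheta$ controls the $\eps$-dependence; the complete technical argument is lengthy and would be carried out in a companion mathematical paper.
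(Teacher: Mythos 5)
Your route is genuinely different from the paper's. The paper does not pass through the acceleration identity at all: it computes $\wind_\eps(z)$ directly from the finite-volume determinant $f_N(\theta+i\eps)=\det(W_N(\theta+i\eps)-z\idty)$. The key structural trick is to apply the finite-dimensional Aubry duality (Lemma \ref{lem:Aubry_dual_fin}) followed by the skin transformation (Lemma \ref{lem:W_corner_els}), after which the entire $\theta$-dependence of $\widetilde W_N$ sits in two corner entries carrying $e^{\pm 2\pi iN(\theta+i\eps)}$. Expanding the determinant via the CMV-type $\mathcal{L}\mathcal{M}$ factorization then splits $f_N$ into a $\theta$-independent piece $\det(A_z)$ and a single oscillating term; a Thouless formula for the dual model (Lemma \ref{lem:det_Az}) shows which of the two dominates as $N\to\infty$, and the winding integral evaluates to $0$ or $\mp1$ accordingly. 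Your approach instead tries to import the winding--acceleration correspondence of Ge--Jitomirskaya--Zhao and read the answer off the explicit Lyapunov exponent. That would be an elegant shortcut, but it has two problems.

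First, the identity $\wind_\eps(z)=-\partial_\eps L_\eps(z)/(2\pi)$ is precisely the content that needs proving in this pseudo-unitary CMV-like setting, and your sketch does not get there: a Thouless formula controls $\lim_N N^{-1}\int\log|f_N|\,d\theta$, whereas the winding number is $\lim_N N^{-1}(2\pi i)^{-1}\int \partial_\theta\log f_N\,d\theta$, i.e.\ the phase winding. Passing from the modulus to the argument is the multiplicative Jensen formula itself (a harmonicity/Cauchy--Riemann argument in the strip), not a consequence of ``differentiating under the integral.'' So the crux is deferred rather than established, while the paper's explicit corner-term computation circumvents it entirely.

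Second, and more seriously, the acceleration route gives the \emph{wrong} answer on part of the asserted range. The proposition claims $\wind_\eps(z)=-1$ for all $\eps>L^\aubrydual(z)/(2\pi)$, with no upper bound, so the range includes $\eps\geq\eps_0$. There the correspondence with the acceleration demonstrably fails: by \eqref{eq:lyap_res} the slope of $\eps\mapsto L_\eps(z)$ drops back to $0$ past $\eps_0$ (this is the second phase transition), while $|\wind_\eps(z)|$ remains $1$ --- the main text states this explicitly. Your justification that ``every parameter range appearing in the proposition falls in the region where this formula applies'' because $L^\aubrydual(z)/(2\pi)<\eps_0$ only covers the lower endpoint of the interval, not the whole unbounded range. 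The paper's direct determinant asymptotics, by contrast, are insensitive to $\eps_0$ and yield $\wind_\eps(z)=-1$ uniformly for $\eps>L^\aubrydual(z)/(2\pi)$. Your argument for the region $|\eps|<L^\aubrydual(z)/(2\pi)$ (uniform hyperbolicity plus quantization plus evenness forcing $\omega=0$) is fine as far as it goes, but it still rests on the unproven winding--acceleration identity.
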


To begin the proof, pick $z\in\partial\bbD$ not in the spectrum of the UAMO $W_{\lambda_1,\lambda_2}$ (which is a Cantor set \cite{CFO1}) and observe that
\begin{align}
	\wind_\eps(z)
	=\lim_{N\to\infty}\frac1N\frac1{2\pi i}\int_0^1 \partial_\theta\log f_N(\theta + i\eps) \, d\theta
	=\lim_{N\to\infty}\frac1N\frac1{2\pi i}\int_0^1 \frac{\partial_\theta f_N(\theta+i\eps)}{f_N(\theta+i\eps)} \, d\theta,\label{eq:winding_num_f}
\end{align}
where we abbreviate $f_N(\vartheta)=\det(W_N(\vartheta)-z\idty)$. Here, $W_N = W_{\lambda_1, \lambda_2, \eta,\eps,\Phi,\theta,N}$ is the finite-size approximation to $W_{\lambda_1, \lambda_2, \eta,\eps,\Phi,\theta}$ obtained from the rational approximation $\Phi_k=\nr_k/\dr_k$ to $\Phi$ in terms of continued fractions with $\dr_k=N$. The matrix representation of $W_N$ takes the form
\begin{equation}\label{eq:WN}
	\begin{aligned}
	&W_N=\\
	&\quad\begin{bmatrix}
			-\lambda_1' q_0^{21} & -\lambda_1' q_0^{22} &&&&&\lambda_1 e^{2\pi \eta} q_{N-1}^{11} &  \lambda_1 e^{2\pi \eta} q_{N-1}^{12}\\
			\lambda_1' q_0^{11} & \lambda_1' q_0^{12} &  \lambda_1 e^{-2\pi \eta} q_1^{21} &  \lambda_1 e^{-2\pi \eta} q_1^{22} \\
			\lambda_1 e^{2\pi \eta}q_0^{11} &  \lambda_1 e^{2\pi \eta} q_0^{12} & -\lambda_1' q_1^{21} &  -\lambda_1' q_1^{22} \\
			&&  \lambda_1' q_1^{11} &  \lambda_1' q_1^{12} \\
			&&  \lambda_1 e^{2\pi \eta} q_1^{11} &  \lambda_1 e^{2\pi \eta} q_1^{12} \\
			&&& \ddots & \ddots & \ddots\\
			&&&&&&\lambda_1 e^{-2\pi \eta} q_{N-1}^{21} &  \lambda_1 e^{-2\pi \eta} q_{N-1}^{22}\\
			&&&&&&-\lambda_1' q_{N-1}^{21} & -\lambda_1' q_{N-1}^{22}\\
			\lambda_1 e^{-2\pi \eta} q_0^{21} &  \lambda_1 e^{-2\pi \eta} q_0^{22} &&&&&\lambda_1' q_{N-1}^{11} & \lambda_1' q_{N-1}^{12}
		\end{bmatrix}.
	\end{aligned}
\end{equation}
Applying to this matrix with $\eta=0$ first the finite-dimensional Aubry duality $U(\xi)$ from Lemma \ref{lem:Aubry_dual_fin} with $\xi=0$ produces $Q_{\lambda_1,\Phi,\xi=0,0,N}^\top S_{\lambda_2,i\theta-\eps,N}^\top$ with non-reciprocal hopping parameter $-\eps$. Then, applying the finite-dimensional skin transformation from Lemma \ref{lem:W_corner_els}, we find that $W_{N}$ is similar to $\widetilde W_N=\widetilde W_{\lambda_1,\lambda_2,\eta,0,\Phi,\theta,N}$ given by
\begin{equation}\label{eq:WN_tilde}
	\begin{aligned}
		&\widetilde W_N=\\
	&\quad\begin{bmatrix}
		-\lambda_2' \widetilde{q}_0^{21} & \lambda_2' \widetilde{q}_0^{11} &\lambda_2 \widetilde{q}_0^{11}&&&& &  e^{-2\pi iN(\theta+i\eps)}\lambda_2  \widetilde{q}_0^{21}\\
		-\lambda_2' \widetilde{q}_0^{22}& \lambda_2' \widetilde{q}_0^{12} & \lambda_2  \widetilde{q}_0^{12}  &  &&&&e^{-2\pi iN(\theta+i\eps)}\lambda_2  \widetilde{q}_0^{22} \\
		& \lambda_2  \widetilde{q}_2^{21}  & -\lambda_2' \widetilde{q}_2^{21} & \lambda_2' \widetilde{q}_2^{11} & \lambda  \widetilde{q}_2^{11} \\
		&\lambda_2  \widetilde{q}_2^{22}&  -\lambda_2' \widetilde{q}_2^{22} &  \lambda_2' \widetilde{q}_2^{12} & \lambda_2  \widetilde{q}_2^{12} \\
		&&   &   \\
		&&& \ddots & \ddots & \ddots\\
		&&&&&& &  \\
		e^{2\pi iN(\theta+i\eps)}\lambda_2  \widetilde{q}_{N-1}^{11}&&&&&\lambda_2  \widetilde{q}_{N-1}^{21}&-\lambda_2' \widetilde{q}_{N-1}^{21} & \lambda_2' \widetilde{q}_{N-1}^{11}\\
		e^{2\pi iN(\theta+i\eps)}\lambda_2  \widetilde{q}_{N-1}^{12}&   &&&&\lambda_2  \widetilde{q}_{N-1}^{22}& -\lambda_2' \widetilde{q}_{N-1}^{22}& \lambda_2' \widetilde{q}_{N-1}^{12},
	\end{bmatrix}
	\end{aligned}
\end{equation}
where $\widetilde{q}_n^{11}=\lambda_1\cos(2\pi\Phi n)+i\lambda_1'=\overline{\widetilde{q}_n^{22}}$ and $\widetilde{q}_n^{12}=-\lambda_1\sin(2\pi\Phi n)=-\widetilde{q}_n^{12}$.

This similarity of $W_N$ and $\widetilde W_N$ immediately implies that 
\begin{equation*}
	f_N(\theta+i\eps)=\det(W_N(\theta+i\eps)-z\idty)=\det(\widetilde W_N(\theta+i\eps)-z\idty).
\end{equation*}
To calculate this determinant, we use a trick from the theory of CMV matrices \cite{simonCMVMatricesFive2007}. Each split-step walk $W$ is gauge-equivalent to a CMV matrix \cite{CFO1,CFLOZ}, and can hence be factorized as
\begin{equation*}
	W=(S_\lambda\Sigma_1)(\Sigma_1Q)=:\mathcal L\mathcal M,
\end{equation*}
where $\Sigma_1=\bigoplus_{n\in\bbZ}\sigma_1$ is block-diagonal with respect to the cell structure. Thus, $\mathcal M=\Sigma_1Q$ is related to the coin, whereas $\mathcal L=S_{\lambda}\Sigma_1$ is block-diagonal on ``tilted'' cells spanned by $\{\delta_{n-1}^-,\delta_n^+\}$. In the current setting, the blocks of $\mathcal L$ and $\mathcal M$ are thus given by
\begin{equation}\label{eq:LMblocks}
	\mathcal L_n=\begin{bmatrix}\lambda_1'&\lambda_1\\\lambda_1&-\lambda_1'\end{bmatrix},\qquad \mathcal M_n=\begin{bmatrix}\lambda_2\sin(2\pi(\Phi n+\theta))&\lambda_2\cos(2\pi(\Phi n+\theta))-i\lambda_2'\\\lambda_2\cos(2\pi(\Phi n+\theta))+i\lambda_2'&-\lambda_2\sin(2\pi(\Phi n+\theta))\end{bmatrix}.
\end{equation} 
Similarly, we can factorize the Aubry dual as 
\begin{equation*}
	W^\aubrydual=Q^\top S^\top=(\Sigma_1Q)^\top(S_\lambda\Sigma_1)^\top=(Q^\top\Sigma_1)(\Sigma_1S_\lambda^\top)=:\mathcal M^\top \mathcal L^\top.
\end{equation*}

This factorization greatly simplifies the calculation of the characteristic polynomial of the finite-dimensional matrix $W_N$ from \eqref{eq:WN} as well as $\widetilde W_N$: we observe that \begin{equation*}
	f_N(\theta+i\eps)=\det(W_N-z\idty)=\det\left(\mathcal L_N-z\mathcal M_N^{-1}\right)\det(\mathcal M_N)=(-1)^{N}\det\left(\mathcal L_N-z\mathcal M_N^{-1}\right),
\end{equation*} 
where $\mathcal L_N$ and $\mathcal M_N$ are specified in terms of the blocks \eqref{eq:LMblocks} with appropriate boundary conditions. The advantage is that now we reduced the problem to calculating the determinant of $\mathcal L_N-z\mathcal M_N^{-1}$, which up to the corner elements is a tridiagonal matrix, i.e.,
\begin{equation}
	\begin{aligned}
	&\mathcal L_N-z\mathcal M_N^{-1}=\\
	&\qquad\begin{bmatrix}
		-\lambda_1'+zq_0^{21}	& zq_0^{22}&&&&&\lambda_1\\
		zq_0^{11}&-zq_0^{12}+\lambda_1'&\lambda_1\\
		&\lambda_1&-\lambda_1'+zq_2^{21}&zq_2^{22}&\\
		&&zq_2^{11}&-zq_2^{12}+\lambda_1'&\lambda_1&\\
		&&&\ddots&\ddots&\ddots&\\
		&&&&&-\lambda_1'+zq_{N-1}^{21}&zq_{N-1}^{22}\\
		\lambda_1&&&&&zq_{N-1}^{11}&\lambda_1'-zq_{N-1}^{12}
	\end{bmatrix}.
	\end{aligned}
\end{equation}
Similarly, writing $\widetilde W_N=:\widetilde{\mathcal M}_N^\top\widetilde{\mathcal L}_N^\top$ for $\widetilde W_N$ as in \eqref{eq:WN_tilde} with appropriate boundary conditions gives
\begin{equation*}
	\begin{aligned}
	&\widetilde{\mathcal L}_N^\top-z(\widetilde{\mathcal M}_N^\top)^{-1}=\\
	&\qquad\begin{bmatrix}
		-\lambda_2'+z\widetilde q_0^{21}	& z\widetilde q_0^{22}&&&&&e^{-2\pi iN(\theta+i\eps)}\lambda_2\\
		z\widetilde q_0^{11}&-z\widetilde q_0^{12}+\lambda_2'&\lambda_2\\
		&\lambda_2&-\lambda_2'+zq_2^{21}&zq_2^{22}&\\
		&&z\widetilde q_2^{11}&-z\widetilde q_2^{12}+\lambda_2'&\lambda_2&\\
		&&&\ddots&\ddots&\ddots&\\
		&&&&&-\lambda_2'+z\widetilde q_{N-1}^{21}&z\widetilde q_{N-1}^{22}\\
		e^{2\pi iN(\theta+i\eps)}\lambda_2&&&&&z\widetilde q_{N-1}^{11}&\lambda_2'-z\widetilde q_{N-1}^{12}
	\end{bmatrix},
	\end{aligned}
\end{equation*}
and hence
\begin{equation*}
	f_N(\theta+i\eps)=\det\left(\widetilde{\mathcal L}_N^\top-z\left(\widetilde{\mathcal M}_N^\top\right)^{-1}\right)\det\left(\widetilde{\mathcal M}_N^\top\right)=(-1)^{N}\det\left(\widetilde{\mathcal L}_N^\top-z\left(\widetilde{\mathcal M}_N^\top\right)^{-1}\right).
\end{equation*}

Let us assume that $\eps>0$. Since $e^{-2\pi N\eps}\to0$ as $N\to\infty$, we ignore the contribution from this term to $f_N$ and hence obtain
\begin{equation}\label{eq:wind_integrand_0}
	f_N(\theta+i\eps)=(-1)^N\Big[(-1)^{2N+1}e^{-2\pi iN(\theta+i\eps)}\lambda_2^{N}z^{N}\prod_{k=0}^{N-1}\widetilde q^{11}_k+\det(G_{z,N})\Big] +\calO\left(e^{-2\pi N\eps}\right).
\end{equation}
Denote by $\chi_\Lambda$ the projection onto $\Lambda\subset\bbZ$, and denote by $\calL\vert_{[0,N-1]}:=\chi_{[0,N-1]}\calL\chi_{[0,N-1]}$ the projection of $\calL$ onto the subset of cells at $0,\dots,N-1$. This $\mathcal L\vert_{[0,N-1]}$ is not unitary but differs from $\mathcal L_N$ by the boundary terms. In contrast, $\calM\vert_{[0,N-1]}=\calM_N$ since $\calM$ is block-diagonal with respect to the cells. With this notation, the matrix $G_{z,N}$ in \eqref{eq:wind_integrand_0} is given by
\begin{align*}
	G_{z,N}&=\widetilde{\mathcal L}^\top|_{[0,N-1]}-z\left(\widetilde{\mathcal M}_N^\top\right)^{-1}\\
	&=
	\begin{bmatrix}
		-\lambda_2'+z\widetilde q_0^{21}	& z\widetilde q_0^{22}&&&&&\\
		z\widetilde q_0^{11}&-z\widetilde q_0^{12}+\lambda_2'&\lambda_2\\
		&\lambda_2&-\lambda_2'+z\widetilde q_2^{21}&&\\
		&&\ddots&\ddots&\ddots&\\
		&&&&&-\lambda_2'+z\widetilde q_{N-1}^{21}&z\widetilde q_{N-1}^{22}\\
		&&&&&z\widetilde q_{N-1}^{11}&\lambda_2'-z\widetilde q_{N-1}^{12}
	\end{bmatrix}
\end{align*}
and describes the dual model with $\theta=0$ and open boundary conditions. Since $\widetilde{\mathcal M}_N$ is unitary we find that
\begin{equation*}
	\det(G_{z,N})=(-1)^N\det\left(\widetilde{\mathcal M}_N^\top\widetilde{\mathcal L}^\top\vert_{[0,N-1]}-z\idty\right)=(-1)^N\det\left(\widetilde{W}\vert_{[0,N-1]}-z\idty\right).
\end{equation*}
Since $G_{z,N}$ does not depend on $\theta$, plugging \eqref{eq:wind_integrand_0} into the integrand in \eqref{eq:winding_num_f} we obtain
\begin{align}
	\frac1{2\pi i}\frac1N\frac{\partial_\theta f_N(\theta)}{f_N(\theta)}
	&=\frac1{2\pi i}\frac1N\frac{-(-1)^{2N+1}2\pi i Ne^{-2\pi iN(\theta+i\eps)}\lambda_2^{N}z^{N}\prod_{k=0}^{N-1}\widetilde q^{11}_k}{(-1)^{2N+1}e^{-2\pi iN(\theta+i\eps)}\lambda_2^{N}z^{N}\prod_{k=0}^{N-1}\widetilde q^{11}_k+(-1)^N\det(G_{z,N})}\nonumber\\
	&=\frac{e^{-2\pi iN(\theta+i\eps)}\lambda_2^{N}z^{N}\prod_{k=0}^{N-1}\widetilde q^{11}_k}{-e^{-2\pi iN(\theta+i\eps)}\lambda_2^{N}z^{N}\prod_{k=0}^{N-1}\widetilde q^{11}_k+(-1)^N\det(G_{z,N})}.\label{eq:wind_integrand_1}
\end{align}

To calculate the limit $N\to\infty$ of the absolute value of this expression, we need to estimate the behaviour of $|\det(G_{z,N})|$ in the limit. To this end, we note that
\begin{lemma}\label{lem:det_Az}
	Let $z\in\partial\bbD$ lie in a gap of the spectrum of $W_{\lambda_1,\lambda_2,0,0}$. Then in the limit $N\to\infty$ we have that
	\begin{equation}\label{eq:char_poly_thouless}
		\lim_{N\to\infty}\frac1{N}\log\left|\det\left(\widetilde{W}\vert_{[0,N-1]}-z\right)\right|=L_{\lambda_1,\lambda_2}^\aubrydual(z)-\log\left|\frac2{\lambda_2(1+\lambda_1')}\right|+\log|z|,
	\end{equation}
	where $L_{\lambda_1,\lambda_2}^\aubrydual(z)=L_{\lambda_1,\lambda_2}^\aubrydual(z)$ denotes the Lyapunov exponent of the Aubry-dual UAMO in the reciprocal setting $\eps=0$.
\end{lemma}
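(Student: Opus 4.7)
The approach is a Thouless-type identity for the truncated dual walk, executed through its transfer matrix cocycle. Since $\widetilde{\mathcal M}_N$ is unitary, $|\det A_z|=|\det(\widetilde W|_{[0,N-1]}-z)|$, and $A_z$ is block tri-diagonal with $2\times2$ blocks in the cell variable. Cofactor expansion along the last block column gives a two-step linear recursion for the leading principal block-minors $D_k$, which under a standard CMV-style change of basis coincides with the dual transfer matrix cocycle $T^\aubrydual_{n,z}$ obtained from \eqref{eq:TetaT} with $\lambda_1\leftrightarrow\lambda_2$ and $\eta=0$. Up to explicit deterministic factors coming from the off-diagonal $\lambda_2$-entries of $A_z$ and from the boundary conditions of the truncation, $\det A_z$ therefore equals a fixed matrix element of $T^\aubrydual_{N-1,z}\cdots T^\aubrydual_{0,z}$.

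With this reduction, write $T^\aubrydual_{n,z}=r_n(z)\,\widehat T^\aubrydual_{n,z}$, where $r_n(z)$ collects the scalar denominator $(\lambda_1\cos(2\pi\Phi n)-i\lambda_1')^{-1}$ together with the $z$-dependent factor appearing in \eqref{eq:TetaT}, and $\widehat T^\aubrydual_{n,z}$ is the regularized analytic cocycle matrix. By \cite[Lemma~4.7]{CFO1} applied to the dual, the Lyapunov exponent of the regularized cocycle at $\eps=0$ is exactly $L^\aubrydual(z)$, so Furstenberg--Kesten combined with unique ergodicity of the rotation $\theta\mapsto\theta+\Phi$ yields
\begin{equation*}
\tfrac1N\log\bigl\|\widehat T^\aubrydual_{N-1,z}\cdots\widehat T^\aubrydual_{0,z}\bigr\|\longrightarrow L^\aubrydual(z).
\end{equation*}

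The scalar prefactor contribution is a Birkhoff average along the irrational rotation, which by unique ergodicity and the elementary identity
\begin{equation*}
\int_0^1\log\bigl|\lambda\cos(2\pi t)\pm i\lambda'\bigr|\,dt=\log\tfrac{1+\lambda'}{2}
\end{equation*}
(obtained from $|\lambda\cos(2\pi t)\pm i\lambda'|^2=1-\lambda^2\sin^2(2\pi t)$ together with Jensen's formula) evaluates explicitly in terms of $\lambda_1$. Collecting the regularized cocycle's Lyapunov growth, the averaged prefactor, the explicit $z^N$ factor from \eqref{eq:TetaT}, and the boundary $\lambda_2$-factor from the truncation assembles the right-hand side $L^\aubrydual(z)-\log|2/(\lambda_2(1+\lambda_1'))|+\log|z|$.

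The main obstacle is the familiar subtlety of turning the top-Lyapunov growth rate of the transfer matrix norm into the growth of the specific matrix element singled out by the truncation's boundary conditions. On a spectral gap of $\widetilde W$, however, the cocycle $T^\aubrydual_{n,z}$ is uniformly hyperbolic, so its stable and unstable invariant directions are separated by a positive angle, and a Weyl-disk argument from OPUC/CMV theory ensures that the boundary data picks out a direction generic with respect to this hyperbolic splitting. Making this transversality argument rigorous in the complexified setting is precisely what is deferred to the mathematical companion paper announced after Theorem~\ref{t:2ndtransition}.
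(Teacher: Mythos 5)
Your route is genuinely different from the paper's. The paper does not touch the transfer matrices at all in this lemma: it writes $\frac{1}{N}\log|\det(\widetilde W|_{[0,N-1]}-z)|$ as twice the normalized sum $\frac{1}{2N}\sum_k\log|z_k-z|$ over the $2N$ eigenvalues of the truncation, passes to the limit using convergence of the finite-volume eigenvalue counting measure to the density of states $\dos$ (citing Joye's Proposition 3.1), and then invokes a Thouless formula for GECMV matrices, $L^\aubrydual(z)=2\int_{\partial\bbD}\log|z-z'|\,d\dos(z')+\log|2/(\lambda_2(1+\lambda_1'))|-\log|z|$, adapted from the CMV literature. You instead re-derive the Thouless asymptotics directly: expand $\det A_z$ as a matrix element of the dual transfer-matrix product, split off the scalar prefactors, and evaluate the Birkhoff average $\int_0^1\log|\lambda_1\cos(2\pi t)\pm i\lambda_1'|\,dt=\log\frac{1+\lambda_1'}{2}$ (your identity is correct and is exactly what underlies the paper's use of \cite[Lemma 4.7]{CFO1} elsewhere). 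Your approach is more self-contained and makes the origin of the constants $\log|2/(\lambda_2(1+\lambda_1'))|$ and $\log|z|$ transparent; the paper's approach outsources precisely those computations to the cited Thouless formula but in exchange never has to confront the relation between the norm of a cocycle product and a specific matrix element.

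That last point is where your sketch is thinner than the paper's proof. The transversality step you defer -- showing that the boundary vectors selected by the open truncation are not asymptotically aligned with the stable direction of the uniformly hyperbolic splitting, uniformly along the orbit -- is the entire analytic content of the determinant-to-Lyapunov-exponent passage, and it is not deferred by the paper in this lemma (what the paper defers after Theorem~\ref{t:2ndtransition} is a different statement about absence of spectrum). To be fair, the paper's route has its own buried technicality: $\log|\,\cdot\,-z|$ is unbounded, and the truncation $\widetilde W|_{[0,N-1]}$ is non-normal, so controlling the contribution of eigenvalues $z_k$ near $z$ requires the input from \cite{joyeDensityStatesThouless2004} rather than mere weak convergence. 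So both arguments rest on a nontrivial external or deferred ingredient; yours simply places it at a different joint. If you fill in the transversality argument (or cite the corresponding OPUC/Weyl-disk result in the uniformly hyperbolic regime), your proof is complete and arguably more informative.
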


\begin{proof}
	Let $\{z_k:k=1,\dots,2N\}$ be the eigenvalues of $\widetilde W\vert_{[0,N-1]}$ and let $z\in\partial\mathbb D$ lie in a gap of the spectrum of $W_{\lambda_1,\lambda_2,0,0}$. The characteristic polynomial of $\widetilde W\vert_{[0,N-1]}$ is a monic polynomial of order $2N$, such that
	\begin{equation*}
		\log\left|\det\left(\widetilde{W}\vert_{[0,N-1]}-z\right)\right|=\log\left|\prod_{k=1}^{2N}(z_k-z)\right|=\sum_{k=1}^{2N}\log|z_k-z|.
	\end{equation*}
	Defining a probability measure $\widetilde\dos_\Lambda$ for $\Lambda\subset\bbZ$ by $|\Lambda|^{-1}\tr(\chi_\Lambda f(\widetilde W) \chi_\Lambda)=:\int_{\partial\bbD} f(z)  \, d\widetilde\dos_\Lambda(z)$ for any continuous function $f$, we obtain in the limit $N\to\infty$ for $f(\zeta)=\log|\zeta-z|$ that \cite[Proposition 3.1]{joyeDensityStatesThouless2004}
	\begin{equation*}
		\frac1{2N}\sum_{k=1}^{2N}\log|z_k-z|\equiv\int_{\partial\bbD} \log|z'-z| \, d\widetilde\dos_{[1,2N]}(z')	\longrightarrow \int_{\partial\bbD} \log|z'-z|  \,  d\dos(z'),
	\end{equation*}
	where $\dos$ is the density of states measure for $\widetilde W$.
	
	Adapting the Thouless formula from \cite{joyeDensityStatesThouless2004,wernerLocalizationRecurrenceQuantum2013a} from CMV to GECMV matrices, we can relate the density of states measure $\dos$ to the Lyapunov exponent of the dual UAMO for $z\in\bbC$ in the following way:
	\begin{equation*}
		L_{\lambda_1,\lambda_2}^\aubrydual(z)=2\int_{\partial\bbD} \log|z-z'| \, d\dos(z')+\log\left|\frac2{\lambda_2(1+\lambda_1')}\right|-\log|z|.
	\end{equation*}
	Thus, in the limit $N\to\infty$,
	\begin{equation*}
		\frac1{2N}\log\left|\det\left(\widetilde{W}\vert_{[0,N-1]}-z\right)\right|\longrightarrow\frac12L_{\lambda_1,\lambda_2}^\aubrydual(z)-\frac12\log\left|\frac2{\lambda_2(1+\lambda_1')}\right|+\frac12\log|z|.
	\end{equation*}
\end{proof}			
		
Plugging \eqref{eq:char_poly_thouless} into \eqref{eq:wind_integrand_1} and using \cite[Lemma 4.7]{CFO1} to see that for $N$ large enough $\prod_{k=0}^{N-1}\widetilde q^{11}_k=\left[\frac{1+\lambda_1'}{2}\right]^N$ we obtain
\begin{align*}
	\frac1{2\pi i}\frac1N\frac{\partial_\theta f_N(\theta)}{f_N(\theta)}
	&=\frac{e^{-2\pi iN(\theta+i\eps)}\lambda_2^{N}z^{N}\prod_{k=0}^{N-1}\widetilde q^{11}_k}{-e^{-2\pi iN(\theta+i\eps)}\lambda_2^{N}z^{N}\prod_{k=0}^{N-1}\widetilde q^{11}_k+(-1)^{N}e^{NL_{\lambda_1,\lambda_2}^\aubrydual(z)}\left|\frac{\lambda_2(1+\lambda_1')}2\right|^N|z|^N}\\
	&=\frac{e^{-2\pi iN(\theta+i\eps)}}{-e^{-2\pi iN(\theta+i\eps)}+(-1)^{N}e^{NL_{\lambda_1,\lambda_2}^\aubrydual(z)}}	\\
	&=\frac{e^{-2\pi iN(\theta+i\eps)}}{e^{-2\pi iN(\theta+i\eps)}\Big(-1+(-1)^{N}e^{NL_{\lambda_1,\lambda_2}^\aubrydual(z)+2\pi iN(\theta+i\eps)}\Big)}	\\
	&=\frac{1}{-1+(-1)^Ne^{N\left(L_{\lambda_1,\lambda_2}^\aubrydual(z)-2\pi\eps\right)+2\pi iN\theta}}.
\end{align*}
Thus, for $\eps>0$ we have
\begin{equation}\label{eq:wind_app}
	\wind_\eps(z)=\lim_{N\to\infty}\frac1{2\pi i}\frac1N\int_0^1 \frac{\partial_\theta f_N(\theta)}{f_N(\theta)} d\theta=\begin{cases}0&\eps<L^\aubrydual(z)/(2\pi)\\-1&\eps>L^\aubrydual(z)/(2\pi).\end{cases}
\end{equation}	
For $\eps<0$, we have to modify \eqref{eq:wind_integrand_0} accordingly noting that in this case the other corner element vanishes in the large $N$ limit. Following the same steps as above, one sees that the effect of these modifications boil down to a global sign which then proves the assertions of Proposition \ref{prop:wind}.

\end{document}